\newcommand\cL[1]{c_{#1}^{\textnormal{life}}} 
\newcommand\cP[1]{c_{#1}^{\textnormal{platform}}} 
\newcommand\ceff[1]{\cP{#1} + \frac{\lambda_{#1}}{z_{#1}}\left(\cP{#1} - \cL{#1}\right)} 
\newcommand\ceffJ[2]{\cP{#1} + \frac{\lambda_{#1}}{z_{#1}(#2)}\left(\cP{#1} - \cL{#1}\right)}
\newcommand\eps{\epsilon}
\newcommand\argmax[1]{\underset{#1}{\textnormal{argmax }}}
\begin{document}

\title{The Platform Design Problem}
%
%
\author{Christos Papadimitrio$\textnormal{u}^*$\inst{1} \and
Kiran Vodrahall$\textnormal{i}^*$\inst{2} \and
Mihalis Yannakaki$\textnormal{s}^*$\inst{3}}
\authorrunning{Papadimitriou, Vodrahalli, and Yannakakis}
%
\institute{Department of Computer Science, Columbia University, New York, NY; 
\email{christos@columbia.edu}\\
\and
Department of Computer Science, Columbia University, New York, NY; 
\email{kiran.vodrahalli@columbia.edu}\\
\and
Department of Computer Science, Columbia University, New York, NY; 
\email{mihalis@cs.columbia.edu}}
\maketitle              

\begin{abstract}
On-line firms deploy suites of software platforms, where each platform is designed to interact with users during a certain activity, such as browsing, chatting, socializing, emailing, driving, etc.  The economic and incentive structure of this exchange, as well as its algorithmic nature, have not been explored to our knowledge.  We model this interaction as a Stackelberg game between a Designer and one or more Agents.  We model an Agent as a Markov chain whose states are activities; we assume that the Agent's utility is a linear function of the steady-state distribution of this chain. The Designer may design a platform for each of these activities/states; if a platform is adopted by the Agent, the transition probabilities of the Markov chain are affected, and so is the objective of the Agent.  The Designer's utility is a linear function of the steady state probabilities of the accessible states (that is, the ones for which the platform has been adopted), minus the development cost of the platforms. The underlying optimization problem of the Agent --- that is, how to choose the states for which to adopt the platform --- is an MDP. If this MDP has a simple yet plausible structure (the transition probabilities from one state to another only depend on the target state and the recurrent probability of the current state) the Agent's problem can be solved by a greedy algorithm.  The Designer's optimization problem (designing a custom suite for the Agent so as to optimize, through the Agent's optimum reaction, the Designer's revenue), is in general NP-hard to approximate within any finite ratio; however, in the special case, while still NP-hard, has an FPTAS.  These results generalize, under mild additional assumptions, from a single Agent to a distribution of Agents with finite support, as well as to the setting where other Designers have already created platforms, and the Designer must find the best response to the strategies of the other Designers. We discuss other implications of our results and directions of future research. 
\keywords{Theory of the Online Firm  \and Markov Decision Process \and Bi-level Optimization \and Complexity Theory \and Approximation Algorithms \and Stackelberg Equilibrium}
\end{abstract}

\section{Introduction} \label{sec:intro}
In  economics, the creation of wealth happens through markets: environments in which firms employ land, labor, capital, raw materials, and technology to produce new goods for sale, at equilibrium prices, to consumers and other firms. Since all agents in this scenario participate voluntarily, wealth must be created.  Accordingly, markets have been the focus of a tremendous intellectual effort by economists, mathematicians, and, more recently, computer scientists. 

Over the past three decades the global information environment has spawned novel business models seemingly beyond the reach of the extant theory of markets, and which, arguably, account for a large part of present-time wealth creation, chief among them a new kind of software company that can be called {\em platform designer}.  On-line platforms are created with which consumers interact during certain activities: search engines facilitate browsing, social networks host social interactions, movie, music, and game sites provide entertainment, chatting and email apps mediate communication. Shopping platforms, navigation maps, tax preparation sites, and many more platforms bring convenience and therefore value to consumers' lives.  Increasingly during these past two decades, on-line firms have created comprehensive {\em suites} of platforms, covering many such life activities. Platform designers draw much of their revenue through the data that they collect about the users interacting with their platforms, which data they either sell to other firms or use to further fine tune and enhance their own business.  In this paper we point out that, in the case of platform designers, the most elementary aspects of markets, for example the theory of production and consumption, are quite nontrivial. We focus on a restricted case of the problem corresponding to the ``substitutes'' case, having proved that the case with complements (when platforms are allowed to feed into one another) is hopeless.  Note that this reflects the history of the search, in the market context, for conceptually, and implicitly computationally, tractable cases. (Recall the fruitful early work by Arrow and other economists on the identification of classes of markets with good structural properties, such as the gross substitutes case \citep{Arrow59}, and the extensive more recent work in computer science developing algorithms for special cases, like the case of linear utilities \citep{Vazirani07}.)

\paragraph{Our Model and Results} We model the platform design problem as a Stackelberg game (that is, a game where one player goes first and the others react optimally) with two players, a Designer and an Agent (the extension to many Agents is also studied, and the case of many competing Designers is also discussed). Here, the Designer plays first, and the Agent responds.
The Agent is modeled as an ergodic Markov chain on a set of states $\mathcal{A}$, representing the Agent's life activities. 
We assume that the Agent receives a fixed payoff per unit of time spent at each state. 
The Designer has the opportunity to design a platform for each state in $\mathcal{A}$, which the Agent may or may not choose to adopt. There is a one-time cost for the Designer to build a platform for a given state. If the Agent adopts the platform, the transitions of the Agent's life change at that state, and the Agent's utility at that state may increase or decrease as a result of adoption\footnote{One possible reason for diminished utility is the aversion of the Agent to the Designer's access to personal information pertaining to that state.}. In return, the Designer gets to observe the Agent at that state and derives a fixed utility payoff for the fraction of time the Agent spends in that state. We assume that platform revenue is proportional to the time users spend on the platform, which strikes us as a reasonable first approximation.

We note immediately that the Agent's optimization problem, once the Designer has deployed a set $S$ of platforms, is a Markov decision process (MDP), and it follows from MDP theory that the Agent will adopt some of the platforms offered and reject the rest 
and the optimum set of adopted platforms can be computed by linear programming (and other methods).

Now the {\em platform design problem} (PDP) is the following:  Given the Markov chain, all utility coefficients for both the Agent and the Designer, and the development costs of the platforms, choose a set of states $S$ for which to create platforms, so as to maximize the Designer's utility; namely, the utility to the Designer of the Markov chain that results from the optimum response by the Agent to the platforms in $S$, minus the development costs of the platforms in $S$. It is immediate that, since the Designer can anticipate Agent's optimal response, at optimality all platforms in the optimum set $S$ will be adopted.

We show that PDP is NP-hard to approximate within {\em any} finite ratio (Theorem \ref{thm:inapprox}).  The proof of this result is quite instructive, because it relies almost exclusively on the fact that introduced platforms can modify the Markov chain so as to funnel traffic from one platform to the other, and therefore create the stark choices necessary for this level of complexity. The construction has the property that offering a platform in one state can make it more attractive for the Designer to offer a platform also at another state (if the adoption of the platform in the first state increases the transition probability to the second state)  In economic terms, the platforms offered by the Designer can be {\em complementary goods}, and making decisions for such goods tend to be difficult.  

In view of this obstacle, we next turn to a special kind of Markov chain, for which platforms are essentially {\em substitute goods;} generally, substitution is known to lead to better behaving markets.  A Markov chain of this sort, called {\em the flower} (see Figure~\ref{fig:life-platform-chain}), has a number of transition parameters that is linear in $|\mathcal{A}|$.  At each state $i$, the transition leads back to the state with some probability $q_i$, while the rest of the probability $(1-q_i)$ is split among the other states {\em in proportions that are fixed.}  Evidently, this is equivalent to a chain that has an extra ``rest state'' $0$ with $q_0=0$, that is, a purely transitional state (see Figure~\ref{fig:life-platform-chain}).
Adopting a platform now increases or decreases the transition probability of the state to itself, decreasing or increasing, respectively, the transition probability to the other states. We show that, in this case, the MDP optimizing the Agent's objective, given the available platforms, becomes a quasiconcave combinatorial optimization problem with special structure (Lemma~\ref{thm:equiv-ip}), which can be solved by a greedy algorithm (Theorem~\ref{thm:greedy-is-opt}).
The algorithm can be extended to a setting where there are multiple available platforms for each state in $\mathcal{A}$, and the agent can choose to adopt one or none of these options for each state (Theorem~\ref{thm:agent-multi-opt}).

The PDP in the flower specal case is still NP-hard (Theorem~\ref{thm:hardness}), but has a dynamic programming FPTAS if one parameter --- the expected time spent at each state --- is quantized (Theorem~\ref{thm:FPTAS-pdp}). 
The dynamic programming algorithm can be extended, through some further quantization, to the case of many agents --- except that the number of agents is now in the exponent (Theorem~\ref{thm:multi-Agent-PDP}).  Given that the number of agents is likely to be very large, the best way to think of this algorithm is as an algorithm for the case in which one is given a {\em distribution} of agents of finite support --- that is, with a small number of {\em agent types.}
Similarly, essentially the same algorithm can be adapted to the competitive setting, where a Designer enters a field where many Designers have already built existing platforms, and must now decide which platforms to build (Theorem~\ref{thm:multi-Designer-PDP}).


\paragraph{Related Work}
We are not aware of past research on the production and consumption of online platforms. Computational aspects of Stackelberg games between consumers and firms designing or packaging on-line products have been explored to a small degree, see e.g.~\citep{Kleinberg98a,Kleinberg98b}. There has been work on online decision making, where at each round the Designer gets to select from some set of options (e.g., which is the best ad to display to the user of a website) and receives a reward after deployment for that round, as well as additional information about the performance of the other options \citep{CesaBianchi06}; see also~\citep{Frazier14, Mansour15, Roughgarden16, Liu18, Lykouris19}. This line of research is of obvious relevance to the present one, even though our Agent model is far more complex.  More recently, trade-offs in on-line activity by consumers, for example between effectiveness of browsing and privacy, have been discussed \citep{Tsitsiklis18a,Tsitsiklis18b}.  The ways in which on-line firms profit from data has been somewhat explored, see e.g.~\citep{Agarwal19} but not in any manner that can be used in our model; here we consider it a given parameter.

\paragraph{Our Contributions} 
Our main contributions are: the articulation of the Platform Design Problem, the observation that it is profoundly intractable in its generality, the identification of the tractable class of flower Markov chains, roughly corresponding to substitution in markets, the solution of the Agent's and the Designer's problems through the Agent's greedy algorithm and dynamic programming, 
the generalizations of these algorithms to multiple Agents and Designers,
and the many directions for further research opened (see the discussion in Section~\ref{sec:discussion}).  

\section{PDP: Intractability of the General Case}
\label{sec.general_pdp}

The platform design problem (PDP) is a Stackelberg game between a Designer and an Agent\footnote{We later consider the case with multiple Agents and multiple Designers, as well as multiple platforms per state.}. The Agent inhabits a discrete state space with transitions and rewards. The Designer moves first by building, at some fixed cost and for certain states, one platform per state. Each platform, if adopted by the Agent, changes the Agent's transitions and rewards at that state, and also yields to the Designer a reward rate (modeling the Designer's utility from learning about the Agent) per unit of time the Agent spends in the platform for each platform the Agent accepts. 
The Agent adopts platforms to optimize its expected reward in the resulting Markov Decision Process (MDP).  The Designer's goal is to build platforms so that the Agent behaves in a way that optimizes the Designer's total reward. Formally:

\begin{definition}[PDP]\label{def:PDP}
The Agent's environment is an irreducible Markov chain with state space $\mathcal{A}=[n]$ with $n$ states. At each state $i$, the transition probabilities out of $i$ are a vector of probabilities $T_i^{\textnormal{life}}$ and the reward coefficient is a real number $c_i^{\textnormal{life}}$.

The Designer chooses a set $S\subseteq [n]$ of these states for which to build {\em platforms}.  The Designer pays a fixed $\textnormal{cost}_{i}>0$ to build a platform at state $i$, and receives reward rate $d_{i}$ per unit of time the Agent spends at state $i$,  provided the Agent opts in to the platform at state $i$. 

After the Designer's move, the Agent faces a Markov Decision Problem (see \cite{Puterman94} for an introduction to Markov decision theory).  At each state $i\in S$, adoption of the platform will result in the transition probabilities changing to $T_i^{\textnormal{platform}}$ and the reward coefficient changing to $c_i^{\textnormal{platform}}$.  We assume that these changes in the transition probabilities are such that the reachable part of the Markov chain is irreducible\footnote{Irreducibility can be guaranteed by maintaining a cycle of tiny probability around the states; it will never be a problem in our arguments and constructions.}.  

The Agent's {\em optimal decision} in response to the Designer's move $S$ is a set $S'\subseteq S$ of states on which to adopt the platform (recall that in MDPs, it is well known that we can restrict the possible policies, without loss of optimality, to deterministic, Markovian, stationary policies computed by linear programming).  Let $M(S')$ be the Markov chain resulting from adopting the subset $S'$ of the platforms offered by the Designer. 

Coming now back to the Designer's first move, and since the Designer can fully anticipate the Agent's response $S'$ to $S$ and every extra platform has a positive cost, the Designer omits any platforms that would not be adopted --- that is, makes sure that $S=S'$. Among all such sets, the Designer chooses the one that optimizes
the Designer's {\em profit}
\[
\textnormal{profit}(S) :=  \sum_{i \in S} d_{i}\cdot \pi_i(S) - \sum_{i \in S} \textnormal{cost}_{i}
\]
$\pi_i(S)$ denotes the steady state distribution at state $i$ of the Markov $M(S)$. 
\end{definition}

We prove that the PDP in its generality is as severely intractable as any optimization problem can be:  It is NP-hard to approximate {\em within any finite approximation ratio.}

\begin{theorem}\label{thm:inapprox}
It is strongly NP-hard to decide whether the optimum solution to a PDP instance has zero or positive profit for the designer.
\end{theorem}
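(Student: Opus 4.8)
The plan is to reduce, in polynomial time, from a standard strongly NP-hard problem (say 3-SAT) to PDP instances in which the Designer obtains strictly positive profit exactly on the ``yes'' instances, and on the ``no'' instances can do no better than the empty platform set, whose profit is exactly $0$. The instance is built around a ``rest'' state $0$ holding nearly all probability in the life chain, a ``jackpot'' state $g$ with a huge Designer reward $d_g$ (and $d_i=0$ elsewhere), and uniform tiny build costs $\textnormal{cost}_i=\epsilon$. In the life chain $g$ is reachable only with negligible background probability, so every nonempty $S$ has negative profit and the optimum is $0$ unless adoption can route a nonnegligible share of the mass into $g$.

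To route mass into $g$ the Designer must adopt a coordinated family of platforms that, by the funnelling phenomenon, redirect the chain so that a wide channel from $0$ to $g$ opens up. The channel is organized around variable gadgets and clause gadgets: adopting the platforms of a variable gadget commits the corresponding $x_i$ to a single truth value (a key point is that the mass traverses the gadget along one route, so the Designer cannot have $x_i$ be both true and false --- adopting the ``off'' platform is merely wasteful), and the variable gadgets in turn raise the transition probability into the clause gadgets of the literals they satisfy, which is precisely the complementarity the paper highlights. One arranges matters so that the channel carries a $\Theta(1/(n+m))$ fraction of the mass to $g$ exactly when the adopted platforms encode a satisfying assignment, and otherwise leaks the mass back to $0$ before it arrives; the channel transitions are taken deterministic so the mass is not diluted exponentially along a path of length $\Theta(n+m)$, which keeps every numerical parameter polynomially bounded --- essential for \emph{strong} NP-hardness. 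Choosing $\epsilon$ constant and $d_g$ a suitable polynomial (with the background probability below $\epsilon/d_g$) makes a satisfying configuration profitable while leaving every other $S$ --- including ``adopt everything,'' which realizes a non-satisfying assignment --- unprofitable; since an unsatisfiable formula has no consistent assignment making every clause true, and any partial assignment that did would extend to a full one, this covers all the Designer's options.

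The bilevel aspect is made harmless by choosing the Agent's own reward coefficients so that adopting any offered platform is strictly (by a vanishing amount) beneficial to the Agent; then the Agent's unique optimal MDP response to any offered $S$ is to adopt all of $S$, the Designer's problem degenerates to ``choose $S$ maximizing $\textnormal{profit}(S)$,'' and the construction shows even distinguishing profit $0$ from positive profit is NP-hard. The channel's transition changes are the Designer's instrument and need not touch the Agent's reward, so a harmful-to-the-Designer adoption is simply never offered. Irreducibility of the reachable chain is maintained, as in the paper's footnote, with a tiny cycle through all states.

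The main obstacle is realizing the channel under PDP's structural restriction that adopting a platform at a state alters only that state's outgoing transition vector. A clause gadget has to ``detect'' whether one of its literals is on, but the chain is memoryless and the relevant information was produced elsewhere (in a variable gadget visited earlier), so the detection must be implemented by physically threading the chosen branch of each variable through exactly the clause gadgets it satisfies --- using separate copies of the variable-decision states inside the clause gadgets, enforcing through the topology of the flow (there being no global constraint mechanism) that these copies agree with the variable's value, and arranging that an un-opened clause gadget actually leaks the mass rather than merely failing to receive it. Designing these gadgets so they compose, avoid unintended double-counting of flow, and leave $\pi_g$ on the intended side of the profitability threshold for \emph{every} $S$ is where the real work lies; the numerical constants, the polynomial size bound, and the Agent's trivial best response are routine.
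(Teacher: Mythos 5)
Your high-level reading of the theorem is right --- the hardness comes from complementarity, i.e.\ platforms funnelling traffic into one another, and the zero-vs-positive gap is created by a large reward reachable only through a coordinated ecosystem of platforms, with everything else leaking to an unprofitable sink. But the proposal is not a proof: the entire combinatorial core of the reduction is deferred. You choose 3-SAT as the source problem, and 3-SAT forces you to solve a consistency problem that you name but do not solve --- a memoryless chain in which a platform at a state only rewrites that state's outgoing transition vector gives you no mechanism to force the ``copies'' of a variable sitting inside different clause gadgets to agree, and nothing in the sketch prevents the Designer from building both the true-branch and the false-branch of a variable (or inconsistent per-clause copies) and thereby ``satisfying'' an unsatisfiable formula. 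You wave at cost tuning and at ``threading'' the branch physically through its clauses, but then the clause-detection and the claim that $\pi_g$ lands on the correct side of the profitability threshold for \emph{every} subset $S$ (including mixed, partial, and double-built configurations) is exactly the double-counting problem you flag and leave open. A reduction whose hardest gadget is described only as ``where the real work lies'' has a genuine gap there.

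The paper sidesteps all of this by reducing from Set Cover rather than SAT. Set Cover has no cross-gadget consistency requirement: each element state independently points (via its adopted platform) at one chosen set state, the only structural constraint is that the pointed-to set state must itself carry a platform --- which is exactly expressible as ``otherwise the walk falls into the bad state'' --- and the bound $k$ on the number of sets is enforced purely on the cost side by setting the reward rate $r=k^2+k$ against a per-set cost of $k$, so that any leak to the bad state caps revenue at $r/k^2$, below the cost of a single set platform. If you want to salvage your route, either switch the source problem to one with a covering rather than an assignment structure, or construct and analyze the variable-consistency gadget explicitly; as written, the argument does not establish the theorem.
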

\begin{proof}
We reduce from the Set Cover problem. 
Given a family $F$ of $m$ subsets of a set $U$ of $n$ elements and an integer $k$, we want to determine if there is a subfamily of $F$ with $k$ sets whose union is $U$.
We define an instance of the PDP problem as follows. There are $m+n+1$ states, one for each set of $F$ and each element of $U$, and an additional `bad' state. 
For each set-state $S_i$, there is one potential platform $p(S_i)$ that the Designer may decide to offer at the state $S_i$. For each element state $u_j$ and every set $S_i$ of $F$ that contains element $u_j$ there is a platform $p(u_j,S_i)$ that the Designer may offer at state $u_j$; the Designer will offer at most one
of these platforms at state $u_j$.\footnote{We allow here the Designer to have a choice among several platforms in a state; it is easy to modify the construction, by using additional states, so that in each state the Designer has only one potential platform, which she may choose to build.} The Designer has no platform for the last `bad' state. 

The Agent likes all the platforms: that is, the Agent's rewards are such that he will adopt every platform that is offered by the Designer. Initially the MDP is at any element-state $u_j$ with uniform probability $1/n$. The transition probabilities of the Agent's MDP are as follows. An element-state $u_j$ with platform $p(u_j,S_i)$ (if adopted) transitions with probability $1$ to the set-state $S_i$.
An element state $u_j$ with no adopted platform transitions with probability $1$ to the bad state.
A set-state $S_i$ with adopted platform $p(S_i)$ self-loops with probability $1-1/k^2$ and transitions with the remaining probability to a uniformly random element-state.
A set-state $S_i$ with no (adopted) platform transitions with probability $1$ to the bad state.
The bad state self-loops with probability $1-1/nk^4$ and transitions with the remaining probability to a uniformly random element-state.

The Designer's rewards and costs are as follows. The reward rate for each set-state platform $p(S_i)$ is set to $r=k^2+k$, i.e. the Designer receives revenue equal to $r$ times the fraction of the time that the Agent spends in platform $p(S_i)$; the cost of building the platform is $k$.
The reward rates and costs of the platforms $p(u_j,S_i)$ are set to 0.
The objective of the Designer is to select a set of platforms to offer that maximizes the total profit, which is the total reward minus the total cost.

We claim that the optimal profit for the Designer is positive if and only if the Set Cover instance has a solution with at most $k$ sets. Intuitively, the goal of the Designer is to keep the Agent at all times within her "ecosystem", i.e. in states with her platforms, while making a profit.

First, suppose that there is a set cover $C$ with at most $k$ sets. The Designer offers the platform $p(S_i)$ for every $S_i \in C$ at the set-state $S_i$, and for each element-state $u_j$, the Designer offers a platform $p(u_j,S_i)$ for some $S_i \in C$ that contains $u_j$. The cost of building the platforms is $k|C| \leq k^2$. The Agent adopts all the offered platforms, and because of the transition probabilities, spends almost all the time at the set-states corresponding to sets in $C$, specifically a fraction $\frac{k^2}{1+k^2}$ of the time. Therefore, the profit of the Designer is at least $r\frac{k^2}{1+k^2} - k^2 >0$.

Conversely, suppose that the Designer has a solution with positive profit. Suppose that some element-state $u_j$ does not have a platform, or $u_j$ has a platform $p(u_j,S_i)$ but the corresponding set-state $S_i$ does not have the corresponding platform $p(S_i)$. Then, every time the MDP visits $u_j$ will
then move subsequently to the bad state. Therefore, the MDP will spend most of the time (specifically at least $1-1/k^2$ fraction of the time) in the bad state, which does not provide any revenue to the Designer. Thus, the total revenue to the Designer is at most $r/k^2$ which is less than the cost of a set platform.
We conclude that, if the profit is positive, then every element state $u_j$ must have a platform $p(u_j,S_i)$ and the corresponding state $S_i$ must have the corresponding platform $p(S_i)$. This implies that the collection $C$ of set-states $S_i$ with a platform forms a set cover.
The Designer's profit is at most $r-k|C|=k^2+k-k|C|$.
Since the profit is positive,  $|C| \leq k$.
\end{proof}
\section{Flower Case: The Agent's Problem} \label{sec:agent}

The intractability proof of the general PDP in the previous section relies on the complementary nature of the construction: offering a platform in one state can make it more attractive for the designer to also offer a platform in certain other states. We will next define a special case of the PDP which is much better behaved, and in economic terms roughly corresponds to substitution.

An agent divides her time among the different states.
If the designer offers a platform at a state $s$ and the agent adopts it, she spends more time at $s$, and hence
has less time to spend in the rest of the states.
In the absence of complementarity, this means that it is now less beneficial for the designer to offer a platform in another state. In other words, platforms at different states compete for the attention (and the time) of the agent, and it is the agent's time spent on the platforms that determines their contribution to the profit of the designer.

We define now formally the model in the special case,
which will be be our focus in the rest of the paper.  





\begin{definition}[Flower MDP]\label{def:agent-mdp} 
We have the same setup as defined in Section~\ref{sec.general_pdp}, with some added constraints on what the possible transitions can be. We also add in a dummy state $0$ with no reward or platforms\footnote{Note that the rest state $0$ is for convenience and is not necessary in our model. We could equivalently have a graph where each node $i$ transitions to node $j$ with probability $(1 - q_i - y_i)\cdot p_j$, and self-transitions with probability $q_i + y_i + p_i(1 - q_i - y_i)$.}.
In Figure~\ref{fig:life-platform-chain}, we define the transitions of the Markov chain $T_{\textnormal{life}}$ to represent the Agent's life, and $T_{\textnormal{platform}}$ to represent the Agent's life when the platform is adopted at all states. Here, $p_i, q_i, y_i$ satisfy $\sum_i p_i = 1$, $0 < p_i$, $0 < q_i < 1$, and $0 < y_i < 1 - q_i$ for all $i \in [n]$. In words, $p_i$ denote transition probabilities to different states from the rest state, $q_i$ denote the self-transition probabilities, and $y_i$ denote the modification to the self-transitions due to the Agent accepting the platform at state $i$. 
\begin{figure}
    \centering
    \includegraphics[scale=0.45]{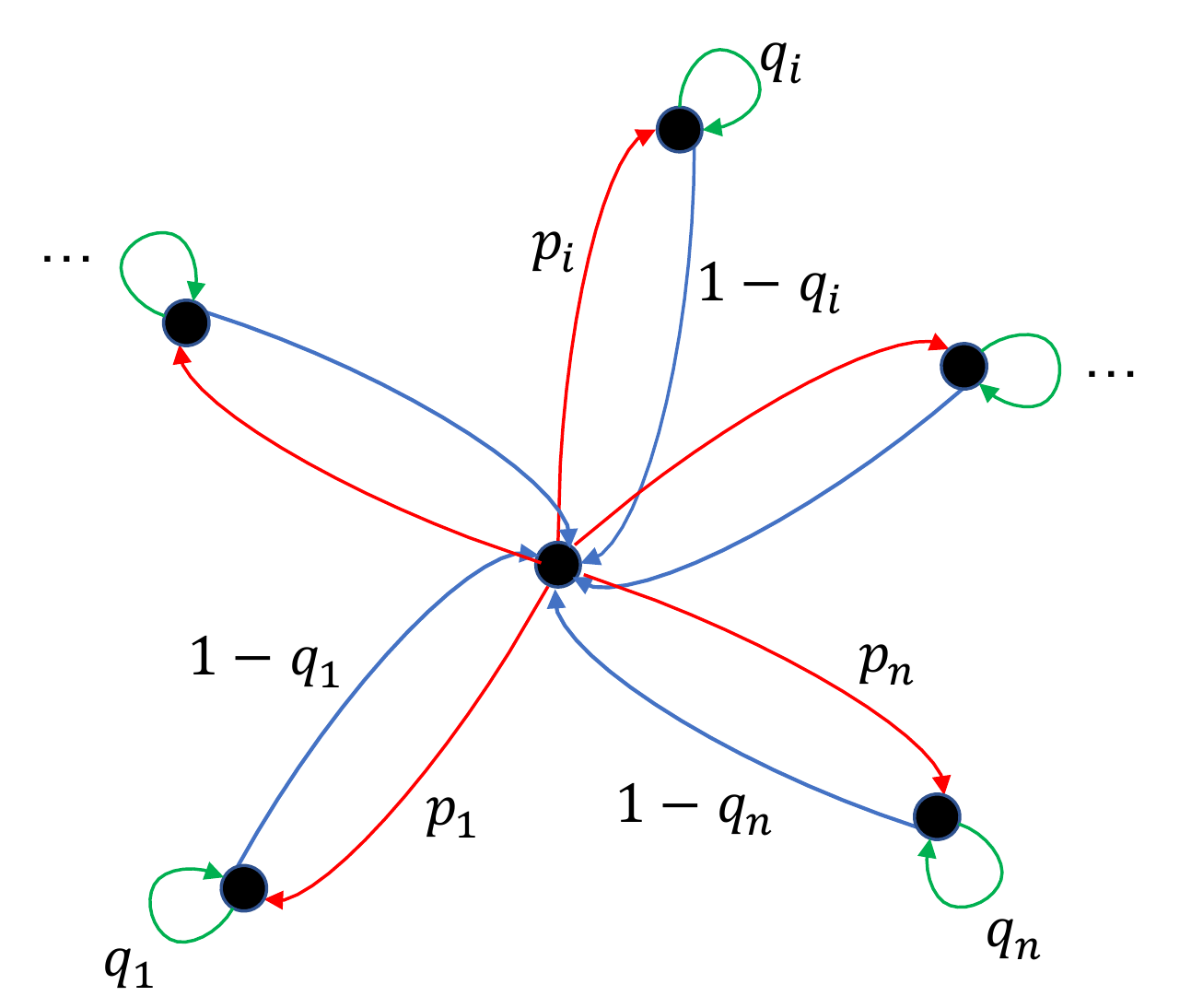}
    \includegraphics[scale=0.45]{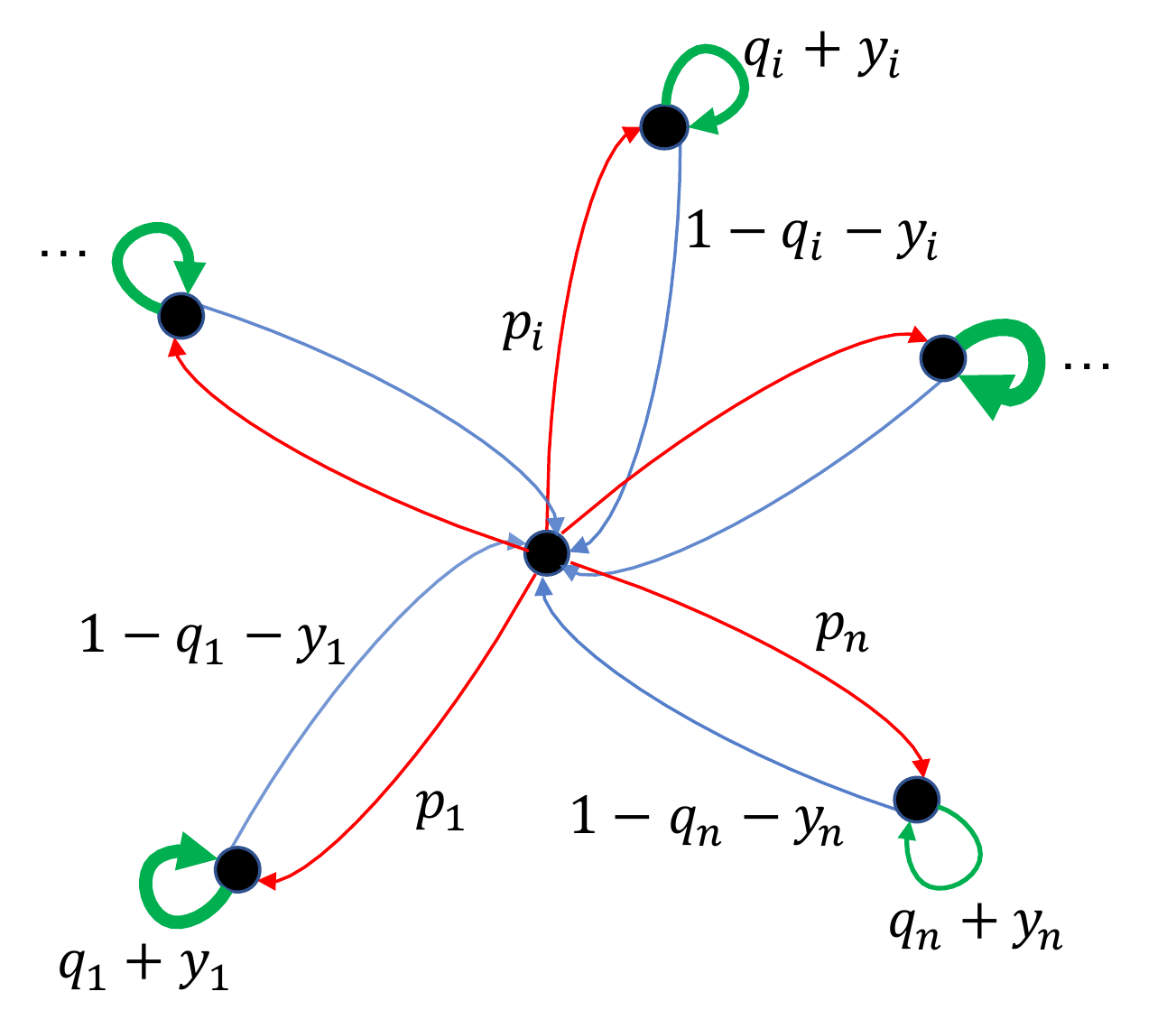}
    \caption{$T_{\textnormal{life}}$(left) and $T_{\textnormal{platform}}$(right).}
    \label{fig:life-platform-chain}
\end{figure}
At state $0$, the action chosen by the Agent does not affect the transitions, since the Designer never builds a platform there. 

\end{definition}


\subsection{The Greedy Algorithm}
Irreducible average-reward MDPs are efficiently solvable via linear programming, value and policy iteration, etc. \citep{Bertsekas17}. 
Here, we reformulate the Agent's problem as a combinatorial optimization problem with special structure, and solve it through a greedy algorithm.  The following is straightforward:

\begin{lemma} \label{thm:equiv-ip} The agent's objective for an optimal policy defined in Section~\ref{sec.general_pdp} can be re-written as the following optimization in the special case of the flower MDP (Definition~\ref{def:agent-mdp}):
\begin{align}
\begin{split}
\argmax{S \subseteq [n]} \frac{ A + \sum_{j \in S} z_j\phi(j)}{B + \sum_{j \in S} z_j} \label{eq:agent-opt}
\end{split}
\end{align}
where 
\[
A := \sum_{i = 1}^n \lambda_i \cL{i}; \quad B := 1 + \sum_{i = 1}^n \lambda_i; \quad \lambda_i = \frac{p_i}{1 - q_i}; \quad z_i = \frac{p_i}{1 - q_i - y_i} - \frac{p_i}{1 - q_i} \geq 0;
\]
\[
\phi(i) := \begin{cases} \ceff{i} &\textnormal{ if } z_i > 0 \\ 0 &\textnormal{ if } z_i = 0\end{cases};
\]
We therefore define 
\[
\textnormal{utility}^{\textnormal{Agent}}(S) := \frac{ A + \sum_{j \in S} z_j\phi(j)}{B + \sum_{j \in S} z_j} 
\]
\end{lemma}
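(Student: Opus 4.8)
The plan is to compute the long-run average reward (the gain) of the Markov chain $M(S)$ for an arbitrary adopted subset $S$ and check that it equals the claimed ratio; the lemma then follows because, by standard average-reward MDP theory, an optimal policy may be taken deterministic, stationary and Markovian, and since the action at the rest state $0$ is irrelevant, such a policy is nothing but a choice of which of the offered platforms to adopt. So ``the Agent's objective for an optimal policy'' is precisely $\max_S \textnormal{utility}^{\textnormal{Agent}}(S)$ once we establish $\textnormal{utility}^{\textnormal{Agent}}(S) = \text{gain of } M(S)$. Note that $M(S)$ is irreducible for every $S$ (each non-rest state is entered from $0$ with probability $p_i>0$ and returns to $0$ with positive probability), so its gain is the single scalar $\sum_{i}\pi_i(S)\, r_i(S)$, where $r_i(S)=\cP{i}$ for $i\in S$, $r_i(S)=\cL{i}$ for $i\notin S$ with $i\neq 0$, and $r_0=0$.

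First I would find the stationary distribution $\pi(S)$. The flower structure trivializes this: each non-rest state $i$ receives probability only from $0$ (mass $\pi_0 p_i$) and self-loops with probability $s_i$, where $s_i=q_i+y_i$ if $i\in S$ and $s_i=q_i$ otherwise; the balance equation $\pi_i=\pi_0 p_i+\pi_i s_i$ gives $\pi_i=\pi_0\, p_i/(1-s_i)$, i.e.\ $\pi_i=\pi_0\lambda_i$ for $i\notin S$ and $\pi_i=\pi_0(\lambda_i+z_i)$ for $i\in S$. Normalizing, $\pi_0\bigl(1+\sum_i\lambda_i+\sum_{j\in S}z_j\bigr)=1$, so $\pi_0 = 1/(B+\sum_{j\in S}z_j)$.

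Then I would substitute into the gain and rearrange: $\sum_i\pi_i r_i(S) = \pi_0\bigl(\sum_{i\notin S}\lambda_i\cL{i}+\sum_{i\in S}(\lambda_i+z_i)\cP{i}\bigr)$, and completing the first sum to $\sum_{i=1}^n\lambda_i\cL{i}=A$ leaves the per-state correction $(\lambda_i+z_i)\cP{i}-\lambda_i\cL{i} = z_i\cP{i}+\lambda_i(\cP{i}-\cL{i}) = z_i\bigl(\ceff{i}\bigr) = z_i\phi(i)$ for each $i\in S$. Hence $\sum_i\pi_i r_i(S) = \pi_0\bigl(A+\sum_{j\in S}z_j\phi(j)\bigr) = \frac{A+\sum_{j\in S}z_j\phi(j)}{B+\sum_{j\in S}z_j} = \textnormal{utility}^{\textnormal{Agent}}(S)$, as claimed.

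There is no real obstacle here --- this is the ``straightforward'' reformulation the statement advertises --- but three small points deserve a line of care. First, one must invoke the unichain average-reward MDP facts that legitimize reducing ``optimal policy'' to ``best adopted subset'' and collapsing the gain to the single scalar $\sum_i\pi_i r_i$. Second, the division by $z_i$ in $\phi$ uses $z_i>0$, which holds throughout the flower MDP because $0<y_i<1-q_i$; the convention $\phi(i)=0$ when $z_i=0$ is merely a harmless placeholder for the degenerate limit (where adopting the platform at $i$ alters neither $\pi$ nor the modeled reward) and for the multi-platform extension of Theorem~\ref{thm:agent-multi-opt}. Third, recording $z_i\ge 0$ is worth doing explicitly, since it is exactly what makes $\textnormal{utility}^{\textnormal{Agent}}$ a ratio of two nonnegative, affine, nondecreasing set functions of $S$ --- the structural feature that the greedy algorithm of Theorem~\ref{thm:greedy-is-opt} will exploit.
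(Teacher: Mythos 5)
Your proposal is correct and follows essentially the same route as the paper's own proof: invoke the average-reward MDP reduction to $\langle\pi,r\rangle$, solve the balance equations of the flower chain to get $\pi_i \propto p_i/(1-s_i)$ with $s_i$ the (possibly platform-modified) self-loop probability, normalize, and rearrange. If anything, you make explicit the final algebraic step identifying the per-state correction $(\lambda_i+z_i)\cP{i}-\lambda_i\cL{i}=z_i\phi(i)$, which the paper leaves implicit.
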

\begin{proof}
See Appendix~\ref{appendixA}. 
\end{proof}
We note that here we assume that $y_i > 0$ at each state --- that is to say, adopting the platform increases a state's recurrence probability.  This assumption is not necessary, and the general case can be handled in a similar way by modifying the greedy algorithm to pay attention to signs (see Appendix~\ref{appendixA}). We also reiterate that the solution to the original average case MDP problem need not be unique. Therefore, the argmax solution to Equation~\ref{eq:agent-opt} has many potential solutions.

The optimization problem formulated in Lemma~\ref{thm:equiv-ip} can be solved in polynomial time.  The intuitive reason is this:  Looking at the fractional objective function, we note that it is the ratio of two linear functions of the combinatorial (integer) variables implicit in $S$, and such functions are known to be quasiconvex.  It is therefore no huge surprise that a greedy algorithm solves it --- however, the details are rather involved.  Incidentally, one could arrive at the same algorithm by tracing the simplex algorithm on the MDP linear program.

\begin{algorithm}
\DontPrintSemicolon 
\KwIn{Parameters of the Agent's problem: transition probabilities and utility coefficients in and out of the platform.}
\KwOut{An optimal subset $S \subseteq [n]$ of states where the Agent accepts the platform.}
Initialize $S := \{\}$\;
\For{$k \in [n]$ sorted\footnote{Note the sort order may not be unique in case of ties.} from largest to smallest $\phi(k)$} {
  \uIf{$\textnormal{utility}^{\textnormal{Agent}}(S) < \phi(k)$} {
    Update $S := S \cup \left\{k\right\}$\;
  }
  \Else{
    \Return $S$\;
  }
}
\Return{$S$}\;
\caption{{\sc Greedy Algorithm}} 
\label{alg:agent-greedy}
\end{algorithm}

\begin{theorem}\label{thm:greedy-is-opt}
Algorithm~\ref{alg:agent-greedy} returns
\[
S^* \in \argmax{S \subseteq [n]} \textnormal{utility}^{\textnormal{Agent}}(S) 
\]
That is, the policy 
\[
\pi(s) = \begin{cases} a^1 &\textnormal{ if } s \in S^* \\ a^0 &\textnormal{ o.w.} \end{cases}
\]
is an optimal policy. Here, $a^1$ and $a^0$ refer to the actions available to the Agent's MDP: "accept platform" is $a^1$ and "do not accept platform" is $a^0$. 
\end{theorem}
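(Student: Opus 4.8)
The plan is to treat $u(S) := \textnormal{utility}^{\textnormal{Agent}}(S) = N(S)/D(S)$, with $N(S) := A + \sum_{j\in S} z_j\phi(j)$ and $D(S) := B + \sum_{j\in S} z_j$, as a linear-fractional set function, and note first that $D(S) \ge B = 1 + \sum_i \lambda_i > 0$ for every $S$, so every inequality below may be cleared of denominators safely. I would first record the elementary ``mediant'' behaviour of $u$ under a single insertion, then read off a structural property of the set $S^g$ returned by Algorithm~\ref{alg:agent-greedy}, and finally use that property to show $u(T) \le u(S^g)$ for every $T \subseteq [n]$. \emph{Step 1 (mediant lemma):} for $k \notin S$ with $z_k > 0$, clearing denominators in the comparison of $\frac{N(S)+z_k\phi(k)}{D(S)+z_k}$ with $\frac{N(S)}{D(S)}$ shows that $u(S\cup\{k\})$ is strictly greater than, equal to, or strictly less than $u(S)$ exactly as $\phi(k)$ is, respectively, greater than, equal to, or less than $u(S)$, and moreover in the first case $u(S) < u(S\cup\{k\}) < \phi(k)$. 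States with $z_k = 0$ have $\phi(k) = 0$ and change neither $N$ nor $D$, so they never affect $u$ and may be ignored throughout. This is exactly the test used in the inner loop, which makes the algorithm the natural candidate; the content is in its global correctness.

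\emph{Step 2 (structure of the greedy output):} since the algorithm scans states in nonincreasing $\phi$-order and, by Step 1, each accepted state strictly raises the running value, the running values form a strictly increasing sequence while states are being added, and each added state has $\phi$ at least as large as that of every state examined afterwards. Combining the last accepted state's bound $u(S^g) < \phi(k_{\text{last}})$ from Step 1 with the sort order then yields that every $k \in S^g$ satisfies $\phi(k) > u(S^g)$; and the halting rule, together with the sort order and the fact that greedy returns immediately upon the first failed test, yields that every $k \notin S^g$ with $z_k > 0$ satisfies $\phi(k) \le u(S^g)$. Writing $v := u(S^g)$, this says precisely that $\{ j : z_j > 0,\, \phi(j) > v \} = S^g$.

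\emph{Step 3 (global optimality and conclusion):} for an arbitrary $T \subseteq [n]$, the inequality $u(T) \le v$ is equivalent (clearing the positive denominator $D(T)$) to $(A - vB) + \sum_{j\in T} z_j(\phi(j)-v) \le 0$. The term $z_j(\phi(j)-v)$ is positive exactly for $j$ with $z_j>0$ and $\phi(j) > v$, i.e.\ exactly for $j \in S^g$ by Step 2, so $\sum_{j\in T} z_j(\phi(j)-v) \le \sum_{j\in S^g} z_j(\phi(j)-v)$, and $(A-vB) + \sum_{j\in S^g} z_j(\phi(j)-v) = N(S^g) - v D(S^g) = D(S^g)\,(u(S^g)-v) = 0$. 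Hence $u(T) \le v = u(S^g)$ for every $T$, so $S^g$ maximizes $\textnormal{utility}^{\textnormal{Agent}}$; invoking Lemma~\ref{thm:equiv-ip}, which identifies these maximizers with the optimal adopted sets of the flower MDP, the policy $\pi$ playing $a^1$ exactly on $S^g$ is an optimal policy.

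\emph{Expected main obstacle.} Steps 1 and 3 are one-line computations; the delicate part is Step 2 — establishing, in the presence of ties among the $\phi$-values and of states with $z_j = 0$, that greedy halts exactly where the running value meets the eventual optimum and that no ``overshoot'' occurs. The two facts that make this go through are that the running value is monotone increasing along the run and is bounded above by the global optimum (so a state with $\phi(k)$ strictly above the optimum is never skipped over), and that states are examined in nonincreasing $\phi$-order (so once a state fails the acceptance test, so do all later ones). Tie-breaking only changes which maximizer is returned, never its value, which is consistent with the non-uniqueness already noted after Lemma~\ref{thm:equiv-ip}; and the alternative, more computational route — characterizing $v$ as the unique root of the decreasing piecewise-linear function $v \mapsto (A-vB) + \sum_{j:\phi(j)>v} z_j(\phi(j)-v)$ — gives the same conclusion should the exchange bookkeeping become unwieldy.
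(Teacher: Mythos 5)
Your proof is correct, but it reaches optimality by a genuinely different route than the paper. The paper argues by exchange: using the mediant inequality it shows that any non-prefix set is dominated by a prefix set (via an induction that repeatedly fills in a ``missing piece'' or deletes everything below it), and then separately argues that greedy stops at the best prefix. You instead use the mediant inequality only to pin down the structure of the greedy output --- every accepted state has potential strictly above the final value $v=u(S^g)$ and every rejected state with $z_j>0$ has potential at most $v$ --- and then close the argument with the parametric (Dinkelbach-style) identity: $u(T)\le v$ is equivalent to $(A-vB)+\sum_{j\in T}z_j(\phi(j)-v)\le 0$, the set maximizing the left-hand side at parameter $v$ is exactly the set of positive terms, which is $S^g$, and there the expression equals $N(S^g)-vD(S^g)=0$. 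This replaces the paper's induction on missing pieces with a one-line linearization, handles ties and the ``best stopping point'' question in a single stroke, and makes the quasiconvexity intuition mentioned in the paper completely explicit; the paper's exchange argument, by contrast, is the one that generalizes more directly to the signed-$z$ and multi-platform variants treated later (Appendix A and Lemma~\ref{lem:localopt}), where local-move domination is the natural language. One cosmetic remark: as you note, the greedy set may also contain states with $z_j=0$ (the test $u(S)<\phi(k)=0$ can pass when $u(S)<0$), so the identity in Step 2 should strictly read $\{j: z_j>0,\ \phi(j)>v\}=\{j\in S^g: z_j>0\}$; since such states contribute nothing to numerator or denominator this changes nothing downstream, and your blanket ``ignore $z_j=0$'' convention already covers it.
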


Before we prove the theorem, we give a useful definition and a lemma. 
 
\begin{definition}[Prefix policy]
We say a policy $S$ is prefix if the states in the policy are the first $m$ states
in order sorted by $\phi$, for some value of $m$. 
\end{definition}

\begin{lemma}[Mediant Inequality]\label{lemma:1}
\[
\frac{x}{y} < \frac{r}{s} \iff \frac{x}{y} < \frac{x + r}{y + s} < \frac{r}{s}\quad \textnormal{ where } y, s > 0.
\]
\end{lemma}
\begin{proof}
Since $y, s > 0$ and thus $y + s > 0$, cross-multiply and simplify to get the desired inequalities.\qed
\end{proof}

With this lemma in hand, we prove Theorem~\ref{thm:greedy-is-opt}.
\begin{proof}[Proof of Theorem~\ref{thm:greedy-is-opt}]
We can prove optimality in two steps. 
\begin{enumerate}
    \item First we will show that any non-prefix policy is dominated by a prefix policy. Thus an optimal policy must be prefix. 
    \item Then, we show that the greedy algorithm necessarily finds a best prefix policy (e.g., an optimal stopping point). 
\end{enumerate}

We begin with the first step. Suppose we have a non-prefix policy $S$. Let state $\ell \in [n]$ be a ``missing piece'' (e.g., if we index by sorted order and $S$ contained $1, 2, 4, 5, 7$, missing pieces would be $3$ and $6$). This $\ell \in [n]$ necessarily exists since $S$ is non-prefix. 
Now there are two cases. 

\paragraph{Case $1$:}
\[
\textnormal{utility}^{\textnormal{Agent}}(S) < \phi(\ell)
\]
We apply Lemma~\ref{lemma:1} 
to show that adding state $\ell$ results in improvement in the objective. 

\paragraph{Case $2$:}
\[
\textnormal{utility}^{\textnormal{Agent}}(S) \geq \phi(\ell)
\]
Here we show that removing all states $k \in S$ where $\phi(k) < \phi(\ell)$ improves the objective. If equality holds, then it does not matter whether we add the state to the objective, so for simplicity, we terminate at equality. From the assumption and the definition of $\phi$, we have 
\[
\textnormal{utility}^{\textnormal{Agent}}(S) \geq \phi(\ell) > \phi(k)
\]
for all such $k$. By Lemma~\ref{lemma:1}, removing state $k$ increases the ratio, e.g.
\[
\textnormal{utility}^{\textnormal{Agent}}(S\setminus\{k\}) > \textnormal{utility}^{\textnormal{Agent}}(S) \geq \phi(\ell) > \phi(k)
\]
The same argument applies to all $k' \in S\setminus\{k\}$ such that $\phi(k') \leq \phi(k)$ as well. Therefore, we can remove all the $k'$ with score less than the score of $\ell$ and improve the objective. 

After a single round of considering a missing state $\ell$ (where either case $1$ or case $2$ applies), we produce a new $S'$, which can again be non-prefix. However, the maximum index present in the new $S'$ has either decreased (if the second case happened and we deleted everything worse than $\ell$) or we filled in the missing state $\ell$. Either way, the number of missing pieces has strictly decreased and we have added no new missing states. Using induction on the number of missing pieces proves that iterating over all original missing pieces will ``fill in all the gaps'' and produce a prefix policy $S^*$ which is strictly better than the original non-prefix policy $S$. 

Finally, we show the greedy algorithm selects an optimal prefix policy. Let the output of the greedy algorithm be $\hat{S}$. The desired result directly follows since if the next state $\ell$ satisfies 
\[
\textnormal{utility}^{\textnormal{Agent}}(\hat{S}) \geq \phi(\ell)
\]
and is not selected, since all smaller states (sorted by $\phi$) are less than or equal to $\phi(\ell)$, any prefix subset of the smaller states is an effective average which is $\leq \phi(\ell)$, and any prefix subset of future states is worse off. Thus, the greedy algorithm produces a maximal solution. 
\end{proof}



\section{Flower Case: The Designer's Problem} \label{sec:designer}
We now consider the Designer's problem (Definition~\ref{def:PDP}) in the special case where the Agent lives in the flower MDP (Definition~\ref{def:agent-mdp}). Under this assumption on the Agent's MDP, it will be possible to give an FPTAS for the Designer's problem, due to the additional structure imposed in this setting. 
Let $\textnormal{Agent}(S)$ denote the subset of states that the Agent adopts when the Designer offers platforms for the subset $S$ of states.
Given the results of Section~\ref{sec:agent}, the fraction of the time that the Agent spends in state $i \in \textnormal{Agent}(S)$ is 
\[
\frac{\frac{p_i}{1-q_i-y_i}}{B+\sum_{i \in \textnormal{Agent}(S)} z_i}
\]
using the notation of Section 2 given in Definition~\ref{def:agent-mdp} and Lemma~\ref{thm:equiv-ip}
(see Appendix A for the  stationary distribution of the Markov chain).
Thus, we can simplify the expression for the Designer's profit function:
\[
\textnormal{profit}(S) := \frac{ \sum_{i \in \textnormal{Agent}(S)} d_i\cdot\frac{p_i}{1 - q_i - y_i}}{B + \sum_{i \in \textnormal{Agent}(S)} z_i} - \sum_{i \in S} \textnormal{cost}_i
\]
Call a set $S$ of states {\em feasible} if $\textnormal{Agent}(S)= S$. Since the Agent's response is completely anticipated by the Designer (Agent's parameters are known to the Designer, and the Designer can therefore simulate the greedy algorithm from Section~\ref{sec:agent}), only feasible sets $S$ need be considered.

A few additional properties result after we specialize to the flower MDP setting. It is easy to see from the definition and the greedy algorithm of Section~\ref{sec:agent} that if a set $S $ is feasible then so are all its subsets. If for some state $i$, $\textnormal{profit}(\{i\}) \leq 0$, then
it follows that for all sets $S$ that contain $i$ we have
$\textnormal{profit}(S) \leq \textnormal{profit}(S-\{j\})$.
Hence, there is no reason to build a platform at $i$,
and we can ignore $i$.
Thus, we may restrict our attention to the states $i$ such that
$\textnormal{profit}(\{i\}) > 0$.
We may assume also without loss of generality that every state $i$ by itself is feasible: If $\{i\}$ is not feasible, then neither is any set that contains $i$, therefore we can ignore $i$.

We now add a few more assumptions to ensure tractability. Let $K = \max_i \textnormal{profit}(\{i\})$.
It is easy to see that for any set $S$, $\textnormal{profit}(S) \leq \sum_{i \in S} \textnormal{profit}(\{i\})$.
Therefore, the optimal profit $OPT$ is at most $nK$ and at least $K$.
We will also assume that the cost $\textnormal{cost}_i$ of 
building a platform at any site $i$
is not astronomically larger than the anticipated optimal profit,
specifically we assume $\textnormal{cost}_i \leq rK$
for some polynomially bounded factor $r$. Furthermore, and importantly, for our dynamic programming FPTAS to work in polynomial time, a discretization assumption is necessary. For each state $i$, the platform available will change (increase or decrease) the term $\frac{p_i}{1-q_i}$ appearing in the numerator and the denominator of the Agent's objective by an additive $z_i$.  {\em We assume that all these $z_i$'s are multiples of the same small constant $\delta$} (think of $\delta$ as $1\%$).  This assumption means that there are $O\left(\frac{1}{\delta}\right)$ possible values of the denominator, and ensures the dynamic programming is polynomial-time. One should think of this maneuver as one of the compromises (in addition to accepting a slightly suboptimal solution) for the approximation of the whole problem.  We suspect that the problem has no FPTAS without this assumption, although there is a pseudo-polytime algorithm. 

\subsection{FPTAS for the PDP}
The Platform Design Problem is approximately solvable in polynomial time; in this section we present a FPTAS which returns a $(1 - \eps)$-approximate solution. Our approach is inspired by the FPTAS for knapsack presented in \citet{Ibarra75}. We also note that our algorithm relies on the structure of the greedy algorithm presented in Algorithm~\ref{alg:agent-greedy}. 

The algorithm uses dynamic programming.
It employs a 3-dimensional hash table, called SET,
into which the sets under consideration are being hashed.
The hash function has three components that correspond to the
following components of the profit of the set,  scaled and rounded appropriately to integers:
(1) the whole profit $\textnormal{profit}(S)$,
(2) the first term in the profit, denoted $P_1(S)$, and
(3) the denominator of the first term, denoted ${\bf D}(S)$ (which note, is also
the denominator of the Agent's objective function).
We use ${\bf N}(S)$ to denote the numerator of the
Agent's objective function.

\begin{algorithm}
\DontPrintSemicolon 
\KwIn{The parameters of the PDP: transition probabilities, utility and cost coefficients for the Agent and the Designer, and small positive reals $\eps, \delta$}
\KwOut{A $(1-\eps)$-approximately optimal subset of states $S^*$ for which to deploy platforms.}
\textbf{N}(S) and \textbf{D}(S) denote the numerator and the denominator of the Agent's objective function, with the constant terms omitted\;
${P_1}(S)$ denotes the first term in the Designer's profit function\;
\textnormal{SET} is a hash table of subsets of $[n]$ indexed by triples of integers\;
The hash function is $\textnormal{hash}(S) := \left(\lceil \frac{\textnormal{profit}(S)}{\epsilon K/2n}\rceil,\lceil \frac{P_1(S)}{\epsilon K/2n}\rceil, \textbf{D}(S)/\delta \right)$\;
Initialize the hash table \textnormal{SET} to contain only the empty set in the bin $(0,0,0)$\;
\For{$k \in [n]$} {
    \For{$S \in$ \textnormal{SET} \textnormal{ in lexicographic order}}{
        $S' := S \cup \{k\}$\;
        \If{Agent will adopt all platforms in $S'$ and $\textnormal{profit}(S') > 0$}{
            \uIf{\textnormal{hash}$(S') \in \textnormal{SET}$}{
                $\hat{S} := \textnormal{SET}[\textnormal{hash}(S')]$\;
                \If{$\textbf{N}(\hat{S}) > \textbf{N}(S')$}{
                      $\textnormal{SET}[\textnormal{hash}(S')] := S'$\;
                }
            }
            \uElse{
                $\textnormal{SET}[\textnormal{hash}(S')] := S'$\;
            }
        }
    }
}
\Return{the set $S$ in the hash table with largest first hash value}\;
\caption{{\sc Designer's FPTAS for the PDP}} 
\label{alg:designer-FPTAS}
\end{algorithm}

\begin{lemma} \label{lemma:invariance} 
Let $S,S' \subseteq [k]$ be two sets that hash in the same bin and
suppose that $\textbf{N}(S) \leq \textbf{N}(S')$.
Then for every set $T \subseteq \{k+1, \ldots, n \}$,
if $S' \cup T$ is feasible then $S \cup T$ is also feasible,
and $\textnormal{profit}(S \cup T) \geq \textnormal{profit}(S' \cup T)- \epsilon K/n$.
\end{lemma}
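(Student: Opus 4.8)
The statement splits into feasibility transfer ($S'\cup T$ feasible $\Rightarrow S\cup T$ feasible) and the profit inequality, and I would treat them in that order. Three bookkeeping facts are used throughout. (i)~$\textbf{N}(U)=\sum_{i\in U}z_i\phi(i)$, $\textbf{D}(U)=\sum_{i\in U}z_i$, the numerator $N_1(U):=\sum_{i\in U}d_i\frac{p_i}{1-q_i-y_i}$ of the Designer's first profit term $P_1$, and $\textnormal{cost}(U):=\sum_{i\in U}\textnormal{cost}_i$ are all additive over disjoint unions, since $\phi,z_i,d_i,\textnormal{cost}_i$ are intrinsic to a state. (ii)~``$S,S'$ hash to the same bin'' means, by the discretization assumption (every $z_i$ is an integer multiple of $\delta$, so the third hash coordinate $\textbf{D}(S)/\delta$ is an exact integer), that $\textbf{D}(S)=\textbf{D}(S')$ \emph{exactly}, and moreover $|\textnormal{profit}(S)-\textnormal{profit}(S')|\le\epsilon K/2n$ and $|P_1(S)-P_1(S')|\le\epsilon K/2n$. (iii)~$S$ and $S'$ are feasible (only feasible sets enter SET), and the states are indexed in the non-increasing-$\phi$ order in which Algorithm~\ref{alg:designer-FPTAS} scans them (ties broken by a fixed rule, suppressed in the notation), so that every $t\in T\subseteq\{k+1,\dots,n\}$ comes strictly after every $j\in[k]$ in that order. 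I also want two tools. First, a characterization of feasibility read off from the greedy algorithm (Theorem~\ref{thm:greedy-is-opt}): writing $U_{\preceq u}$ for the elements of $U$ at or before $u$ in the scanning order, and noting all $z_i>0$ since $y_i>0$, a set $U$ is feasible iff $\textnormal{utility}^{\textnormal{Agent}}(U_{\preceq u})<\phi(u)$ for every $u\in U$ --- because the greedy adopts all of $U$ exactly when each running value $\textnormal{utility}^{\textnormal{Agent}}(U_{\preceq u}\setminus\{u\})$ is $<\phi(u)$, and by the Mediant Inequality (Lemma~\ref{lemma:1}), $\textnormal{utility}^{\textnormal{Agent}}(U_{\preceq u})$ being the mediant of $\textnormal{utility}^{\textnormal{Agent}}(U_{\preceq u}\setminus\{u\})$ and $\phi(u)$ (both denominators positive), ``running value $<\phi(u)$'' is equivalent to ``$\textnormal{utility}^{\textnormal{Agent}}(U_{\preceq u})<\phi(u)$''. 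Second, a monotonicity fact: for any $W\subseteq\{k+1,\dots,n\}$, additivity and $\textbf{D}(S)=\textbf{D}(S')$ make the denominators of $\textnormal{utility}^{\textnormal{Agent}}(S\cup W)$ and $\textnormal{utility}^{\textnormal{Agent}}(S'\cup W)$ equal while $\textbf{N}(S)\le\textbf{N}(S')$ makes the first numerator no larger, so $\textnormal{utility}^{\textnormal{Agent}}(S\cup W)\le\textnormal{utility}^{\textnormal{Agent}}(S'\cup W)$.

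To transfer feasibility, assume $S'\cup T$ is feasible and verify the characterization for $S\cup T$ at every $v$. If $v\in S\subseteq[k]$, no state of $T$ precedes $v$, so $(S\cup T)_{\preceq v}=S_{\preceq v}$, and $\textnormal{utility}^{\textnormal{Agent}}(S_{\preceq v})<\phi(v)$ is exactly the characterization applied to the feasible set $S$. If $v\in T$, every state of $[k]$ precedes $v$, so $(S\cup T)_{\preceq v}=S\cup T_{\preceq v}$ and $(S'\cup T)_{\preceq v}=S'\cup T_{\preceq v}$; applying monotonicity with $W=T_{\preceq v}$ and then the characterization of the feasible set $S'\cup T$ gives $\textnormal{utility}^{\textnormal{Agent}}(S\cup T_{\preceq v})\le\textnormal{utility}^{\textnormal{Agent}}(S'\cup T_{\preceq v})<\phi(v)$. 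Since the characterization now holds at every $v\in S\cup T$, the set $S\cup T$ is feasible.

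For the profit inequality, note that on a feasible set $\textnormal{profit}(U)=\frac{N_1(U)}{B+\textbf{D}(U)}-\textnormal{cost}(U)$ (there $\textnormal{Agent}(U)=U$). Put $E_0:=B+\textbf{D}(S)=B+\textbf{D}(S')$ and $E:=B+\textbf{D}(S)+\textbf{D}(T)=E_0+\textbf{D}(T)\ge E_0>0$. Since $N_1,\textnormal{cost}$ are additive and $T$ is disjoint from $S$ and $S'$ (so the $N_1(T),\textnormal{cost}(T)$ terms cancel), a direct computation gives
\[
\textnormal{profit}(S\cup T)-\textnormal{profit}(S'\cup T)=\big(\textnormal{profit}(S)-\textnormal{profit}(S')\big)-\big(P_1(S)-P_1(S')\big)\tfrac{\textbf{D}(T)}{E}.
\]
The first bracket is $\ge-\epsilon K/2n$ by (ii); and since $0\le\textbf{D}(T)/E\le1$ and $|P_1(S)-P_1(S')|\le\epsilon K/2n$, the second term is also $\ge-\epsilon K/2n$. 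Adding the two bounds yields $\textnormal{profit}(S\cup T)\ge\textnormal{profit}(S'\cup T)-\epsilon K/n$, as claimed.

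The work is concentrated in the feasibility half, and the one indispensable idea there is the $\phi$-sorted scanning order: it is precisely what forces $(S\cup T)_{\preceq v}$ to collapse to $S_{\preceq v}$ (when $v\in S$) or to $S\cup T_{\preceq v}$ (when $v\in T$), so that the hypothesis ``equal $\textbf{D}$, smaller $\textbf{N}$'' --- which we only have for the full sets $S,S'$, not for arbitrary order-truncations of them --- can be deployed on $S$ versus $S'$ as wholes. Stating the greedy-based feasibility characterization correctly, with its strict-versus-weak-inequality and tie-breaking subtleties, is the other place where care is needed; once the characterization and the monotonicity fact are in hand, the profit bound is just the displayed algebra.
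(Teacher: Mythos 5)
Your proof is correct, and the profit half is essentially verbatim the paper's argument: the same identity expressing $\textnormal{profit}(S\cup T)-\textnormal{profit}(S'\cup T)$ as $\bigl(\textnormal{profit}(S)-\textnormal{profit}(S')\bigr)$ minus $\bigl(P_1(S)-P_1(S')\bigr)$ times a factor in $[0,1]$, followed by the two $\epsilon K/2n$ bin bounds. Where you genuinely add something is the feasibility half. The paper disposes of it in one sentence --- ``$\textbf{D}(S\cup T)=\textbf{D}(S'\cup T)$ and $\textbf{N}(S\cup T)\le\textbf{N}(S'\cup T)$, hence $S\cup T$ is feasible'' --- whereas you prove it via a prefix characterization of feasibility and the monotonicity of the agent's utility. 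Your extra care is not wasted: the one-line version is only valid because the dynamic program scans states in non-increasing order of $\phi$, so that every element of $T$ follows every element of $S$ in the greedy's order; the paper leaves this implicit, and your item (iii) makes it explicit. Without that ordering the feasibility transfer can actually fail: take $A=0$, $B=1$, $z_1=z_2=z_3=1$, $\phi(1)=2$, $\phi(2)=\phi(3)=10$, $S=\{1\}$, $S'=\{2\}$, $T=\{3\}$; then $S,S'$ hash identically with $\textbf{N}(S)\le\textbf{N}(S')$, the set $\{2,3\}$ is feasible (utility $20/3<10$), but $\{1,3\}$ is not (utility $4>\phi(1)$). So: correct, same skeleton as the paper, with the feasibility step proved rather than asserted --- and with the hidden ordering hypothesis that makes the assertion true surfaced explicitly.
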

\begin{proof}
Proved in Appendix~\ref{appendixB}.
\end{proof}


\begin{lemma} \label{lemma:induction}
For every $k =0, 1, \ldots, n$, after the $k^{th}$ iteration of the loop, there is a set $S$ in the hash table that
can be extended with elements from $\{ k+1, \ldots, n \}$ to a feasible set that has profit $\geq \textnormal{OPT}- \eps k \cdot K/n$.
\end{lemma}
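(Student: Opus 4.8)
The plan is to induct on $k$, taking the statement of the lemma itself as the inductive hypothesis. Two bookkeeping observations about Algorithm~\ref{alg:designer-FPTAS} will do most of the work. First, a bin of \textnormal{SET} is never emptied once it becomes occupied, and whenever the algorithm writes a set into an already occupied bin it retains the set with the smaller value of $\textbf{N}(\cdot)$; hence the $\textbf{N}$-value of whatever set currently sits in a fixed bin can only decrease over time. Second, $\emptyset$ is placed in bin $(0,0,0)$ at initialization, and since every set the algorithm inserts has positive profit and therefore first hash coordinate at least $1$, bin $(0,0,0)$ is never overwritten, so $\emptyset\in\textnormal{SET}$ after every iteration. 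The base case $k=0$ is then immediate: before the loop $\textnormal{SET}=\{\emptyset\}$, and taking the extension to be an optimal feasible set (which lies in $\{1,\dots,n\}$) gives a feasible set of profit $\textnormal{OPT}=\textnormal{OPT}-\eps\cdot 0\cdot K/n$.

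For the inductive step I would assume the claim after iteration $k-1$, witnessed by $S\in\textnormal{SET}$ and $T\subseteq\{k,\dots,n\}$ with $S\cup T$ feasible and $\textnormal{profit}(S\cup T)\ge\textnormal{OPT}-\eps(k-1)K/n$ (a positive quantity, since $\textnormal{OPT}\ge K$ and $\eps<1$), and split on whether $k\in T$. If $k\notin T$, then $T\subseteq\{k+1,\dots,n\}$, and after iteration $k$ the bin that held $S$ still holds some $S''\subseteq[k]$ with $\textbf{N}(S'')\le\textbf{N}(S)$; applying Lemma~\ref{lemma:invariance} to $S''$ and $S$ (both in $[k]$) with extension $T$ shows $S''\cup T$ is feasible and $\textnormal{profit}(S''\cup T)\ge\textnormal{profit}(S\cup T)-\eps K/n\ge\textnormal{OPT}-\eps k K/n$, so $S''$ with extension $T$ witnesses the claim after iteration $k$.

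If $k\in T$, write $T=\{k\}\cup T'$ with $T'\subseteq\{k+1,\dots,n\}$; feasibility being downward closed, $S\cup\{k\}$ is feasible. When $\textnormal{profit}(S\cup\{k\})>0$, the algorithm forms $S\cup\{k\}$ during iteration $k$ (as $S$ was in \textnormal{SET} at its start) and either inserts it or finds its bin already holding a set with no larger $\textbf{N}$; so after iteration $k$ the bin of $S\cup\{k\}$ holds some $S''\subseteq[k]$ with $\textbf{N}(S'')\le\textbf{N}(S\cup\{k\})$, and Lemma~\ref{lemma:invariance} applied to $S''$ and $S\cup\{k\}$ with extension $T'$ gives $\textnormal{profit}(S''\cup T')\ge\textnormal{profit}(S\cup T)-\eps K/n\ge\textnormal{OPT}-\eps k K/n$. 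When $\textnormal{profit}(S\cup\{k\})\le 0$, I would instead use that $\textnormal{profit}$ is subadditive over disjoint unions of state sets (the same computation behind $\textnormal{profit}(S)\le\sum_{i\in S}\textnormal{profit}(\{i\})$ yields $\textnormal{profit}(X\cup Y)\le\textnormal{profit}(X)+\textnormal{profit}(Y)$ for disjoint $X,Y$, since the first term $P_1$ is subadditive and the cost term is additive): then $\textnormal{profit}(T')\ge\textnormal{profit}(S\cup T)\ge\textnormal{OPT}-\eps k K/n$, and $T'\subseteq S\cup T$ is feasible by downward closure, so the permanently present $\emptyset$ with extension $T'$ witnesses the claim.

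The step I expect to be the main obstacle is reconciling the two pruning tests in the inner loop with the invariant. Downward closure of feasibility disposes of the ``Agent adopts all of $S'$'' test, since the tracked target sets stay feasible, but the ``$\textnormal{profit}(S')>0$'' test is the delicate one: a prefix of a high-profit set can itself have nonpositive profit even when every singleton is profitable, and the resolution is precisely the subadditivity fallback to $\emptyset$ above. Two secondary points also need checking: that each inductive step invokes Lemma~\ref{lemma:invariance} at most once, so the additive error grows by only $\eps K/n$ per iteration and reaches the claimed $\eps k K/n$ after $k$ iterations; and that the ``keep the set with smaller $\textbf{N}$'' rule is exactly what makes Lemma~\ref{lemma:invariance} applicable to whatever set currently occupies the relevant bin.
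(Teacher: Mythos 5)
Your argument is correct and follows the paper's proof almost line for line: the same induction on $k$, the same base case with $\emptyset$, and the same case split on whether $k\in T$, each resolved by a single application of Lemma~\ref{lemma:invariance} to whatever set currently occupies the relevant bin. The one place you genuinely add something is the subcase $\textnormal{profit}(S\cup\{k\})\le 0$: the paper's proof asserts that the algorithm ``will hash'' $S\cup\{k\}$ whenever it is feasible, silently ignoring the positivity test in the inner loop, whereas you correctly observe that this test can discard $S\cup\{k\}$ and repair the induction via subadditivity of profit over disjoint unions, falling back on the permanently present $\emptyset$ with extension $T\setminus\{k\}$ --- a small but real gap in the paper's write-up that your version closes.
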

\begin{proof}
Proved in Appendix~\ref{appendixB}.
\end{proof}

\begin{theorem} \label{thm:FPTAS-pdp}
Algorithm~\ref{alg:designer-FPTAS} is a FPTAS for the Platform Design Problem. 
\end{theorem}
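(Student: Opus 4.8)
The plan is to obtain the theorem as a short assembly of the two invariants already isolated — Lemma~\ref{lemma:invariance} (same-bin domination) and Lemma~\ref{lemma:induction} (the running guarantee maintained by the loop) — together with a separate running-time count. For the correctness half, note first that every set inserted into SET passes the test ``the Agent adopts all of $S'$ and $\textnormal{profit}(S')>0$'', so every stored set — in particular the one returned — is feasible and has positive profit, hence is a legitimate PDP solution; and the preprocessing before the algorithm has ensured $K>0$ and $K\le \textnormal{OPT}\le nK$ (with the degenerate case $\textnormal{OPT}=0$, i.e.\ no state $i$ with $\textnormal{profit}(\{i\})>0$, handled by returning $\emptyset$). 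Now instantiate Lemma~\ref{lemma:induction} at $k=n$: since $\{n+1,\dots,n\}=\emptyset$, it says that after the final iteration the table contains a feasible set $S$ with $\textnormal{profit}(S)\ge \textnormal{OPT}-\eps n\cdot K/n=\textnormal{OPT}-\eps K$. The algorithm returns the set $\tilde S$ maximizing the first hash coordinate $\lceil \textnormal{profit}(\cdot)/(\eps K/2n)\rceil$; since $\tilde S$'s coordinate is at least that of $S$, we get $\textnormal{profit}(\tilde S) > \textnormal{profit}(S)-\eps K/2n \ge \textnormal{OPT}-\eps K(1+\tfrac1{2n})$. Because $\textnormal{OPT}\ge K$, this additive error is at most $\eps(1+\tfrac1{2n})\textnormal{OPT}$, so running the algorithm with parameter $\eps/2$ in place of $\eps$ (a harmless constant-factor change of precision) yields a genuine $(1-\eps)$-approximation.

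For the running time, one outer pass over $k\in[n]$ scans the occupied bins of a snapshot of SET taken at the start of iteration $k$ — which prevents the newly created sets $S\cup\{k\}$ from interfering with the scan — and, per bin, runs the Agent's greedy algorithm (Algorithm~\ref{alg:agent-greedy}) once to test feasibility and evaluates $\textnormal{profit}$ once, i.e.\ $\textnormal{poly}(n)$ work plus $O(1)$ hash operations. Thus the total time is $\textnormal{poly}(n)\cdot|\textnormal{SET}|$, and it remains to bound $|\textnormal{SET}|$, the number of attainable hash triples. The first coordinate lies in $[0,\lceil \textnormal{OPT}/(\eps K/2n)\rceil]\subseteq[0,\lceil 2n^2/\eps\rceil]$ using $\textnormal{OPT}\le nK$. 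The second, $\lceil P_1(S)/(\eps K/2n)\rceil$, is bounded since $P_1(S)=\textnormal{profit}(S)+\sum_{i\in S}\textnormal{cost}_i\le nK+n\cdot rK$ and $r$ is polynomially bounded by hypothesis. The third, $\mathbf{D}(S)/\delta=\bigl(\sum_{i\in S}z_i\bigr)/\delta$, takes only $\textnormal{poly}(n)/\delta$ many values (the text notes $O(1/\delta)$) because the discretization assumption makes each $z_i$ a multiple of $\delta$ and the sum polynomially bounded — this is exactly where the quantization of the $z_i$ and the assumption that $1/\delta$ is polynomially bounded get used. Multiplying the three ranges, $|\textnormal{SET}|$ is polynomial in $n$, $1/\eps$, $1/\delta$ and the remaining numerical parameters, so Algorithm~\ref{alg:designer-FPTAS} runs in polynomial time; together with the correctness half, it is an FPTAS.

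The substantive work is all in Lemmas~\ref{lemma:invariance} and~\ref{lemma:induction}, which I am taking as given: the former shows that replacing two same-bin sets by the one with smaller $\mathbf{N}$ neither destroys feasibility of any extension by later states nor costs more than $\eps K/n$ in profit — leaning on the flower structure and on the fact that same-bin sets agree in $\mathbf{D}$, so every common extension has the same denominator and the profit perturbation is governed solely by the $P_1$- and cost-slacks that the first two hash coordinates pin down — and the latter propagates this through an induction on $k$ that shadows the greedy order, keeping a surviving descendant of the optimal set in the table. Against that backdrop, the main obstacle in the theorem proper is the running-time bookkeeping: verifying that all three hash coordinates, not merely the profit one, have polynomially bounded range, which is precisely why the hypotheses $\textnormal{cost}_i\le rK$ and $z_i\in\delta\mathbb{Z}$ appear in the setup. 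The remaining point — converting the additive $\Theta(\eps K)$ loss into a multiplicative $(1-\eps)$ guarantee via $\textnormal{OPT}\ge K$, while checking that the output set is feasible and positive-profit so that comparing it to $\textnormal{OPT}$ is legitimate — is routine.
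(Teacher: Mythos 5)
Your proposal is correct and follows essentially the same route as the paper: invoke Lemma~\ref{lemma:induction} at $k=n$ to get a table entry within $\epsilon K$ of OPT, convert the additive loss to a multiplicative $(1-\epsilon)$ guarantee via $\textnormal{OPT}\geq K$, and bound the three hash-coordinate ranges by $O(n^2/\epsilon)$, $O(rn^2/\epsilon)$, and $O(n/\delta)$ for the running time. Your extra care about the return step (the algorithm maximizes the rounded first hash coordinate rather than the exact profit, costing an additional $\epsilon K/2n$ absorbed by rescaling $\epsilon$) is a small tightening the paper glosses over, but it does not change the argument.
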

\begin{proof}
Lemma~\ref{lemma:induction} for $k=n$ tells us that at the end, the table contains a set $S$ whose profit is within $\epsilon K$ of OPT.
Since $OPT \geq K$, the profit of $S$ is at least $(1 -\epsilon) OPT$.

Regarding the complexity of the algorithm,
note that the three dimensions of the hash table have respectively size $O(n^2/\epsilon)$
(since the maximum profit is at most $nK$),
$O(r n^2/\epsilon)$, and $n/\delta$.
In every iteration the algorithm spends time proportional to the number of sets stored in the table. In particular, the algorithm only needs linear time to check the feasibility of each $S'$ as well as calculate $\textbf{N}(S')$ and profit$(S')$. Thus, the total time is polynomial in $n$ and $\frac{1}{\epsilon}$.
\end{proof}

It turns out that an FPTAS is the best we could hope for, even if all $z_i=1$:



\begin{theorem}\label{thm:hardness}
The PDP in the flower case is NP-complete. 
\end{theorem}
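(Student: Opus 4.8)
The plan is to prove membership in NP directly and NP-hardness by a reduction from \textsc{Subset Sum}. Membership is the easy half: a certificate is the set $S$ of states at which the Designer builds platforms; given $S$ one verifies feasibility ($\textnormal{Agent}(S)=S$) by simulating the greedy algorithm of Theorem~\ref{thm:greedy-is-opt}, evaluates $\textnormal{profit}(S)$ directly, and compares it against the target threshold --- all in polynomial time. So the content is the hardness reduction. Take a \textsc{Subset Sum} instance: positive integers $s_1,\dots,s_n$ with $\sum_i s_i>t$ and (w.l.o.g.) each $s_i\le t$; the question is whether some $S\subseteq[n]$ has $\sum_{i\in S}s_i=t$.

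I would build a flower MDP in which (i) \emph{every} subset of states is feasible, so the Designer effectively ranges over all of $2^{[n]}$, and (ii) $\textnormal{profit}(S)$ depends on $S$ only through $\zeta:=\sum_{i\in S}s_i$, equalling $g(\zeta)$ for a strictly concave $g$ whose unique maximizer over $[0,\infty)$ is $\zeta=t$. Concretely, put one state per integer, with $q_i=\tfrac12$ and $p_i=\tfrac1n$ (hence $\lambda_i=\tfrac2n$ and $B=1+\sum_i\lambda_i=3$), and take $y_i=\tfrac12-\tfrac1{ns_i+2}\in(0,\tfrac12)$, which makes $z_i=\tfrac{p_i}{1-q_i-y_i}-\tfrac{p_i}{1-q_i}=s_i$ and $\tfrac{p_i}{1-q_i-y_i}=\lambda_i+z_i=\tfrac2n+s_i$. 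Set $\cL{i}=\cP{i}=1$, so $\phi(i)=\ceff{i}=1$ for all $i$ while $A=\sum_i\lambda_i\cL{i}=2$; then along any prefix the Agent's running average equals $\tfrac{2+\zeta'}{3+\zeta'}<1$, so by the mediant inequality (Lemma~\ref{lemma:1}) the greedy algorithm of Algorithm~\ref{alg:agent-greedy} never stops early and adopts every offered platform --- every $S$ is feasible. Finally set the Designer's data to $d_i=\tfrac{(t+3)^2 s_i}{\,2/n+s_i\,}$ and $\textnormal{cost}_i=3s_i$; then $d_i\cdot\tfrac{p_i}{1-q_i-y_i}=(t+3)^2 s_i$, so
\[
\textnormal{profit}(S)\;=\;\frac{(t+3)^2\,\zeta}{3+\zeta}\;-\;3\zeta\;=:\;g(\zeta).
\]

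Since $g''(\zeta)=-6(t+3)^2/(3+\zeta)^3<0$ and $g'(\zeta)=0$ exactly at $\zeta=t$, with value $g(t)=t^2$, we get $\max_{S\subseteq[n]}\textnormal{profit}(S)=\max\{g(\zeta):\zeta\text{ realizable as a subset sum of the }s_i\}$, which equals $t^2$ if some subset sums to $t$ and is strictly smaller otherwise. Hence the \textsc{Subset Sum} instance is a yes-instance iff the constructed PDP instance admits Designer profit $\ge t^2$. Every number produced has polynomial bit-length; the flower constraints ($\sum_i p_i=1$, $0<q_i<1$, $0<y_i<1-q_i$, $z_i>0$, and irreducibility of the reachable chain since all $p_i>0$) hold by construction; so this is a polynomial-time reduction, and together with the NP-membership argument it yields NP-completeness.

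The one place that needs genuine care on this route is (ii): one must confirm that forcing every set to be feasible does not accidentally make the Designer's problem polynomial --- it does not, precisely because the \textsc{Subset Sum} data is absorbed into the nonlinear ratio $\tfrac{(t+3)^2\zeta}{3+\zeta}$ rather than a linear objective --- and that the strict concavity of $g$ is genuinely used (so that the maximizer must land on the integer $t$). Where I expect the real difficulty to lie is the stronger claim made just before the theorem, that hardness persists even when all $z_i=1$: there the concavity trick is unavailable, so instead one keeps $z_i=1$, encodes the $s_i$ in the $\phi(i)$'s, and uses the greedy \emph{stopping} rule of Algorithm~\ref{alg:agent-greedy} to cut the feasible family down to exactly the sets with $\sum_{i\in S}s_i\le t$; verifying that the stopping inequality first binds at the correct prefix of every set (and not earlier) is the delicate step, after which a linear objective increasing in $\zeta$ together with a large $B$ finishes the argument as before.
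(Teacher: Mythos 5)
Your reduction is correct for the theorem as literally stated, but it takes a genuinely different route from the paper's and proves a somewhat weaker fact than the paper intends. You reduce from \textsc{Subset Sum}, arrange that \emph{every} subset is feasible, encode the input numbers in the $z_i$'s, and exploit strict concavity of $\textnormal{profit}(S)=g\bigl(\sum_{i\in S}z_i\bigr)$ so that the optimum pins down the target sum; the arithmetic all checks out ($\lambda_i=2/n$, $B=3$, $z_i=s_i$, utility $\tfrac{2+\zeta}{3+\zeta}<1=\phi(i)$ so the greedy algorithm never stops, and $g(\zeta)=\tfrac{(t+3)^2\zeta}{3+\zeta}-3\zeta$ is uniquely maximized at $\zeta=t$ with value $t^2$). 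The paper instead reduces from \textsc{Partition} with all $z_i=1$, encodes the numbers in the potentials $\phi(i)=\phi(s)+b_i$, and uses the Agent's feasibility constraint --- whether a low-potential ``special'' petal carrying almost all of the Designer's reward survives the greedy stopping rule --- as the combinatorial bottleneck. The difference matters: the sentence introducing the theorem claims hardness \emph{even when all $z_i=1$}, i.e.\ within the quantized regime where the FPTAS of Theorem~\ref{thm:FPTAS-pdp} applies, which is exactly what makes the FPTAS ``the best one could hope for.'' Your instance has $z_i=s_i$, exponentially large in the input size, so it lies outside that regime and does not rule out an exact polynomial-time algorithm for the quantized case.

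You correctly identify this as the remaining difficulty and your sketch for the $z_i=1$ case is essentially the paper's strategy, but as written it has a gap: with all $z_i=1$ the profit denominator is $B+|S|$, so a ``linear objective increasing in $\zeta$'' does not by itself localize the optimum --- one must simultaneously control the cardinality $|S|$. The paper handles this by the standard preprocessing $b_i=H+a_i$ (plus $n$ padding elements equal to $H$) so that any solution of the derived \textsc{Partition} instance has exactly $n$ elements, which fixes the denominator and lets the feasibility inequality $\phi(s)\ge\textnormal{utility}^{\textnormal{Agent}}(S)$ collapse to $\sum_{i\in S}b_i\le\tfrac12\sum_i b_i$ while the profit forces the reverse inequality. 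If you want your write-up to support the claim actually made in the text, you should carry out that version rather than the concavity argument.
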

\begin{proof}
Proved in Appendix~\ref{appendixB}.
\end{proof}

\section{An Algorithm for Many Agents} \label{sec.manyAgents}

We generalize the FPTAS of the previous section to the case with $k$ Agents, each with their own flower Markov chain. Our algorithm is polynomial runtime for constant $k$, and is exponential if $k$ is allowed to vary --- hence, it is perhaps more natural to think of $k$ as the number of types in a finite-support distribution of agents, where the number of types may naturally be a small constant in settings of interest. 

We will use the notation of Section \ref{sec:agent}, with an additional
subscript $i$ for each Agent $i$.
Thus, for example $p_{ij}, q_{ij}, y_{ij};  j \in [n]$  denote the parameters of the Markov chain of Agent $i$,  $\phi_i(j)$ denotes
the potential of state $j$ for Agent $i$.
The utility of Agent $i$ if he adopts the platforms in a set $S$
of states is  
$u_i(S) = \frac{A_i + \sum_{j \in S} z_{ij} \phi_i(j)}{B_i + \sum_{j \in S} z_{ij}}$.

The Designer will offer platforms for a set $S$ of states.
If the platform at state $j$ is adopted by Agent $i$, then
the Designer gets reward at a rate $d_{ij}$, i.e. gets
reward equal to $d_{ij}$ times the fraction of the time
that Agent $i$ spends at state $j$.
The cost of building a platform for state $j$ is $\textnormal{cost}_j$.
Thus, the Designer's profit function is:
\[
\textnormal{profit}(S) := \sum_i \frac{ \sum_{j \in \textnormal{Agent}_i(S)} d_{ij}\cdot\frac{p_{ij}}{1 - q_{ij} - y_{ij}}}{B_i + \sum_{l \in \textnormal{Agent}_i(S)} z_{il}} - \sum_{j \in S} \textnormal{cost}_j
\]
where $\textnormal{Agent}_i(S)$ is the set of states
that Agent $i$ chooses when offered platforms in the set $S$,
i.e. the set chosen by the greedy algorithm of Section 2.

We will assume in this section that, besides the parameters $z_{ij}$, also
the potentials $\phi_i(j)$ are quantized.
That is, we assume that both the $z_{ij}$'s and the
$\phi_i(j)$'s are  polynomially bounded integer multiples of some small amounts; i.e., 
each $z_{ij}$ is of the form $l_{ij} \delta$ for some integer
$l_{ij} \leq M$ and some $\delta$, with $M$ polynomially bounded, and similarly
each  $\phi_i(j)$ is of the form $l'_{ij} \delta'$ for some 
integer $l'_{ij} \leq M$ and some $\delta'$. 
This implies then that the numerator and denominator 
of the utility function $u_i(S)$ of an Agent for 
the various sets $S$ can take a polynomial number of possible values.
We will show that under this assumption, the optimal
solution can be computed in polynomial time for a fixed number of Agents.

For each Agent $i$, let $\Phi_i = \{ \phi_i(j) | j \in [n] \} \cup \{\infty\}$.
Let ${\cal D}_i = \{ B_i + l \delta | l \in [nM] \}$, 
${\cal N}_i = \{ A_i + l \delta \delta' | l \in [nM^2] \}$.
Note that $|\Phi_i| , |{\cal D}_i|, |{\cal N}_i|$ are polynomially bounded
by our assumption. By the definitions,
for every subset $S$ of states, the numerator of
the utility $u_i(S)$ is in ${\cal N}_i$ and the denominator is in ${\cal D}_i$. 
Let $\Phi = \Pi_{i=1}^k \Phi_i$,
${\cal D} = \Pi_{i=1}^k {\cal D}_i$, 
and ${\cal N} = \Pi_{i=1}^k {\cal N}_i$.

For any $\theta_i \in \Phi_i$, 
define $Q_i(\theta_i) = \{j | \phi_i(j) \geq \theta_i \}$.
For any tuple $\theta \in \Phi$ and tuple $D \in {\cal D}$,
define a corresponding value coefficient $c_j$ for state $j$
to be 
$$c_j(\theta,D) = \sum_{i: j \in Q_i(\theta_i)}  \frac{d_{ij} p_{ij}}{(1-q_{ij}-y_{ij})D_i }  -\textnormal{cost}_j$$

That is, the summation in the above formula includes only
those $i \in [k]$ such that $j \in Q_i(\theta_i)$.

The algorithm is given below.
It uses dynamic programming.
For every tuple $\theta \in \Phi$ and  $D \in {\cal D}$,
it computes an optimal set $S$ such that, if the Designer offers
platforms in the subset $S$ of states, then Agent $i$
selects all states of $S$ that have $\phi_i(j) \geq \theta_i$
(i.e. $\textnormal{Agent}_i(S) = S \cap Q_i(\theta)$),
and the denominator of $u_i(S)$ is $D_i$.
The algorithm then returns the best $S$ that it finds
over all tuples $\theta \in \Phi$ and  $D \in {\cal D}$.

For every tuple $\theta \in \Phi$ and  $D \in {\cal D}$,
the algorithm processes the states in (arbitrary) order $1, \ldots, n$.
It employs a hash table $H$ indexed by two $k$-tuples $a, b$
of integers, where $a \in ([M^2])^k$, $b \in M^k$, represent
respectively the integer parts of the numerators and denominators
of the utility functions of the Agents for a set.
Each entry $H(a,b)$ of the hash table is either empty or
contains a subset of the states processed so far that hashes into this slot.
A set $S$ of states hashes into the slot $(a,b)$ where
$a_i = \sum_{j \in S \cap Q_i(\theta_i)} \frac{z_{ij} \phi_i(j)}{\delta \delta'}$ and 
$b_i = \sum_{j \in S \cap Q_i(\theta_i)} \frac{z_{ij}}{\delta}$
for all Agents $i \in [k]$.
We can implement $H$ by a search data structure that contains
only the slots $(a,b)$ that are nonempty and for each one of them
records the corresponding set $S$ and its value with respect
to $\theta, D$.
We define the {\em value} of a set $S$ to be
$\textnormal{value}_{(\theta,D)}(S) = \sum_{j \in S} c_j(\theta,D)$.

Initially the hash table $H$ contains only the empty set in the slot $(0,0)$.
After the algorithm processes all the states, it
examines the slots $(a,b)$ that are consistent with the
pair $(\theta, D)$ in the following sense.
For each $i \in [k]$, let 
$\theta'_i =\max_{j} \{ \phi_i(j) | \phi_i(j) < \theta_i \}$,
i.e. the next smaller value of a potential $\phi_i(j)$
after $\theta_i$; if $\theta_i = \infty$, then $\theta'_i$ is
the maximum $\phi_i(j)$, and if $\theta_i$ is the smallest
$\phi_i(j)$, then set $\theta'_i =-1$.
We say that the pair $(a,b)$ is {\em consistent} with $(\theta,D)$
if $D_i = B_i + b_i \delta$, and 
$\theta_i > \frac{A_i + a_i \delta \delta'}{D_i} \geq \theta'_i$ for all $i \in [k]$.
The algorithm sets $S(\theta,D)$ to be the set $H[a,b]$
with the largest $\textnormal{value}_{(\theta,D)}$
among the consistent slots $(a,b)$.
At the end, 
after the algorithm has processed all the pairs $(\theta,D)$,
it returns the set $S(\theta,D)$ with the largest value.

\begin{algorithm}
\DontPrintSemicolon 
\KwIn{The parameters of the PDP: transition probabilities, utility and cost coefficients for the Agents and the Designer}
\KwOut{An optimal subset of states $S^*$ for which to deploy platforms.}

\For{\textnormal{each } $\theta \in \Phi, D \in {\cal D}$} {
Initialize the hash table $H$ to contain only the empty set in the slot $(0,0)$\;
	\For{$t \in [n]$} {
    	\For{ \textnormal{each nonempty slot } $(a,b)$ \textnormal{of } $H$}{
       	 	$S= H(a,b)$;
        	$S' := S \cup \{t\}$\;
            \uIf{\textnormal{hash}$(S') \in \textnormal{H}$}{
                $\hat{S} := H[\textnormal{hash}(S')]$\;
                \If{$\textnormal{value}_{(\theta,D)}(\hat{S}) < \textnormal{value}_{(\theta,D)}(S')$}{
                      $H[\textnormal{hash}(S')] := S'$\;
                }
            }
            \uElse{
                $H[\textnormal{hash}(S')] := S'$\;
            }
        }
    }

$S(\theta,D) := \argmax{H(a,b)} \{ \textnormal{value}_{(\theta,D)}(H(a,b))~~ |~~(a,b) \textnormal{ is consistent with }(\theta,D) \} $\;
}
\Return{ $\argmax{S(\theta,D)} \{ \textnormal{value}_{(\theta,D)}(S(\theta,D)) ~~|~~ (\theta, D) \in (\Phi, {\cal D}) \} $ }\;
\caption{{\sc Designer's multiAgent algorithm for the PDP}} 
\label{alg:Designer-multi}
\end{algorithm}

\begin{lemma} \label{lem.multi1}
For every pair $(\theta,D) \in (\Phi, {\cal D})$,
if a set $S$ hashes into a slot $(a,b)$ that is consistent
with $(\theta,D)$, 
then $\textnormal{value}_{(\theta,D)}(S) = \textnormal{profit}(S)$.
In particular, the set $S(\theta,D)$ selected by the algorithm (if any) 
satisfies $\textnormal{value}_{(\theta,D)}(S(\theta,D)) = \textnormal{profit} ( S(\theta,D))$.
\end{lemma}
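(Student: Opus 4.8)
The plan is to reduce the identity to one structural fact about each Agent's greedy run and then finish with routine bookkeeping. Fix a pair $(\theta,D)$ and a set $S$ that hashes into a slot $(a,b)$ consistent with $(\theta,D)$. I claim it suffices to show that for every Agent $i$ one has $\textnormal{Agent}_i(S) = S \cap Q_i(\theta_i)$ and $B_i + \sum_{l \in \textnormal{Agent}_i(S)} z_{il} = D_i$. Granting this, I would substitute $\textnormal{Agent}_i(S) = S \cap Q_i(\theta_i)$ and the denominator value $D_i$ into the definition of $\textnormal{profit}(S)$, interchange the two summations (using that $j \in S \cap Q_i(\theta_i)$ iff $j \in S$ and $j \in Q_i(\theta_i)$), and regroup by state, obtaining $\textnormal{profit}(S) = \sum_{j \in S}\big( \sum_{i\,:\,j \in Q_i(\theta_i)} \frac{d_{ij} p_{ij}}{(1-q_{ij}-y_{ij})D_i} - \textnormal{cost}_j \big) = \sum_{j\in S} c_j(\theta,D) = \textnormal{value}_{(\theta,D)}(S)$.

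To establish the structural fact I would fix $i$ and set $T := S \cap Q_i(\theta_i)$, the states of $S$ of potential at least $\theta_i$. Because the hash slot records $b_i\delta = \sum_{j\in T} z_{ij}$ and $a_i\delta\delta' = \sum_{j\in T} z_{ij}\phi_i(j)$, consistency yields $D_i = B_i + \sum_{j\in T} z_{ij}$ --- which is exactly the denominator of $u_i(T)$ --- so that $\tfrac{A_i + a_i\delta\delta'}{D_i} = u_i(T)$, and the consistency window becomes $\theta'_i \le u_i(T) < \theta_i$. Listing the states of $S$ by decreasing potential, $T = \{j_1,\dots,j_m\}$ is a prefix of the list and every later state has potential $< \theta_i$, hence $\le \theta'_i$. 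Writing $\emptyset = U_0 \subsetneq U_1 \subsetneq \cdots \subsetneq U_m = T$ for the successive prefixes, the crux is to prove by induction along this chain that $u_i(U_t) < \theta_i$ for all $t \le m$: if $u_i(U_t) \ge \theta_i$ for some $t$, then each of the states $j_{t+1},\dots,j_m$ added afterwards has potential $\ge \theta_i$, and the Mediant Inequality (Lemma~\ref{lemma:1}), which forces $u_i(U\cup\{j\})$ to lie between $u_i(U)$ and $\phi_i(j)$, would propagate $u_i(U_s)\ge\theta_i$ all the way to $s=m$, contradicting $u_i(T)<\theta_i$. With this in hand the execution of Algorithm~\ref{alg:agent-greedy} on $S$ is pinned down: at each step $t<m$ we have $u_i(U_t) < \theta_i \le \phi_i(j_{t+1})$, so the algorithm adopts $j_{t+1}$ and thereby builds up all of $T$; at the first state $j_{m+1}\notin T$ (if any) we have $\phi_i(j_{m+1}) \le \theta'_i \le u_i(T)$, so the greedy stopping test fires and no further state is adopted. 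I would dispatch the corner cases $T=S$ (the greedy run simply exhausts $S$) and $\theta_i=\infty$ (where $T=\emptyset$) directly from the consistency inequalities, noting that the ``stop on equality'' convention of the greedy test matches the non-strict side $u_i(T)\ge\theta'_i$, and I would strip states $j$ with $z_{ij}=0$ from Agent $i$'s instance at the outset so the Mediant Inequality applies. This gives $\textnormal{Agent}_i(S) = T = S\cap Q_i(\theta_i)$ with denominator $D_i$, completing the structural fact.

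For the ``in particular'' claim I would simply observe that the algorithm only ever sets $S(\theta,D)$ to a set stored in a slot $(a,b)$ that is consistent with $(\theta,D)$, and that every set held in the hash table hashes into its own slot; hence $S(\theta,D)$ meets the hypothesis of the first part, so $\textnormal{value}_{(\theta,D)}(S(\theta,D)) = \textnormal{profit}(S(\theta,D))$. I expect the main obstacle to be exactly the structural fact --- specifically the monotonicity-type argument via the Mediant Inequality showing that the consistency window $[\theta'_i,\theta_i)$ determines the precise stopping point of Agent $i$'s greedy algorithm; once that is in place, the passage from $\textnormal{profit}$ to $\textnormal{value}$ is merely a rearrangement of sums.
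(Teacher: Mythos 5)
Your proposal is correct and follows essentially the same route as the paper's own proof: consistency of the slot forces Agent $i$'s greedy algorithm to adopt exactly $S \cap Q_i(\theta_i)$ with denominator $D_i$, after which $\textnormal{profit}(S) = \textnormal{value}_{(\theta,D)}(S)$ is a rearrangement of sums. The only difference is that the paper asserts the greedy algorithm's behavior in one sentence, whereas you justify it explicitly via the Mediant Inequality and the prefix induction --- a worthwhile filling-in of detail, not a different argument.
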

\begin{proof}
Proved in Appendix \ref{appendixC}.
\end{proof}


We remark incidentally that if a slot is not consistent 
with $(\theta,D)$,
then the value of its set may not be equal to its profit
(the profit may be higher or lower).

\begin{lemma} \label{lem.multi2}
Let $S^*$ be an optimal solution to the Platform Design Problem,
and let $\theta_i = \min_{j \in \textnormal{Agent}_i(S^*)} \{ \phi_i(j)\}$, $D_i = B_i + \sum_{j \in \textnormal{Agent}_i(S^*)} z_{ij}$.
Then, in the iteration for the pair $(\theta,D)$, the algorithm
selects a set $S(\theta,D)$, and the set has
$\textnormal{profit}(S(\theta,D)) \geq \textnormal{profit}(S^*)$.
\end{lemma}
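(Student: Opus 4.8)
The plan is to exhibit, in the iteration for the pair $(\theta,D)$ built from $S^*$, a consistent slot of the hash table whose stored set has value, and hence by Lemma~\ref{lem.multi1} profit, at least $\textnormal{profit}(S^*)$; the slot into which $S^*$ itself hashes will be that slot. Three things must be checked: (a) $\textnormal{value}_{(\theta,D)}(S^*)=\textnormal{profit}(S^*)$; (b) the dynamic program leaves, in $S^*$'s slot after all $n$ states are processed, a set of value $\ge \textnormal{value}_{(\theta,D)}(S^*)$; and (c) $S^*$'s slot is consistent with $(\theta,D)$, so the algorithm actually scans it.

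For (a) and the bookkeeping behind (c): since Agent $i$ responds with the greedy algorithm of Section~\ref{sec:agent}, $\textnormal{Agent}_i(S^*)$ is a $\phi_i$-decreasing prefix of $S^*$, so with $\theta_i=\min_{j\in\textnormal{Agent}_i(S^*)}\phi_i(j)$ we get $\textnormal{Agent}_i(S^*)=S^*\cap Q_i(\theta_i)$ (breaking ties among equal potentials consistently with the greedy's order). The hash coordinates $a_i^*,b_i^*$ of $S^*$ then satisfy $B_i+b_i^*\delta=D_i$ and $(A_i+a_i^*\delta\delta')/D_i=u_i(\textnormal{Agent}_i(S^*))$, Agent $i$'s optimal utility; expanding $\textnormal{value}_{(\theta,D)}(S^*)=\sum_{j\in S^*}c_j(\theta,D)$ and using $S^*\cap Q_i(\theta_i)=\textnormal{Agent}_i(S^*)$ collapses it to $\sum_i\sum_{j\in\textnormal{Agent}_i(S^*)}\frac{d_{ij}p_{ij}}{(1-q_{ij}-y_{ij})D_i}-\sum_{j\in S^*}\textnormal{cost}_j=\textnormal{profit}(S^*)$, which is (a).

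For (b) I would use a monotone DP invariant: after states $1,\dots,t$ have been processed, the table contains a set hashing into the same slot as $S^*\cap\{1,\dots,t\}$ and of value at least $\textnormal{value}_{(\theta,D)}(S^*\cap\{1,\dots,t\})$. This follows by induction on $t$, because adding a state $t+1$ changes both the hash coordinates and $\textnormal{value}_{(\theta,D)}$ by amounts that depend only on $t+1$ and not on the base set, and each slot retains the stored set of largest value; when $t+1\in S^*$ the witness for $S^*\cap\{1,\dots,t\}$ extends by $t+1$, and when $t+1\notin S^*$ it is carried over (possibly replaced by something at least as good). At $t=n$ this gives a set in $S^*$'s slot of value $\ge \textnormal{value}_{(\theta,D)}(S^*)=\textnormal{profit}(S^*)$.

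The main obstacle is (c): proving that $S^*$'s slot $(a^*,b^*)$ passes the algorithm's consistency test, i.e. $\theta_i>u_i(\textnormal{Agent}_i(S^*))\ge\theta'_i$ for every $i$. The left inequality is the greedy's stopping behaviour: by the Mediant Inequality (Lemma~\ref{lemma:1}), adopting the last, lowest-potential (value $\theta_i$) state while the running utility is below $\theta_i$ keeps the utility strictly below $\theta_i$; and if $\textnormal{Agent}_i(S^*)=\emptyset$ then $\theta_i=\infty$ and there is nothing to prove. The right inequality records that the greedy declined to adopt anything of potential below $\theta_i$ — but the subtle point is that $\theta'_i$ is the next potential below $\theta_i$ over \emph{all} of $[n]$, so one must rule out a state whose potential lies strictly between $u_i(\textnormal{Agent}_i(S^*))$ and $\theta_i$; the cleanest route is to take $\theta_i$ to be the smallest potential exceeding Agent $i$'s optimal utility, which equals $\min_{j\in\textnormal{Agent}_i(S^*)}\phi_i(j)$ in the generic case and in general only enlarges the complement of $Q_i(\theta_i)$ among non-offered states, leaving $S^*\cap Q_i(\theta_i)$, and hence everything in (a) and (b), unchanged. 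Once (c) holds, $(a^*,b^*)$ is among the slots over which the algorithm maximizes when it forms $S(\theta,D)$, so $S(\theta,D)$ is defined and $\textnormal{value}_{(\theta,D)}(S(\theta,D))\ge\textnormal{value}_{(\theta,D)}(H[a^*,b^*])\ge\textnormal{profit}(S^*)$; since $S(\theta,D)$ sits in a consistent slot, Lemma~\ref{lem.multi1} upgrades this to $\textnormal{profit}(S(\theta,D))\ge\textnormal{profit}(S^*)$.
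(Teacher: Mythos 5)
Your proof follows the same skeleton as the paper's: a dynamic-programming invariant established by induction on the processed states (your step (b) is the paper's induction essentially verbatim), combined with consistency of $S^*$'s slot and Lemma~\ref{lem.multi1} to convert value into profit. The one place you go beyond the paper is step (c), and your concern there is legitimate: the paper disposes of consistency with the single phrase ``from our choice of $\theta$ and $D$,'' but with $\theta_i=\min_{j\in\textnormal{Agent}_i(S^*)}\phi_i(j)$ the right-hand inequality $u_i(\textnormal{Agent}_i(S^*))\ge\theta'_i$ can genuinely fail, because $\theta'_i$ is the largest potential below $\theta_i$ over \emph{all} states, and a state $j_0\notin S^*$ may have $u_i(\textnormal{Agent}_i(S^*))<\phi_i(j_0)<\theta_i$ (the Designer may simply decline to build at $j_0$ for cost reasons). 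Your repair is the right one: take $\theta_i$ to be the smallest element of $\Phi_i$ strictly exceeding Agent $i$'s optimal utility. No state of $S^*$ has potential strictly between that utility and the old $\theta_i$ (the greedy would have adopted it), so $S^*\cap Q_i(\theta_i)=\textnormal{Agent}_i(S^*)$ is unchanged, steps (a) and (b) are unaffected, consistency now holds, and since the algorithm iterates over all of $\Phi$ the downstream Theorem~\ref{thm:multi-Agent-PDP} is untouched. (One wording slip: passing to the smaller threshold \emph{enlarges} $Q_i(\theta_i)$, not its complement; the point is that the enlargement consists only of non-offered states.) Everything else --- the direct verification that $\textnormal{value}_{(\theta,D)}(S^*)=\textnormal{profit}(S^*)$, the mediant-inequality argument for the strict upper bound $u_i<\theta_i$, and the final appeal to Lemma~\ref{lem.multi1} --- is correct and matches the paper's intent.
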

\begin{proof}
Proved in Appendix \ref{appendixC}.
\end{proof}

Optimality follows directly from the lemmas.

\begin{theorem}
\label{thm:multi-Agent-PDP} 
Algorithm \ref{alg:Designer-multi} computes an optimal solution to
the Designer's problem. It runs in polynomial time for fixed number of
Agents, under the stated assumptions on the input parameters.
\end{theorem}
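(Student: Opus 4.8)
The plan is to obtain Theorem~\ref{thm:multi-Agent-PDP} in two parts: correctness, which follows by combining Lemmas~\ref{lem.multi1} and \ref{lem.multi2}, and the running-time bound, which follows from the quantization hypotheses on the $z_{ij}$ and the $\phi_i(j)$.

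For correctness, I would argue as follows. Every set the algorithm can possibly output is some $S(\theta,D)$, and by construction $S(\theta,D)$ is the set stored in a slot $(a,b)$ that is consistent with $(\theta,D)$; hence Lemma~\ref{lem.multi1} applies and $\textnormal{value}_{(\theta,D)}(S(\theta,D)) = \textnormal{profit}(S(\theta,D))$. Consequently the final $\argmax$ over pairs $(\theta,D)$ of $\textnormal{value}_{(\theta,D)}(S(\theta,D))$ is in fact an $\argmax$ of $\textnormal{profit}(S(\theta,D))$, so the returned set $\bar S$ is a genuine design whose Designer profit is the largest among all the $S(\theta,D)$ considered; in particular $\textnormal{profit}(\bar S) \le \textnormal{OPT}$. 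For the matching lower bound, take an optimal solution $S^*$ and let $(\theta,D)$ be the pair it induces as in Lemma~\ref{lem.multi2}; that lemma guarantees the iteration for $(\theta,D)$ does select a set $S(\theta,D)$ and that $\textnormal{profit}(S(\theta,D)) \ge \textnormal{profit}(S^*) = \textnormal{OPT}$. Since $S(\theta,D)$ is one of the candidates in the final $\argmax$, we get $\textnormal{profit}(\bar S) \ge \textnormal{profit}(S(\theta,D)) \ge \textnormal{OPT}$, and therefore $\textnormal{profit}(\bar S) = \textnormal{OPT}$, which establishes optimality.

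For the complexity, I would bound the three sources of blow-up. First, the outer loop runs over all pairs $(\theta,D) \in \Phi \times \mathcal D$; here $|\Phi_i| \le n+1$ trivially, and $|\mathcal D_i|$ is polynomially bounded by the quantization assumption (since $M$ is polynomially bounded), so $|\Phi|\cdot|\mathcal D| = \prod_i |\Phi_i| \cdot \prod_i |\mathcal D_i|$ is polynomial in $n$ for fixed $k$ and exponential in $k$. Second, inside one iteration the hash table $H$ has only polynomially many nonempty slots: its index is a pair of $k$-tuples, and each of the $2k$ coordinates records a scaled partial numerator or denominator of an Agent's utility, which by the quantization assumption ranges over a polynomially bounded set of integers; processing the $n$ states visits each nonempty slot at most once and performs a polynomial amount of work per visit (recomputing the hash and $\textnormal{value}_{(\theta,D)}$ of the augmented set $S'$). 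Third, the closing pass over $H$ that picks the best consistent slot spends $O(k)$ per slot on the consistency test plus a comparison. Multiplying these three factors gives a total running time that is polynomial in the input size for every fixed $k$.

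The only real obstacle left — given that the substantive work already sits in Lemmas~\ref{lem.multi1} and \ref{lem.multi2} — is the careful accounting in the complexity bound: one must verify that the quantization hypothesis simultaneously makes $|\Phi_i|$, $|\mathcal D_i|$, $|\mathcal N_i|$, and all $2k$ coordinates of the hash index polynomially bounded, and be explicit that each Agent contributes its own independent factor, so the exponent of the resulting polynomial is $\Theta(k)$ and ``polynomial time'' holds only when $k$ is held fixed. It is also worth stating explicitly, rather than treating it as obvious, why $\bar S$ is a legitimate design realizing profit $\textnormal{OPT}$ and not merely a table entry carrying a large \emph{value} label — this is exactly the role played by the ``value equals profit on consistent slots'' half of Lemma~\ref{lem.multi1}, and it should be invoked by name.
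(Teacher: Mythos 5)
Your proposal is correct and matches the paper's intent exactly: the paper itself merely remarks that ``optimality follows directly from the lemmas,'' and your two-sided argument (Lemma~\ref{lem.multi1} guaranteeing that every candidate's value is a genuine profit, hence at most $\textnormal{OPT}$, and Lemma~\ref{lem.multi2} supplying the matching lower bound at the pair $(\theta,D)$ induced by an optimal $S^*$) is precisely the intended combination. The complexity accounting via the quantization bounds on $|\Phi_i|$, $|{\cal D}_i|$, $|{\cal N}_i|$ and the hash-table coordinates likewise follows the paper's own size estimates.
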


For one Agent we gave in the previous Section an FPTAS under the weaker
assumption that the $z$ parameters are polynomially bounded, but 
not necessarily the potentials $\phi$. We can show that there is no such FPTAS for two Agents, if the $\phi$ are not restricted.

\begin{theorem} \label{thm:hard2}
Unless P=NP, there is no FPTAS for the Designer's problem with
two Agents if the $\phi_i(j)$ are not restricted to be polynomially bounded.
\end{theorem}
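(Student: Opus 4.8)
The plan is to reduce a weakly NP‑complete numeric problem --- \textsc{Equal‑Cardinality Partition}: given $2m$ positive integers $a_1,\dots,a_{2m}$ in binary with sum $\sigma$, decide whether some $m$ of them sum to $\sigma/2$ --- to the question of deciding whether a two‑Agent flower PDP instance admits a set of platforms of \emph{strictly positive} profit. Since the empty set always has profit $0$, the optimal profit $\mathrm{OPT}$ is always nonnegative, so any FPTAS, run with $\eps=\tfrac12$, returns in polynomial time a feasible set $S$ with $\textnormal{profit}(S)\ge\tfrac12\,\mathrm{OPT}$; as $\textnormal{profit}(S)$ is a rational of polynomial bit‑length that one can evaluate exactly by simulating each Agent's greedy response (Algorithm~\ref{alg:agent-greedy}), one can then read off whether $\mathrm{OPT}>0$. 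Thus it suffices to build, from the \textsc{Equal‑Cardinality Partition} instance, a polynomial‑size two‑Agent flower PDP in which all parameters except the potentials $\phi_i(\cdot)$ are polynomially bounded, the $\phi_i(j)$ are exponentially large (but of polynomial bit‑length, like the $a_j$), and $\mathrm{OPT}>0$ holds exactly when the numbers admit an equal‑cardinality equal‑sum split.

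The reduction uses $2m$ ``item states'', one per $a_j$ (plus $O(1)$ auxiliary states), and exploits the two Agents' unrestricted potentials as two numeric tests. Recall from the greedy algorithm that a set $S$ of offered platforms is feasible for Agent $i$ (i.e.\ $\textnormal{Agent}_i(S)=S$) precisely when, scanning the states of $S$ in decreasing $\phi_i$ order, the running weighted average $\bigl(A_i+\sum z_{ij}\phi_i(j)\bigr)/\bigl(B_i+\sum z_{ij}\bigr)$ never reaches the next potential; this is a downward‑closed family, and --- this is the point where unbounded $\phi$ is essential --- by a careful calibration of $A_i,B_i$, the (polynomially bounded) weights $z_{ij}$, and the exponentially large potentials $\phi_i(j)$, one can make Agent~$1$'s feasible item‑sets be exactly those with $\sum_{j\in S}a_j\le\sigma/2$ and Agent~$2$'s be exactly those with $\sum_{j\in S}a_j\ge\sigma/2$ among sets of cardinality $m$, while forcing every set of more than $m$ item‑states to be rejected in part by at least one Agent. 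Consequently a set of $m$ item‑states is feasible for both Agents iff its $a$‑sum equals $\sigma/2$. Finally one chooses the Designer's (polynomially bounded) reward rates $d_{ij}$ and costs so that the net profit of a feasible item‑set is positive iff the set has size $m$ --- e.g.\ by loading a fixed overhead that is recovered only by the revenue from $m$ adopted item‑states --- so that $\mathrm{OPT}>0$ iff the partition instance is a \textsc{yes}‑instance and $\mathrm{OPT}=0$ (witnessed by $\emptyset$) otherwise. One also checks the routine facts that a partial‑adoption response (some Agent adopting a proper subset of the offered set) or an off‑target set can never beat profit $0$, so the Designer is effectively confined to the feasible item‑sets analyzed above, and that the underlying flower parameters $p,q,y$ can be chosen so that all the flower constraints ($\sum_j p_{ij}=1$, $0<q_{ij}<1$, $0<y_{ij}<1-q_{ij}$, and $z_{ij}$ polynomially bounded) hold.

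The main obstacle is exactly this calibration: engineering the two Agents so that their greedy acceptance regions cut out the two complementary half‑space constraints with exponentially large coefficients (their intersection on cardinality‑$m$ sets being the single equality $\sum_{j\in S}a_j=\sigma/2$) while keeping the $z_{ij}$ small and all designer‑side data polynomially bounded; the running‑average computations in the greedy must be controlled with exponential precision, which is where most of the work lies, and the rest is bookkeeping. It is also worth noting why two Agents are needed and why the argument is tight: with a single Agent, even unbounded potentials only permute the $\phi$‑order, so the feasible family is essentially a chain of at most $n+1$ sets over which one can optimize directly --- and indeed the FPTAS of Section~\ref{sec:designer} already handles that case --- whereas the simultaneous prefix‑type requirement imposed by two Agents with exponential potentials is what manufactures an exponential‑precision equality constraint, of the kind that defeats every polynomial‑precision algorithm, so that the existence of an FPTAS would put \textsc{Equal‑Cardinality Partition} in P, i.e.\ P=NP.
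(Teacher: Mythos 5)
Your high-level plan --- two Agents whose adoption thresholds encode the two complementary inequalities $\sum_{j\in S}a_j\le\sigma/2$ and $\sum_{j\in S}a_j\ge\sigma/2$, reducing from an equal-cardinality variant of \textsc{Partition} --- is exactly the idea of the paper's proof (which uses a ``special petal'' whose potential $\phi_i(s)$ sits at the threshold $(\sum_j b_j)/2B$ for Agent~1 and the mirrored value for Agent~2, built from $b'_j=2H-b_j$). But your proposal has a genuine defect in how the hypothetical FPTAS is defeated. You reduce to deciding whether $\mathrm{OPT}>0$ versus $\mathrm{OPT}=0$, with ``a fixed overhead that is recovered only by the revenue from $m$ adopted item-states,'' and then run the FPTAS with $\eps=\tfrac12$. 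This gap is unattainable in the flower model. Since all $z_j\ge 0$, the denominator $B_i+\sum_{j\in S}z_{ij}$ is nondecreasing in $S$, so each state's revenue contribution in a set $S$ is at most its contribution as a singleton; hence $\textnormal{profit}(S)\le\sum_{j\in S}\textnormal{profit}(\{j\})$ (the paper records exactly this subadditivity in Section~\ref{sec:designer}). Consequently, if every singleton has non-positive profit --- which your ``overhead'' scheme requires, else the Designer just builds one profitable platform and $\mathrm{OPT}>0$ in the \textsc{no}-case too --- then \emph{every} set has non-positive profit and $\mathrm{OPT}=0$ in the \textsc{yes}-case as well. There is no way to make an $m$-element set profitable while all of its singletons are unprofitable, and attaching the overhead to a high-reward ``detector'' state does not help, since that state alone is already a feasible positive-profit (or, after pricing, zero-profit) singleton. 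So a zero-versus-positive decision gap cannot be engineered here, and the $\eps=\tfrac12$ argument collapses.

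The fix is what the paper actually does: both the \textsc{yes}- and \textsc{no}-cases have large positive optima, but they are separated by a \emph{multiplicative} gap of order $1/\mathrm{poly}(n)$ (concretely, $v^*=8n\cdot\frac{n^2+1}{B+n+1}$ versus at most $(8n-2)\cdot\frac{n^2+1}{B+n}<(1-\frac{1}{8n})v^*$), and one runs the putative FPTAS with $\eps=\Theta(1/n)$ --- still polynomial time --- to distinguish them. Your write-up would need to be restructured around such a gap. Separately, the entire calibration of $A_i,B_i,z_{ij},\phi_i(j)$ that makes the two Agents' greedy responses implement the two half-space tests is deferred in your proposal as ``where most of the work lies''; that calibration \emph{is} the proof, and the paper carries it out explicitly (all $z_{ij}=1$, item potentials $\phi_i(j)=\phi_i(s)+b_j$ resp.\ $\phi_i(s)+b'_j$, special-petal potential at the half-sum). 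Finally, your aside that a single Agent yields a feasible family that is ``essentially a chain'' and directly optimizable is not right --- the single-Agent flower PDP is already NP-hard (Theorem~\ref{thm:hardness}); the role of the second Agent is only to convert the one-sided inequality into an equality so that the gap survives unbounded potentials.
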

\begin{proof}
Proved in Appendix~\ref{appendixC}.
\end{proof}

\section{The PDP Problem in a Competitive Setting} \label{sec:competitive}
 Many platform designers compete in the world today, and it is of great interest to understand the interaction of two or more platform designers with agents.  In this section (in summary) and continuing in the Appendix~\ref{appendixD} (in full detail), we confront the algorithmic problems involved with designer competition, such as the {\em best response problem:} if a Designer is confronted with a situation in which other designers have already deployed several platforms at various states, which platforms should this Designer deploy?  But even before this, we need to address the following:
 
\subsection{The Agent's Problem with Multiple Platforms per State}
We are given a set of available platforms, where each platform
is associated with one state of the flower MDP.
For each available platform $j$, we are given the associated agent's reward and the change in the transition probabilities of the state;
these induce the corresponding parameters $z_j$ and $\phi(j)$ as
in Section \ref{sec:agent}.
The agent will select a subset $S$ of platforms that contains at most
one platform for each state; call such a set `feasible'.
The agent's utility $u(S)$ for a feasible set $S$ is  
$u(S)= \frac{A+\sum_{j \in S}z_j \phi(j) }{B + \sum_{j \in S} z_j}$.
The agent's objective is to select a feasible set $S$ that maximizes $u(S)$.

We first show that a feasible solution $S$ is optimal if and only if it cannot be improved by (1) removing a platform $j$ from $S$, or (2) adding a platform to $S$ or (3) swapping one platform $j \in S$ for another platform $j' \notin S$ associated with the same state.
The first type of change is beneficial if $\phi(j) < u(S)$,
the second type if $\phi(j) > u(S)$ and $S$ does not contain another platform for the same state. The third type of change is beneficial if 
$z_j = z_{j'}$ and $\phi(j) < \phi(j')$, or
$z_j \neq z_{j'}$ and $z_{j'} -z_j$ has the same sign as
the quantity 
$\rho(j,j')-u(S)$, where  $\rho(j,j')= \frac{z_{j'}\phi(j')-z_j \phi(j)}{z_{j'} - z_j}$.

We then identify and remove platforms that are dominated
and thus redundant.
For every state $s$, the nonredundant platforms form a sequence
$j_1, j_2, \ldots, j_k$ which is decreasing in potential,
and increasing in the value of $z$ and
of $z \phi$.
If we map every platform $j$ of the state $s$ to a point 
$(z_{j}, z_j \phi(j))$ on the plane, the sequence yields
a piecewise-linear concave curve $P_s$.
Note that the ratio $\rho(j,j')$ for two platforms $j, j'$ is the
slope of segment $(p_j,p_{j'})$. The slopes are decreasing along
the curve.
For every nonredundant platform $j$ we use $prev(j)$ to denote the
previous platform in the sequence for its state (if it exists, i.e.
$prev(j_i) = j_{i-1}$ if $i>1$), and $next(j)$ the next platform (if $i<k$).
We show the following optimality criterion.

\begin{lemma}\label{lem:multi-criterion}
Let $S$ be a feasible set of nonredundant platforms.
The set $S$ is optimal if and only if for every state $s$, either
(1) $S$ does not contain any platform for $s$ and all platforms $j$ for $s$ have potential $\phi(j) \leq u(S)$, or
(2) the platform $j \in S$ for state $s$ satisfies 
(i) $\rho(prev(j),j) \geq u(S)$ if $prev(j)$ exists, else $\phi(j) \geq u(S)$, and (ii) $\rho(next(j),j) \leq u(S)$ if $next(j)$ exists.
\end{lemma}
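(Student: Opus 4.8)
The plan is to reduce optimality of $S$ to a per-state condition through the parametric (Dinkelbach-style) reformulation of fractional maximization, and then to use concavity of the curve $P_s$ to collapse that per-state condition down to the two neighbouring platforms $prev(j)$ and $next(j)$. Write $u := u(S)$, and use throughout that $z_j>0$ for every platform (platforms with $z_j=0$ map to the origin $(0,0)$ of the plane, coincide with the empty choice, and are removed in the redundancy step); in particular the denominator $B+\sum_{j\in S'}z_j$ is strictly positive for every feasible $S'$. Writing $N(\cdot)$ and $D(\cdot)$ for the numerator and denominator of $u(\cdot)$, we then have, for any feasible $S'$, that $u(S')\le u$ iff $N(S')-u\,D(S')\le 0=N(S)-u\,D(S)$, i.e. iff $S$ maximizes the modular set function $g(S'):=(A-uB)+\sum_{j\in S'}z_j\bigl(\phi(j)-u\bigr)$ over all feasible sets. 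Since feasibility is exactly the partition matroid ``at most one platform per state,'' $g$ is maximized by choosing, independently in each state $s$, the option of largest contribution, where a platform $j$ contributes $z_j(\phi(j)-u)$ and the empty option contributes $0$. Hence $S$ is optimal if and only if in every state $s$ the choice made by $S$ attains $\max\bigl(0,\ \max_{j\text{ at }s} z_j(\phi(j)-u)\bigr)$.

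I would then translate this per-state optimality into the stated inequalities. If $S$ has no platform at $s$, the requirement is that every platform $j$ at $s$ has $z_j(\phi(j)-u)\le 0$, i.e. $\phi(j)\le u$, which is alternative (1). If instead $S$ contains the platform $j=j_i$ of the nonredundant sequence $j_1,\dots,j_k$ for $s$ (so $\phi(j_1)>\cdots>\phi(j_k)$, $0<z_{j_1}<\cdots<z_{j_k}$, $z_{j_1}\phi(j_1)<\cdots<z_{j_k}\phi(j_k)$, and $prev(j_i)=j_{i-1}$, $next(j_i)=j_{i+1}$), then dividing ``$z_{j_i}(\phi(j_i)-u)\ge z_{j_m}(\phi(j_m)-u)$'' by $z_{j_i}-z_{j_m}$ and tracking the sign of that quantity turns it into $\rho(j_m,j_i)\ge u$ when $m<i$ and into $\rho(j_i,j_m)\le u$ when $m>i$, while ``$z_{j_i}(\phi(j_i)-u)\ge 0$'' is just $\phi(j_i)\ge u$.

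Next I would use concavity of $P_s$ to discard all but the adjacent comparisons. Because $P_s$ lies on or above each of its chords, for $m<i$ the point of $j_{i-1}$ lies on or above the chord joining the points of $j_m$ and $j_i$, so the slope from the point of $j_{i-1}$ to the point of $j_i$ is at most the slope of that chord; that is, $\rho(prev(j_i),j_i)\le\rho(j_m,j_i)$ for all $m<i$. Symmetrically $\rho(j_i,j_m)\le\rho(next(j_i),j_i)$ for all $m>i$. Hence the whole family of swap inequalities is equivalent to the single pair $\rho(prev(j_i),j_i)\ge u$ and $\rho(next(j_i),j_i)\le u$, i.e. to (2)(i) (in the case where $prev$ exists) and (2)(ii). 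It then remains only to reconcile the ``no-removal'' inequality $\phi(j_i)\ge u$: when $prev(j_i)$ exists it is automatic, since $\phi(j_i)-\rho(prev(j_i),j_i)=\dfrac{z_{j_{i-1}}\bigl(\phi(j_{i-1})-\phi(j_i)\bigr)}{z_{j_i}-z_{j_{i-1}}}>0$ and therefore $\phi(j_i)>\rho(prev(j_i),j_i)\ge u$; and when $prev(j_i)$ does not exist it is precisely the ``else'' clause $\phi(j)\ge u$ of (2)(i). Every step above is an equivalence, so both directions of the lemma follow (and as a by-product this reproves the three-move optimality criterion stated just before the lemma; alternatively one could take that criterion as given and carry out only the concavity reduction).

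The routine parts — the sign bookkeeping that converts $z_{j_i}(\phi(j_i)-u)\ge z_{j_m}(\phi(j_m)-u)$ into the $\rho$-inequalities (the sign flip in $z_{j_i}-z_{j_m}$ being exactly what makes ``$\ge$'' appear at $prev$ and ``$\le$'' at $next$), and checking that no denominator ever vanishes — I would dispatch in a line each. The one genuinely load-bearing step, and the place to be careful, is the concavity argument that lets us replace ``for every other platform of state $s$'' by ``for $prev(j)$ and $next(j)$ only''; it relies on the redundancy step having already pruned dominated platforms, so that the surviving points are the vertices of a concave piecewise-linear curve whose segment slopes are strictly decreasing.
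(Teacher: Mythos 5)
Your proof is correct, but it reaches the criterion by a genuinely different route from the paper's. The paper first proves a separate local-optimality lemma (Lemma~\ref{lem:localopt}): it takes an optimal competitor $S'$, splits the symmetric difference into the platforms with $z_{j'}=z_j$, $z_{j'}>z_j$, $z_{j'}<z_j$, and aggregates the individual remove/add/swap inequalities via the mediant inequality (Lemma~\ref{lemma:1}) to conclude that a set unimprovable by the three local moves is globally optimal; the criterion of Lemma~\ref{lem:multi-criterion} is then obtained from that local optimality together with the decreasing slopes of the Pareto curve. You instead linearize at the value $u=u(S)$: since $D(S')>0$, optimality of $S$ is equivalent to $S$ maximizing the modular function $g(S')=(A-uB)+\sum_{j\in S'}z_j(\phi(j)-u)$ over the partition matroid ``at most one platform per state,'' which decomposes into independent per-state maximizations; the sign bookkeeping turns these into the $\rho$-inequalities, and the same concavity argument collapses them to the adjacent comparisons with $prev(j)$ and $next(j)$. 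Your Dinkelbach-style reduction makes the local-to-global step transparent (it becomes mere additivity over states) and subsumes Lemma~\ref{lem:localopt} as a by-product, at the cost of introducing the parametric reformulation; the paper's argument is more elementary and stays within the mediant-inequality toolkit it uses throughout, and its three local moves mirror exactly the operations analyzed in the greedy algorithm. Both proofs share the load-bearing concavity step, and you are right to flag that it (and the implicit reduction of the per-state maximum to nonredundant platforms) depends on the pruning of dominated platforms via Lemma~\ref{lem:dom}; your disposal of $z_j=0$ platforms as equivalent to the empty choice is a harmless normalization the paper leaves implicit.
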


We compute an optimal solution using a greedy algorithm
with a different parameter $\psi(j)$ for each nonredundant platform.
If $j$ is the first nonredundant platform in the sequence 
for its state, then
set $\psi(j) = \phi(j)$, otherwise set $\psi(j)= \rho(prev(j),j)$.
Note that $\psi(j) \leq \phi(j)$ for all $j$.
The algorithm is given below.

\begin{algorithm}
\DontPrintSemicolon 
\KwIn{Parameters of the Agent's problem: transition probabilities and utility coefficients in and out for all platforms.}
\KwOut{An optimal feasible subset $S$ of platforms.}
Remove redundant platforms for each state\;
Compute the parameters $\psi$ for the (nonredundant) platforms\;
Sort the platforms in decreasing order $\psi(j)$\;
Initialize $S := \{\}$\;
\For{each platform $j$ in decreasing order of $\psi(j)$} {
  \uIf{$\psi(j) \leq u(S)$} {
  	\Return $S$\;
  }
   \Else{
   	\uIf{$j$ is the first platform for its state} {
    	Update $S := S \cup \left\{j\right\}$\;
  	}
	\Else{
		
			Update $S := S \cup \left\{j\right\} \setminus \{prev(j)\}$\;
    }
  }
}
\Return{$S$}\;
\caption{{\sc Multi-platform Agent's Algorithm}} 
\label{alg:agent-multi}
\end{algorithm}

\begin{theorem}\label{thm:agent-multi-opt}
Algorithm~\ref{alg:agent-multi} returns an optimal feasible solution.
The algorithm runs in time $O(n+m \log m)$, where
$n$ is the number of states and $m$ is the number of platforms.
\end{theorem}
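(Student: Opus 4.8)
The plan is to prove correctness by verifying that the set $\hat S$ returned by Algorithm~\ref{alg:agent-multi} satisfies the optimality criterion of Lemma~\ref{lem:multi-criterion}, and then to bound the running time phase by phase. Morally this is the multi-platform analogue of the prefix-policy argument behind Theorem~\ref{thm:greedy-is-opt}, with the parameter $\psi$ playing the role that $\phi$ plays there.

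First I would record a few invariants of the run. Because the algorithm only adds the first platform of a state to $S$ or swaps one platform of a state for the next one in that state's sequence, $\hat S$ stays feasible and consists of nonredundant platforms, so Lemma~\ref{lem:multi-criterion} applies. Since the slopes $\rho$ --- and hence the $\psi$ --- are strictly decreasing along each state's nonredundant sequence $j_1,j_2,\dots$, the platforms of a fixed state are processed in the order $j_1,j_2,\dots$; and when $j_i$ with $i\ge 2$ is processed, $prev(j_i)=j_{i-1}$ has already been processed and, unless the algorithm has already returned, was added and is still in $S$ (its only possible removal is by the very addition of $j_i$). This justifies the swap step. Next, each add or swap strictly increases the agent's utility: it adds to the numerator and denominator of $u(S)$ quantities whose ratio is $\psi$ of the platform being added --- namely $\phi(j_1)$ for a first platform, or $\rho(j_{i-1},j_i)$ for a swap --- with a positive denominator increment (using $z_{j_i}>z_{j_{i-1}}$ for a swap, $z_{j_1}>0$ otherwise), and the step is taken exactly when that ratio exceeds the current $u(S)$; the mediant inequality (Lemma~\ref{lemma:1}) then gives that $u(S)$ strictly increases and stays strictly below that $\psi$. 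In particular $u(S)$ is monotone non-decreasing throughout the execution.

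The heart of the argument is to upgrade the algorithm's \emph{local} thresholds (each platform is compared to the \emph{current} $u(S)$) to the \emph{global} thresholds of Lemma~\ref{lem:multi-criterion} (comparisons to the \emph{final} $u(\hat S)$). Let $a_1,\dots,a_t$ be the platforms that get added, in order; since platforms are processed in decreasing $\psi$ and each processed platform is either added or triggers termination, $\psi(a_1)\ge\cdots\ge\psi(a_t)$. By the mediant bound, the utility right after adding $a_\ell$ is $<\psi(a_\ell)$; at $\ell=t$ this says $u(\hat S)<\psi(a_t)=\min_\ell\psi(a_\ell)$, so $u(\hat S)<\psi(j)$ for \emph{every} added platform, in particular for every $j\in\hat S$ --- this is precisely condition~(2)(i) of Lemma~\ref{lem:multi-criterion}, since $\psi(j_i)=\phi(j_i)$ when $i=1$ and $\psi(j_i)=\rho(prev(j_i),j_i)$ otherwise. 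For the remaining inequalities, call a platform $c$ a \emph{boundary} platform if $c=next(j_i)$ for the selected $j_i\in\hat S$ of its state, or $c=j_1$ for a state with no selected platform. Such a $c$ is never added: adding $next(j_i)$ would have swapped $j_i$ out, and adding $j_1$ would keep the state permanently occupied. Hence if $c$ was ever processed, its add-test $\psi(c)>u(S)$ failed, so $\psi(c)\le u(S)\le u(\hat S)$ by monotonicity; and if $c$ was never processed, the algorithm must have returned earlier at some platform $j'$ with $\psi(j')\le u(\hat S)$, and $c$ follows $j'$ in $\psi$-order, so again $\psi(c)\le\psi(j')\le u(\hat S)$. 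Since $\rho(next(j_i),j_i)=\psi(next(j_i))$, and since potentials decrease along each sequence so $\phi(j)\le\phi(j_1)=\psi(j_1)$ for every platform $j$ of a state, this yields conditions~(2)(ii) and~(1). By Lemma~\ref{lem:multi-criterion}, $\hat S$ is optimal.

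Finally, for the running time I would charge the phases separately: removing the redundant platforms of a state amounts to computing the upper concave envelope of the points $(z_j,z_j\phi(j))$, which over all states costs $O(m\log m)$; computing every $\psi$ is $O(m)$; sorting the nonredundant platforms by $\psi$ is $O(m\log m)$; the main loop touches each platform once and does $O(1)$ work per platform if the numerator and denominator of $u(S)$ are maintained incrementally; and initializing these to the constants $A,B$ costs $O(n)$. The total is $O(n+m\log m)$. I expect the only real obstacle to be the combination of the swap bookkeeping with the local-to-global threshold argument of the previous paragraph --- everything else is a direct appeal to Lemma~\ref{lem:multi-criterion} and the mediant inequality.
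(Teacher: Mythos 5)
Your proof is correct and follows essentially the same route as the paper's: establish that $u(S)$ increases at every add/swap via the mediant inequality and the decreasing order of $\psi$, then verify the optimality criterion of Lemma~\ref{lem:multi-criterion} for the returned set. You supply more detail than the paper does on the final local-to-global step (the ``boundary platform'' case analysis, which the paper merely asserts), but the underlying argument is the same.
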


\subsection{The Designer Problem in a Competitive Setting}

Consider a
Designer choosing which platforms to offer when there
are already in the market available platforms from other
providers. We extend the algorithm of Section \ref{sec.manyAgents}
to this setting.
We have $k$ Agents, each with their own flower Markov chain
on the same state set (but different transition probabilities).
There is a set of existing available platforms (offered by other providers).
The Designer can build a platform for each state, and wants to select an optimal subset of platforms that maximizes the profit.

We use the notation of Section \ref{sec.manyAgents} and
the previous subsection.
We assume that the parameters $z_{ij}$ and $\phi_i(j)$ for agent $i$
and platform $j$ are quantized.
That is, we assume that  
each $z_{ij}=l_{ij} \delta$ for some integer
$l_{ij} \leq M$ and some $\delta$, with $M$ polynomially bounded, and similarly
each  $\phi_i(j)=l'_{ij} \delta'$ for some 
integer $l'_{ij} \leq M$ and some $\delta'$. 
We use essentially the algorithm of Section \ref{sec.manyAgents} 
but modify the definitions of several key concepts and quantities
for the more general setting.
In particular, we modify the definitions of $\Phi_i$,
of $Q_i(\theta_i)$, the hashing function that hashes 
subsets of platforms to pairs $(a,b)$ of tuples, 
and the notion of consistency of a slot $(a,b)$ with 
a pair $(\theta, D) \in (\Phi,{\cal D})$ (see the Appendix).
We prove analogues of the lemmas of Section \ref{sec.manyAgents},
which yield the following:

\begin{theorem}
\label{thm:multi-Designer-PDP} 
The extension of Algorithm \ref{alg:Designer-multi} 
to the competitive setting computes an optimal solution to
the Designer's problem. It runs in polynomial time for fixed number of
Agents, under the stated assumptions on the input parameters.
\end{theorem}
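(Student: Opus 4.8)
The plan is to mimic the structure of Section~\ref{sec.manyAgents}, re-running Algorithm~\ref{alg:Designer-multi} but with the combinatorial objects redefined to account for the fact that there is now a set of pre-existing platforms in the market and the Designer adds at most one new platform per state. First I would fix, for each agent $i$, the set of \emph{available} platforms (the existing ones plus the Designer's candidate ones), with their induced parameters $z_{ij}$ and $\phi_i(j)$; by Theorem~\ref{thm:agent-multi-opt}, agent $i$'s response to any offered collection is governed by the greedy criterion on the nonredundant platforms of each state, so after removing dominated platforms we obtain the concave curves $P_s$ and the parameters $\psi_i(j)$ exactly as in the single-agent multi-platform setting. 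The key conceptual point is that, once we guess for each agent $i$ a threshold $\theta_i$, agent $i$'s chosen set is determined: at each state it picks the last nonredundant platform $j$ whose $\psi_i(j)\ge\theta_i$ (among those actually present), and $\textnormal{Agent}_i$ restricted to the Designer's choices becomes a simple ``take the platforms consistent with $\theta_i$'' rule, just as $Q_i(\theta_i)$ did before.

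The second step is to redefine the hashing and consistency machinery. For each agent $i$, $\Phi_i$ now ranges over all values $\psi_i(j)$ (plus $\infty$), and $Q_i(\theta_i)$ is the set of available platforms with $\psi_i(j)\ge\theta_i$. A subset $S$ of the Designer's candidate platforms hashes to $(a,b)$ where $a_i$ and $b_i$ accumulate $z_{ij}\phi_i(j)/(\delta\delta')$ and $z_{ij}/\delta$ over the platforms of $S$ that agent $i$ actually adopts given $\theta_i$ — but now one must also fold in the \emph{fixed} contribution of the already-present platforms that agent $i$ adopts at states where the Designer builds nothing or builds a worse platform. I would handle this by absorbing the existing-platform contributions into the base constants $A_i, B_i$ (which now depend on $\theta$ and on which states the Designer touches); the per-state value coefficient $c_j(\theta,D)$ is defined analogously, summing $d_{ij}p_{ij}/((1-q_{ij}-y_{ij})D_i)$ over agents who adopt the Designer's platform $j$, minus $\textnormal{cost}_j$, and additionally accounting for the \emph{displacement} of revenue when the Designer's platform at $j$ replaces an existing one that agent $i$ would otherwise have used. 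Because the $z$'s and $\phi$'s are polynomially bounded integer multiples of $\delta,\delta'$, the numerator and denominator of each $u_i(S)$ still take polynomially many values, so the hash table has polynomial size for fixed $k$.

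The third step is to prove the two lemmas, verbatim analogues of Lemmas~\ref{lem.multi1} and~\ref{lem.multi2}: (i) if $S$ hashes to a slot consistent with $(\theta,D)$, then $\textnormal{value}_{(\theta,D)}(S)=\textnormal{profit}(S)$ — this requires checking that consistency ($\theta_i > (A_i+a_i\delta\delta')/D_i \ge \theta_i'$, with $A_i,B_i$ now including the existing-platform baseline) really does force agent $i$'s greedy response to be exactly the $\theta_i$-consistent set, which is where Theorem~\ref{thm:agent-multi-opt}'s optimality characterization (clauses on $\rho(prev(j),j)$ and $\rho(next(j),j)$) does the work; and (ii) taking $\theta_i = \min_{j\in\textnormal{Agent}_i(S^*)}\psi_i(j)$ and $D_i$ the corresponding denominator, the iteration for $(\theta,D)$ selects a set of profit $\ge\textnormal{profit}(S^*)$, since $S^*$ itself hashes to a consistent slot and the dynamic program retains a set of no smaller value in that slot. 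Optimality and the polynomial (for fixed $k$) running time then follow exactly as in Theorem~\ref{thm:multi-Agent-PDP}.

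The main obstacle is bookkeeping rather than any new idea: correctly defining $c_j(\theta,D)$, the consistency predicate, and the base constants so that the ``swap'' possibility introduced by pre-existing platforms (clause (3) of the optimality criterion and clause (2)(i)--(ii) of Lemma~\ref{lem:multi-criterion}) is faithfully reflected in the hash, and so that a Designer platform at state $s$ that is \emph{dominated} by an existing platform for some agent but not for another is still handled correctly. Making the invariant ``slot consistent with $(\theta,D)$ $\Rightarrow$ value $=$ profit'' hold in the presence of these displacement effects is the delicate point; once the definitions are pinned down, the proof is a routine transcription of Section~\ref{sec.manyAgents}.
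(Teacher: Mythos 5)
Your proposal follows the same route as the paper: guess a threshold tuple $\theta$ and denominator tuple $D$, use the nonredundant-platform/Pareto-curve machinery of the multi-platform Agent's problem to make each agent's response a deterministic function of $\theta$, hash subsets of the Designer's candidate platforms by the (quantized) numerator and denominator increments, and prove the two analogues of Lemmas~\ref{lem.multi1} and~\ref{lem.multi2}. The one point where your bookkeeping, as literally stated, would fail is exactly the point you flag as delicate: you propose absorbing the existing platforms' contributions into base constants that ``depend on $\theta$ and on which states the Designer touches.'' Constants that depend on which states the Designer touches depend on $S$ itself, which breaks the dynamic program --- the consistency test and the slot semantics must be fixed throughout a $(\theta,D)$ iteration. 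The paper's resolution is to fix, for each state $s$ and agent $i$, a \emph{default} platform $f_i(s)$ (the existing platform agent $i$ would adopt at $s$ if the Designer offers nothing there and her utility lies in $(\theta_i',\theta_i]$, or a dummy with $z=0$), to set the baselines $\hat A_i = A_i + \sum_s z_{if_i(s)}\phi_i(f_i(s))$ and $\hat B_i = B_i + \sum_s z_{if_i(s)}$ summing over \emph{all} states, and to hash the \emph{differences} $\sigma_i(j) = z_{ij}\phi_i(j) - z_{if_i(s)}\phi_i(f_i(s))$ and $\tau_i(j) = z_{ij} - z_{if_i(s)}$; displacement is then handled additively and the slot is a function of $S$ only through these increments. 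Two smaller points worth matching to the paper: $Q_i(\theta_i)$ must pin down the \emph{unique} platform per state the agent adopts (previous-segment slope $\ge \theta_i$ \emph{and} next-segment slope $\le \theta_i'$ on the modified curve $P_i'(s)$), not merely all platforms with $\psi_i(j)\ge\theta_i$; and the critical-value set $\Phi_i$ must include slopes from both the existing-only curve $P_i(s)$ and the curve $P_i'(s)$ augmented by the Designer's candidate, since which curve governs a state depends on whether the Designer builds there. With those definitions pinned down, the rest of your argument transcribes correctly.
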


The algorithm can be extended to the case where the Designer can choose between multiple possible platforms for each state.

\section{Discussion and Future Work}
\label{sec:discussion}
We believe that we have barely scratched the surface of a very important subject: the economic/mathematical/algorithmic modeling of the interactions between Designers of on-line platforms and the consumers of on-line services/producers of data.  Our model captures a few of the important aspects of this complex environment: the way adoption of services affects both the user's activities and the user's enjoyment of these activities, while it enhances the Designer's revenue in ways that depend on the time spent and activities performed on the platform; the nature of the Designer's profit (revenue from the acquisition of data pertaining to the user minus the significant development costs); the fact that multiple platforms, even by the same Designer, compete for the user's attention and use; the nature of some of the user's dilemmas (chief among them: surrender privacy for increased efficiency and/or enjoyment?).  A simplified model of these aspects (the flower chain, linearity of utilities) is a tractable bi-level optimization problem. 
However, there are many effects that our current model does not capture, which are quite interesting for future research: for instance, as a sample, we may want to model time dependencies in profits, rewards, and costs, scaling effects for the platform designers due to increasing numbers of users, synergistic effects for the agents who may adopt suites of platforms (for instance, adopting all of the Google suite of products may provide more benefit than using different providers for each service), and potential network effects involved in influencing agent behavior when there are many agents. 

We believe that intractability (both analytical and  computational) lurks in many of the possible immediate generalizations of this model --- for example, to undiscretized coefficients, to Markov chains more general than the flower, or to more complex objectives than linear (such as the addition of an entropy regularizer to the objectives of both the Agent and the Designer --- an especially tempting variant to consider in this particular problem).  We believe that more ambitious problem formulations in these directions may need to further simplify the other aspects of the model in this paper to become tractable. 

On the other hand, we also believe that any form of intractability of the Designer's problem is arguably {\em affordable.}  Our dynamic programming FPTAS would likely not generalize to more general contexts --- such as those involving complex chains, nonlinear objectives, many Designers, learning of the statistics of the Agents' parameters etc, see below --- but the alternative exhaustive algorithm, with its rather benign exponential dependency on $n$, the number of platforms, is extremely realistic in this context.  We believe that the true challenges in generalizing our results are challenges of formulation and modeling.

Superficially, platform design resembles Mechanism Design (MD) \citep{Myerson83}, but the essence of much of MD is that the Designer knows only {\em statistics} of the Agent's characteristics and designs the mechanism to optimize revenue over all possible eventualities by incentivizing the Agent to implicitly reveal their type; and this essence is missing in the PDP.  In the on-line platform environment, the subject of incentives for type revelation and truthfulness is rather clearly related to the {\em personalization} of the platform, and we believe that a generalization of our model will have to address this important issue and aspect of platform design.  

In the present first brush at platform design, we have abstracted the PDP in terms of a single Agent --- a maneuver and methodology familiar from Economics ---, and next ventured to the case of a few Agent types.  But of course the motivating environment involves myriads of atypical Agents.  
The Agent's statistics must be {\em learned with high accuracy}, and we believe that this can lead to the development of novel aspects of Machine Learning. The learning nature of platform design resembles interactive learning (the learner and the Agent whose parameters are being learned interact closely, and the learner can easily experiment with variants of the platform), and also has certain characteristics of learning from revealed preferences, see e.g.~\citet{Zadimoghaddam12}. We believe that a wealth of novel and intriguing technical problems within Learning Theory and Machine Learning lie in this direction, and can build on recent work in the intersection of these areas with Algorithmic Mechanism Design and Learning in Games \citep{Haghtalab18, Foster16}. Of course, recent cautionary results on the limitations of optimization by samples \citep{Balkanski17a, Balkanski17b} come to mind as well.

Regarding the important subject of strategic interactions between designers, we have not addressed the equilibrium problem --- beyond the best-response algorithm.  We can show (see Appendix~\ref{appendixF}) that a pure Nash equilibrium may not exist even in the flower setting, and we conjecture that finding a pure equilibrium is $\Sigma_2$-complete.
But perhaps the most interesting strategic questions go beyond the model of this paper:  How are Designers incentivized by the competition to design and deploy platforms that are more beneficial to the Agents than in the monopolistic situation?

Finally --- and almost needless to say --- the subject of platform design, as circumscribed in this paper, is crying out for treatment from the point of view of the exploding literature on {\em  ethics, fairness, and privacy} in algorithm design --- see for example \citet{Dwork12, Kleinberg01, Gemici18} among many other important works --- and exposes new aspects of today's algorithmic environment to these important considerations and emerging methodologies. The PDP defines an environment where privacy and fairness concerns are ubiquitous and paramount.  
Understanding what kinds of social, economic, regulatory, and technological interventions may result in fairer outcomes of platform design is an important direction of future work.

\section{Acknowledgements} 
\label{sec:ack} 

We thank John Tsitsklis, Eva Tardos, and Yang Cai for helpful conversations during the development of this work. K. Vodrahalli acknowledges support from an NSF Graduate Fellowship.  
\bibliographystyle{plainnat}
\bibliography{main}

\newpage 
\appendix 

\section{Proofs for Section~\ref{sec:agent}} \label{appendixA}
\subsection{Proof of the Agent's Objective in the Flower MDP}
\noindent
{\bf Lemma \ref{thm:equiv-ip}}.
The Agent's objective for an optimal policy defined in Section~\ref{sec.general_pdp} can be re-written as the following optimization in the special case of the flower MDP (Definition~\ref{def:agent-mdp}):
\begin{align}
\begin{split}
\argmax{S \subseteq [n]} \frac{ A + \sum_{j \in S} z_j\phi(j)}{B + \sum_{j \in S} z_j} 
\end{split}
\end{align}
where 
\[
A := \sum_{i = 1}^n \lambda_i \cL{i}; \quad B := 1 + \sum_{i = 1}^n \lambda_i; \quad \lambda_i = \frac{p_i}{1 - q_i}; \quad z_i = \frac{p_i}{1 - q_i - y_i} - \frac{p_i}{1 - q_i} \geq 0;
\]
\[
\phi(i) := \begin{cases} \ceff{i} &\textnormal{ if } z_i > 0 \\ 0 &\textnormal{ if } z_i = 0\end{cases};
\]
We therefore define 
\[
\textnormal{utility}^{\textnormal{Agent}}(S) := \frac{ A + \sum_{j \in S} z_j\phi(j)}{B + \sum_{j \in S} z_j} 
\]

\begin{proof}
First, define the Markov chain transition matrix as the composition of an optimal policy $\rho^*$ and $T$: 
\[
M(i, j) = \sum_{a \in \{a^0, a^1\}} \rho^*(i, a) T(i, a, j)
\]
By the ergodic theorem for irreducible finite-state Markov chains, it is well known that we can express the total reward of an irreducible average-reward MDP as
\[
R(\rho^*) = \langle \pi(\rho^*), r(\rho^*)\rangle 
\]
where $\pi(\rho^*)$ is the stationary distribution that results from playing policy $\rho^*$ and reward vector $r \in \mathbb{R}^{n + 1}$ is fixed for all time, since $\rho^*$ is fixed and the rewards only depend on state and action values (see e.g. \cite{Puterman94, Bertsekas17}).
Let us now calculate $\pi$ and $r$. By solving the balance equations for $M$, elementary algebra shows that the stationary distribution is as follows: 
Define 
\[
x_i(\rho) := \begin{cases} \frac{p_i}{1 - q_i} &\textnormal{ if } \rho(i, a^0) = 1 \\ \frac{p_i}{1 - q_i - y_i} &\textnormal{ if } \rho(i, a^1) = 1 \\ 1 &\textnormal{ if } i = 0 \end{cases}
\]
Then, the stationary distribution is given by 
\[
\pi_i(\rho) := \frac{x_i(\rho)}{\sum_{j = 0}^n x_j(\rho)}
\]
We also note that 
\[
r_i(\rho) := \begin{cases} \cL{i} &\textnormal{ if } \rho(i, a^0) = 1 \\ \cP{i} &\textnormal{ if } \rho(i, a^1) = 1 \\ 0 &\textnormal{ if } i = 0\end{cases}
\]
Translating these values into our objective $R(\rho) = \langle \pi, r\rangle$, we want to solve 
\[
\max_{\rho \in \{0, 1\}^n} \sum_{i = 1}^n \pi_i(\rho)\cdot r_i(\rho) := \max_{\rho \in \{0, 1\}^n} \sum_{i = 1}^n \frac{x_i(\rho)}{\sum_{j = 0}^n x_j(\rho)} \cdot r_i(\rho) = \max_{\rho \in \{0, 1\}^n} \frac{\sum_{i = 1}^n x_i(\rho)\cdot r_i(\rho)}{\sum_{j = 0}^n x_j(\rho)} 
\]
which we note is exactly the objective stated in the theorem.
\end{proof}

\subsection{The Greedy Algorithm when the $y_i$ can be Negative} 

In the case where $y_i$ is allowed to be negative, we need to slightly modify Algorithm~\ref{alg:agent-greedy} as well as its proof. The new algorithm is as shown below.

\begin{algorithm}
\DontPrintSemicolon 
\KwIn{Parameters of the Agent's problem: transition probabilities and utility coefficients in and out of the platform.}
\KwOut{An optimal subset $S \subseteq [n]$ of states where the Agent accepts the platform.}

Initialize $S := \{\}$\;
Divide $[n]$ into two lists $P = \{i: z_i > 0\}; N = \{j: z_j < 0\}$\;
Sort $P$ in order of $\phi(k)$ from largest to smallest\;
Sort $N$ in order of $\phi(k)$ from smallest to largest\;
\For{$k \in P$ } {
    \uIf{$\textnormal{utility}^{\textnormal{Agent}}(S) < \phi(k)$} {
    Update $S := S \cup \left\{k\right\}$\;
  }
  \Else{
    Break\;
  }
}
\For{$k \in N$ } {
    \uIf{$\textnormal{utility}^{\textnormal{Agent}}(S) > \phi(k)$} {
    Update $S := S \cup \left\{k\right\}$\;
  }
  \Else{
    Break\;
  }
}
\Return{$S$}\;
\caption{{\sc General Greedy Algorithm}} 
\label{alg:agent-greedy-all-signs}
\end{algorithm}

\begin{lemma}
\label{lemma:general_ineq_spielman} 
Suppose $s < 0$ and $y + s > 0$. Then
\[
\frac{x}{y} < \frac{r}{s} \iff \frac{r}{s} > \frac{x}{y} > \frac{x + r}{y + s}
\]
and 
\[
\frac{x}{y} > \frac{r}{s} \iff \frac{r}{s} < \frac{x}{y} < \frac{x + r}{y + s}
\]
\end{lemma}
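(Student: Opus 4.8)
The plan is to funnel all three pairwise comparisons that appear in the statement through the sign of a single quantity, $\Delta := xs - ry$, and then simply read off the two claimed chains. First observe that the hypotheses $s<0$ and $y+s>0$ together force $y > -s > 0$, so $y$ and $y+s$ are both positive while $s$ is negative; these sign facts are the only thing the argument uses.

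I would then record three elementary equivalences. (i) $\frac{x}{y} < \frac{r}{s}$: multiplying by $y>0$ preserves the direction and multiplying by $s<0$ reverses it, yielding $xs > ry$, i.e.\ $\Delta>0$. (ii) $\frac{x}{y} > \frac{x+r}{y+s}$: both denominators are positive, so cross-multiplying preserves the direction and gives $x(y+s) > (x+r)y$, which simplifies to $xs>ry$, i.e.\ $\Delta>0$. (iii) $\frac{x+r}{y+s} < \frac{r}{s}$: multiplying by $y+s>0$ preserves the direction and by $s<0$ reverses it, giving $(x+r)s > r(y+s)$, which simplifies again to $xs>ry$, i.e.\ $\Delta>0$. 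Each of (i)--(iii) holds with all of its inequalities reversed precisely when $\Delta<0$, and all three ratios coincide when $\Delta=0$.

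Now the right-hand side of the first displayed equivalence, $\frac{r}{s} > \frac{x}{y} > \frac{x+r}{y+s}$, is exactly the conjunction of (i) and (ii), each of which is equivalent to $\Delta>0$; hence it is equivalent to $\frac{x}{y}<\frac{r}{s}$, proving the first line. The second line is identical with $\Delta<0$ throughout, the case $\Delta=0$ being excluded by the strictness of the stated inequalities. (Alternatively, since $\frac{x}{y}$ is the mediant of $\frac{x+r}{y+s}$ and $\frac{r}{s}=\frac{-r}{-s}$, with the positive ``denominators'' $y+s$ and $-s$, Lemma~\ref{lemma:1} applied to this pair gives $\frac{x+r}{y+s}<\frac{r}{s} \iff \frac{x+r}{y+s}<\frac{x}{y}<\frac{r}{s}$ directly, and combining with (ii) closes it the same way.) There is no real obstacle here: the entire content is keeping the bookkeeping straight, in particular remembering that multiplying an inequality through by $s<0$ flips it while multiplying by $y$ or $y+s$ does not.
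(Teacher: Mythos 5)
Your proof is correct and follows essentially the same route as the paper's: both reduce every comparison to the sign of $xs - ry$ via cross-multiplication (the paper's one-line chain $xs > ry \iff xy + xs > xy + ry$ is exactly your equivalence of (i) and (ii)), with yours merely being more explicit in also verifying the third pairwise comparison. No gaps.
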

\begin{proof}
For the first inequality, we have that since $s < 0, y > 0$ the LHS implies 
\[
xs > ry \iff xy + xs > xy + ry 
\]
which is $\iff$ the RHS. 
The second inequality follows from the same argument reversed. 
\end{proof} 

\begin{theorem}
Algorithm~\ref{alg:agent-greedy-all-signs} returns an optimal policy when there exist $y_i < 0$. 
\end{theorem}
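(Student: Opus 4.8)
The plan is to mirror the structure of the proof of Theorem~\ref{thm:greedy-is-opt}, adapting the two-step ``prefix domination plus optimal stopping'' argument to handle both signs of $z_i$. Recall that in the positive case a policy is ``prefix'' if its states are the first $m$ in decreasing $\phi$-order; here the natural analogue is a policy $S$ whose positive part $S \cap P$ is a prefix of $P$ sorted by decreasing $\phi$ and whose negative part $S \cap N$ is a prefix of $N$ sorted by increasing $\phi$. The key algebraic tool is now Lemma~\ref{lemma:general_ineq_spielman}: for a state $k \in P$ (so $z_k > 0$) adding $k$ to $S$ moves $\textnormal{utility}^{\textnormal{Agent}}(S)$ toward $\phi(k)$ in the usual way via the Mediant Inequality (Lemma~\ref{lemma:1}), while for $k \in N$ (so $z_k < 0$, and we must invoke that the denominator stays positive, which it does since $B + \sum_{j\in S}z_j$ is a sum of reciprocals of stationary-weight-like positive quantities) adding $k$ \emph{raises} the ratio exactly when $\phi(k) < \textnormal{utility}^{\textnormal{Agent}}(S)$, by the second displayed equivalence of Lemma~\ref{lemma:general_ineq_spielman}.

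First I would establish that any optimal policy can be taken to have the ``two-sided prefix'' form. For the positive side this is essentially the Case~1/Case~2 dichotomy from the proof of Theorem~\ref{thm:greedy-is-opt} applied verbatim to missing pieces in $P$: if a missing $\ell \in P$ has $\phi(\ell) > \textnormal{utility}^{\textnormal{Agent}}(S)$ then adding it strictly improves (Lemma~\ref{lemma:1}), and if $\phi(\ell) \le \textnormal{utility}^{\textnormal{Agent}}(S)$ then deleting every $k \in S \cap P$ with $\phi(k) \le \phi(\ell)$ strictly improves, again by Lemma~\ref{lemma:1}. For the negative side the roles of ``large'' and ``small'' $\phi$ swap: a missing $\ell \in N$ with $\phi(\ell) < \textnormal{utility}^{\textnormal{Agent}}(S)$ should be added (by Lemma~\ref{lemma:general_ineq_spielman}, this strictly increases the ratio), and if $\phi(\ell) \ge \textnormal{utility}^{\textnormal{Agent}}(S)$ then every $k \in S \cap N$ with $\phi(k) \ge \phi(\ell)$ can be removed with strict improvement. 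The same ``number of missing pieces strictly decreases and no new gaps are created'' induction then shows every non-two-sided-prefix policy is dominated by one of that form.

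Second, I would verify that Algorithm~\ref{alg:agent-greedy-all-signs} outputs a best two-sided-prefix policy. The algorithm processes $P$ first, stopping at the greedy optimal prefix of $P$ by exactly the stopping-time argument in Theorem~\ref{thm:greedy-is-opt}; the only subtlety is that while scanning $P$ the current utility is monotone in the right direction (adding a $k$ with $\phi(k) > $ current utility raises it, and subsequent $\phi$ values are no larger, so once we stop, no later $P$-state can help). Then, starting from that set, the algorithm scans $N$ in increasing $\phi$-order; here too, adding $k$ with $\phi(k) < $ current utility lowers it toward $\phi(k)$ but keeps it above all remaining (larger) $\phi$ values in $N$, so once we stop no later $N$-state can help, and one must also check that stopping on the $N$-scan cannot make an already-rejected $P$-state become attractive again — this follows because processing $N$-states only moves the utility within the range already bracketed, so it never rises above the threshold at which a $P$-state was rejected. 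Thus the output is optimal among two-sided-prefix policies, hence globally optimal.

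The main obstacle, I expect, is the interleaving/ordering argument in the second step: unlike the pure-positive case, we split the greedy into two phases, and one has to argue carefully that fixing the $P$-phase first and then doing the $N$-phase is without loss of generality — i.e., that there is no benefit to revisiting $P$ after touching $N$, and symmetrically. This requires showing the utility value, as the algorithm runs, stays confined to an interval whose endpoints are determined by the $\phi$-thresholds already crossed, so that neither phase can ``reopen'' a decision made in the other. The rest is a routine re-run of the Mediant-Inequality bookkeeping from Theorem~\ref{thm:greedy-is-opt} with signs tracked via Lemma~\ref{lemma:general_ineq_spielman}.
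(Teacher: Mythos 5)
Your overall strategy is the same as the paper's (split into $P$ and $N$, run the positive-$z$ greedy first, then a second greedy over $N$ sorted by increasing $\phi$, justified by Lemma~\ref{lemma:1} and Lemma~\ref{lemma:general_ineq_spielman}), and your two-sided-prefix reduction is fine. But there is a concrete error in your description of the $N$-phase dynamics, and it sits exactly at the step you yourself flag as the main obstacle. You write that adding $k\in N$ with $\phi(k)<\textnormal{utility}^{\textnormal{Agent}}(S)$ ``lowers it toward $\phi(k)$'' and that the utility ``stays confined to an interval whose endpoints are determined by the $\phi$-thresholds already crossed.'' Both claims are false: Lemma~\ref{lemma:general_ineq_spielman} (second equivalence, with $r=\phi(k)z_k$, $s=z_k<0$, $r/s=\phi(k)$) says that when $\phi(k)<x/y$ the new ratio satisfies $\phi(k)<\frac{x}{y}<\frac{x+r}{y+s}$, i.e.\ the utility strictly \emph{increases} and leaves the bracketed interval from above. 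This is the opposite of the mediant behavior you get for $P$-states, and it is precisely why the two-phase split works: since the utility only rises during the $N$-phase, a $P$-state $\ell$ that was rejected because $\phi(\ell)\le\textnormal{utility}^{\textnormal{Agent}}(S)$ can never become attractive again (that would require the utility to \emph{drop} below $\phi(\ell)$), and starting the $N$-phase from a larger utility lets you absorb a superset of $N$-states and end strictly higher. Your ``confinement'' argument, as stated, does not rule out reopening $P$-decisions; with your (incorrect) lowering dynamics it would in fact fail.

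Two smaller points. First, the application of Lemma~\ref{lemma:general_ineq_spielman} requires $y+s>0$; you wave at this, but the paper actually proves it by writing $B+\sum_{j\in T}z_j = 1+\sum_{i\notin T}\frac{p_i}{1-q_i}+\sum_{j\in T}\frac{p_j}{1-q_j-y_j}>0$ for any $T$, and some such verification is needed. Second, the paper's final step is not a symmetric ``neither phase reopens the other'' argument but a one-directional monotonicity argument: a suboptimal choice on $P$ yields a smaller utility entering the $N$-phase, hence a (weakly) smaller set of addable $N$-states and a smaller final value, so the $P$-phase must be optimized first. With the sign of the $N$-phase update corrected, your outline assembles into the paper's proof.
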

\begin{proof}
We first sketch the main idea: adding more new states from $P$ is possible when the utility is small, and adding more new states from $N$ is possible when the utility is large, using Lemmas~\ref{lemma:1} and ~\ref{lemma:general_ineq_spielman}. Thus, to maximize utility, first maximize the utility over $P$ (allowing the utility to be as large as possible for adding states from $N$). Adding the additional states from $N$ does not mean there are additional states from $P$ to add, because now the utility is larger than when we stopped adding from $P$, and the prefix arguments from Theorem~\ref{thm:greedy-is-opt} imply we are done.

Now we elaborate on the details.
By Theorem~\ref{thm:greedy-is-opt}, after the first for loop in Algorithm~\ref{alg:agent-greedy-all-signs}, we have an optimal policy over the states in $P$. 
For the states in $N$, first note that we can apply  Lemma~\ref{lemma:general_ineq_spielman} as follows: Choose $x = A + \sum_{j \in S} z_j\phi(j)$, $y = B + \sum_{j \in S} z_j$, $r = \phi(k)z_k$, $s = z_k$.
We will always have $z_k < 0$ for $k \in N$. We will also always have $y + z_k > 0$ for $k \in N$:
We only need to show that $y > |z_k|$. Writing out the expressions from Lemma~\ref{thm:equiv-ip}, we have for any set $T \subseteq [n]$
\[
1 + \sum_{i \in [n]} \lambda_i - \sum_{j \in T} |z_j| = 1 + \sum_{i \in [n]} \frac{p_i}{1 - q_i} - \sum_{j \in T} \frac{p_j}{1 - q_j} + \sum_{j \in T} \frac{p_i}{1 - q_i - y_i} 
\]
\[
= 1 + \sum_{i \not\in T} \frac{p_i}{1 - q_i} + \sum_{j \in T} \frac{p_j}{1 - q_j - y_j} > 0
\]
Thus, the case when $y_k < 0$ (and thus $z_k < 0$) satisfies the conditions of Lemma~\ref{lemma:general_ineq_spielman}.

Now we prove optimality. 
First note that once a subset from $P$ is fixed, the optimal additional subset of states from $N$ can be determined greedily by adding the states with the smallest viable potential $\phi(k)$ first. By Lemma~\ref{lemma:general_ineq_spielman}, if $x/y > \phi(k)$, adding state $k$ increases the utility to $x'/y' > x/y$. Since the utility has increased, we are able to add all states with $\phi(k) < x/y$ and increase the utility -- not adding any of these states to increase the utility when we can results in sub-optimality (note that Algorithm~\ref{alg:agent-greedy-all-signs} indeed ensures all of these states will be added, since we start from the state with the smallest potential). This fact establishes a similar prefix property for adding states from $N$ (see the proof of Theorem~\ref{thm:greedy-is-opt}). Thus, once the greedy algorithm stops, we have an optimal policy with respect to any given fixed subset of states from $P$. 

To conclude the proof, we show that any policy not containing an optimal policy over $P$ is sub-optimal. This fact directly implies that Algorithm~\ref{alg:agent-greedy-all-signs} results in an optimal policy, since it first chooses an optimal subset of $P$, and then adds from $N$ greedily in an optimal manner, as described above. Consider that by Lemma~\ref{lemma:general_ineq_spielman}, if we choose a sub-optimal subset of $P$, $x/y$ will be smaller than if we chose an optimal subset of $P$. Since one can add more states (with larger potential) from $N$ the larger the initial $x/y$, and since to be optimal, one must add \textit{all} states with potential less than $x/y$, the overall utility gained by adding states from $N$ is at most the utility of the optimal subset of $P$ plus the optimally added states from $N$. Thus the policy is sub-optimal unless we choose an optimal subset of $P$ from the start. 
\end{proof}


Note now that Theorems~\ref{thm:FPTAS-pdp}, ~\ref{thm:hardness}, ~\ref{thm:multi-Agent-PDP}, ~\ref{thm:hard2} all go through
identically in the case where there exist states $i$ such that $y_i < 0$. 
Changing the Agent's algorithm from Algorithm~\ref{alg:agent-greedy} to Algorithm~\ref{alg:agent-greedy-all-signs} does not affect the proofs of any of the above theorems. For the FPTAS results, the Designer's algorithm only uses the Agent's algorithm to check feasibility in a black-box fashion, and the sign of $z_k$ in the objective does not affect the analysis of the theorems. 
Finally, the hardness results are unaffected since the constructed hard examples can just choose a case where all $y_i > 0$. 
 
\section{Proofs for Section~\ref{sec:designer}}
\label{appendixB}


\subsection{Proofs of Lemmas used in Theorem~\ref{thm:FPTAS-pdp}}
We first collect the proofs for the FPTAS. 

\noindent
{\bf Lemma \ref{lemma:invariance}}.
Let $S,S' \subseteq [k]$ be two sets that hash in the same bin and
suppose that $\textbf{N}(S) \leq \textbf{N}(S')$.
Then for every set $T \subseteq \{k+1, \ldots, n \}$,
if $S' \cup T$ is feasible then $S \cup T$ is also feasible,
and $\textnormal{profit}(S \cup T) \geq \textnormal{profit}(S' \cup T)- \epsilon K/n$.
\begin{proof}
Since $S, S'$ hash in the same bin, they have the same denominator
$\textbf{D}(S) = \textbf{D}(S')$. Hence also
$\textbf{D}(S \cup T) = \textbf{D}(S' \cup T)$.
Furthermore, $\textbf{N}(S) \leq \textbf{N}(S')$ implies
that $\textbf{N}(S \cup T) \leq \textbf{N}(S' \cup T)$.
Since $S' \cup T$ is feasible it follows that $S \cup T$ is also feasible.

Consider the difference $\textnormal{profit}(S' \cup T) - \textnormal{profit}(S \cup T)$.
From the formulas it follows that this difference is equal to
$\textnormal{profit}(S' ) - \textnormal{profit}(S)+
(P_1(S)-P_1(S')) \frac{\sum_{i \in T} z_i}{\textbf{D}(S)+ \sum_{i \in T} z_i}$.
Since $S, S'$ hash in the same bin,
$|\textnormal{profit}(S' ) - \textnormal{profit}(S)| \leq \epsilon K /2n$
and $|P_1(S)-P_1(S')| \leq \epsilon K /2n$.
Therefore, $|\textnormal{profit}(S' \cup T) - \textnormal{profit}(S \cup T)| \leq \epsilon K /n$.
\end{proof}


\noindent
{\bf Lemma \ref{lemma:induction}}.
For every $k =0, 1, \ldots, n$, after the $k^{th}$ iteration of the loop, there is a set $S$ in the hash table that
can be extended with elements from $\{ k+1, \ldots, n \}$ to a feasible set that has profit $\geq \textnormal{OPT}- \eps k \cdot K/n$.
\begin{proof}
By induction on $k$. The basis, $k=0$ of the claim is trivial: The hash table contains initially $\emptyset$, which can be extended to an optimal solution.

For the induction step, assume that the property holds after the
$(k-1)^{st}$ iteration for a set $S \subseteq [k-1]$ in the table, and let $T \subseteq \{ k, \ldots, n \}$ be an extension that yields a 
feasible set $S \cup T$ with profit within $\eps(k-1)\cdot K/n$ of OPT.
Suppose first that $k \notin T$. At the end of the $k^{th}$ iteration,
the hash table contains either $S$ or another set $S'$ that hashes in
the same bin and has replaced $S$, thus $S'$
has the same denominator but smaller numerator.
In the latter case, by Lemma~\ref{lemma:invariance}, $S' \cup T$ is also feasible,
and its profit is within $\epsilon K/n$ of the profit of $S \cup T$,
hence it is within $\eps k \cdot K/n$ of OPT.

The argument in the case $k \in T$ is similar. 
Since $S \cup T$ is feasible, $S \cup \{k\}$ is also feasible,
thus the algorithm will hash it and either insert it in the table
or not depending on whether there is another ``better'' set already
in the same bin. At the end of the $k^{th}$ iteration,
the hash table will contain either  the set $S \cup \{k\}$
or a set ${\hat S}$ that is at least as good
(has at least as low numerator) in the corresponding bin.
Whichever of these sets is in that bin at the end of the
iteration satisfies the property.
This is obvious for $S \cup \{k\}$, and it follows from Lemma~\ref{lemma:invariance} for ${\hat S}$:
Since $S \cup T = (S \cup \{k\}) \cup (T - \{k\})$ is feasible,
and $ {\hat S}$ hashes in the same bin as $S \cup \{k\}$ and
is at least as good, ${\hat S} \cup (T - \{k\})$ is also feasible,
it has profit within $\epsilon K/n$ of $S \cup T$, hence within $\eps k \cdot K/n$ of OPT.
\end{proof}

\subsection{Proof of Hardness}
We now give the proof that the flower PDP is NP-complete. 

\noindent
{\bf Theorem \ref{thm:hardness}}.
The PDP is NP-complete. 
\begin{proof}
We reduce from the Partition problem: Given a set of positive integers $a_1, \ldots, a_n$,
is there a partition of the numbers into two subsets that have equal sums?
We first apply the following (standard) transformation which yields an instance
of the partition problem where the numbers are comparable in value
and any solution must also have equal number of elements in each part.
Let $H = n \sum_i a_i$. Construct a new instance of the
partition problem with $2n$ elements $b_1, \ldots b_{2n}$; 
the first $n$ elements are $b_1=H+a_1, \ldots, b_n=H+a_n$,
and the other $n$ elements $b_j, j=n+1, \ldots, 2n$ are all $H$.
It is easy to see that the original instance has a solution iff the new instance does,
and furthermore, any solution of the new instance must have $n$ elements in each part.

We create now an instance of the PDP problem. The flower has $2n+1$ petals. 
The first $2n$ petals correspond to the $2n$ numbers $b_i$, and the last petal 
is the {\em special} petal.
We set the parameters as follows.
Set $\cL{i} =0$ for all $i$. 
Set $p_i = n^2(1-q_i)$ and $y_i = \frac{1}{n^2+1}(1-q_i)$ for all $i$.
Then $\lambda_i = \frac{p_i}{1-q_i}= n^2 $,
$w_i = \frac{p_i}{1-q_i-y_i} = n^2+1$,
and $z_i = \frac{p_i}{1-q_i-y_i} -\frac{p_i}{1-q_i} = 1$.
Therefore, $A= 0$ and $B= 1+ \sum_i \lambda_i = 1+n^2(2n+1)$.

We choose the platform coefficient for the special petal $s=2n+1$
so that its potential $\phi(s) = (\sum_i b_i)/2B = (2nH+ \sum_i a_i)/2B$.
Specifically, set $\cP{s} = (\sum_i b_i)/2B(n^2+1)$.
For the non-special petals $i \in [2n]$, we choose their platform coefficients so
that their potentials satisfy $\phi(i) = \phi(s) + b_i$.
For this, set $\cP{i} = ((\sum_i b_i)/2B +b_i)/(n^2+1)$.

Set $\textnormal{cost}_i=0$ for all $i$. Set $d_i = b_i$ for all $i \in [2n]$,
and for the special petal $s=2n+1$, we set $d_s = 4nH$.
This concludes the specification of the instance of PDP.

We shall show that the given instance of the Partition problem has a solution
if and only if the optimal profit is $v^*=(n^2+1)(4nH + \sum_{i=1}^{2n} b_i /2)/(B+n+1)$.
In this case, an optimal solution of the PDP instance consists of the
special petal and a solution of the partition instance.

First, suppose that the partition instance has a solution $S \subset [2n]$.
Consider the solution $S \cup {s}$ of the PDP instance.
We claim that it is feasible. The smallest potential is that
of the special petal, $\phi(s)$.
The utility of the agent for $S \cup \{s\}$ is 
$\frac{A+\sum_{i \in S \cup \{s\}} z_i \phi(i)}{B+\sum_{i \in S \cup \{s\}} z_i}$
which is equal to $\phi(s)$, since $\phi(i) = \phi(s) + b_i$ for all $i \in [2n]$
and $\sum_{i \in S} b_i = \sum_{i=1}^{2n} b_i /2$.\footnote{We assumed here for simplicity that the Agent's greedy algorithm includes a state in case of equality between the potential and the utility;
recall that from the Agent's perspective this does not make any difference. If the state is not included in case of equality, then
we have to adjust slightly the parameters to increase slightly the potential $\phi(s)$.}
An easy calculation also shows that the profit of the  solution $S \cup \{s\}$
is $(n^2+1)(4nH + (\sum_{i=1}^{2n} b_i /2))/(B+n+1)=v^*$.

Conversely, suppose that the PDP instance has a solution $S^*$ with profit
at least $v^*=(n^2+1)(4nH + \sum_{i=1}^{2n} b_i /2)/(B+n+1)$.
Then it must contain the special petal $s$, because even if we take all the
other petals, the profit is smaller.
Let $S= S^* - \{s\}$.
The profit of the solution $S^* = S \cup \{s\}$ is
$(n^2+1)(4nH + \sum_{i \in S} b_i )/(B+|S|+1)$.
If $|S| < n$, then $\sum_{i \in S} b_i \leq (n-1)H + \sum_i a_i$
and the profit is less than $v^*$.
Therefore $|S| \geq n$ and $\sum_{i \in S } b_i \geq \sum_{i=1}^{2n} b_i /2$.
Since $S^*$ is feasible, we must have $\phi(s) \geq \frac{A+\sum_{i \in S } z_i \phi(i)}{B+\sum_{i \in S } z_i}$.
Substituting the values of the parameters, this inequality yields,
$\sum_{i \in S } b_i \leq \sum_{i=1}^{2n} b_i /2$.
Therefore, for the profit to be $v^*$, we must have $|S|=n$
and $\sum_{i \in S } b_i = \sum_{i=1}^{2n} b_i /2$.
Thus, the partition instance has a solution.
\end{proof}
 
\section{Proofs for Section~\ref{sec.manyAgents}}
\label{appendixC} 
\subsection{Proofs of Lemmas used in Theorem~\ref{thm:multi-Agent-PDP}}

\noindent
{\bf Lemma \ref{lem.multi1}}.
For every pair $(\theta,D) \in (\Phi, {\cal D})$,
if a set $S$ hashes into a slot $(a,b)$ that is consistent
with $(\theta,D)$, 
then $\textnormal{value}_{(\theta,D)}(S) = \textnormal{profit}(S)$.
In particular, the set $S(\theta,D)$ selected by the algorithm (if any) 
satisfies $\textnormal{value}_{(\theta,D)}(S(\theta,D)) = \textnormal{profit} ( S(\theta,D))$.
\begin{proof}
Let $(a,b)= \textnormal{hash}(S)$.
The slot is consistent with $(\theta,D)$,
thus, $D_i = B_i + b_i \delta$, and 
$\theta_i > \frac{A_i + a_i \delta \delta'}{D_i} \geq \theta'_i$ for all $i \in [k]$.
Since $S$ hashes into slot $(a,b)$, we have
$a_i = \sum_{j \in S \cap Q_i(\theta_i)} \frac{z_{ij} \phi_{ij}}{\delta \delta'}$ and $b_i = \sum_{j \in S \cap Q_i(\theta_i)} \frac{z_{ij}}{\delta}$ for all $i \in [k]$.

For each Agent $i$, consider the operation of the greedy algorithm
when offered the set of platforms $S$.
Since $\theta_i > \frac{A_i + a_i \delta \delta'}{D_i} \geq \theta'_i$,
the greedy algorithm will select precisely all states 
$j$ of $S$ that have potential $\geq \theta_i$,
i.e. $\textnormal{Agent}_i(S) = S \cap Q_i(\theta_i)$.
Therefore, $\textnormal{profit}(S) = \textnormal{value}_{(\theta,D)}(S)$.
\end{proof}

\noindent
{\bf Lemma \ref{lem.multi2}}.
Let $S^*$ be an optimal solution to the Platform Design Problem,
and let $\theta_i = \min_{j \in \textnormal{Agent}_i(S^*)} \{ \phi_i(j)\}$, $D_i = B_i + \sum_{j \in \textnormal{Agent}_i(S^*)} z_{ij}$.
Then, in the iteration for the pair $(\theta,D)$, the algorithm
selects a set $S(\theta,D)$, and the set has
$\textnormal{profit}(S(\theta,D)) \geq \textnormal{profit}(S^*)$.
\begin{proof}
Consider the iteration of the algorithm for the pair $(\theta,D)$.
We can show by induction on $t =0,1, \ldots, n$,
that after stage $t$, the slot $\textnormal{hash}(S^* \cap [t])$
is nonempty, and the value of the set in the slot is at least
the value of $S^* \cap [t]$.

The claim is obviously true in the beginning.
Consider stage $t$. If $t \notin S^*$, then
the slot $\textnormal{hash}(S^* \cap [t-1]) =\textnormal{hash}(S^* \cap [t])$,
and the set in this slot at the end of stage $t$ is either the same
as the set after stage $t-1$, or another set with higher value;
thus the claim follows from the induction hypothesis.

Suppose $t \in S^*$, let $(a,b) = \textnormal{hash}(S^* \cap [t-1])$,
and let $S= H(a,b)$. By the induction hypothesis,
$\textnormal{value}_{(\theta,D)}(S) \geq \textnormal{value}_{(\theta,D)}(S^* \cap [t-1])$.
Then
$S \cup \{t\}$ hashes in the same slot as $S^* \cap [t]$,
and $\textnormal{value}_{(\theta,D)}(S \cup \{t\})
= \textnormal{value}_{(\theta,D)}(S) + c_t(\theta,D) \geq 
\textnormal{value}_{(\theta,D)}(S^* \cap [t-1]) + c_t(\theta,D) = \textnormal{value}_{(\theta,D)}(S^* \cap [t])$.
At the end of stage $t$, this slot has either the set $S \cup \{t\}$
or another set with a higher value. 

After stage $n$, the set $S^*$ hashes into a slot $(a,b)$
that is consistent with $(\theta,D)$ from our choice of
$\theta$ and $D$. By the claim above,
this slot has a set that has equal or larger value.
By Lemma \ref{lem.multi1}, the value is equal to the profit.
Therefore, $S(\theta,D)$ exists and
$\textnormal{profit}(S(\theta,D) \geq \textnormal{profit}(S^*)$. 
\end{proof}

\subsection{Proof of Hardness}

\noindent
{\bf Theorem \ref{thm:hard2}}.
Unless P=NP, there is no FPTAS for the Designer's problem with
two agents if the $\phi_i(j)$ are not restricted to be polynomially bounded.

\begin{proof}
We reduce from the Partition problem: Given a set of positive integers $a_1, \ldots, a_n$,
is there a partition of the numbers into two subsets that have equal sums?
As in the proof of Theorem \ref{thm:hardness},
we first transform it to an instance
of the partition problem with $2n$ numbers 
$b_1=H+a_1, \ldots, b_n=H+a_n$,
and  $b_j =H$ for $j=n+1, \ldots, 2n$, where $H = n \sum_i a_i$. 
The original instance has a solution iff the new instance does,
and furthermore, any solution of the new instance must have $n$ elements in each part.

We have two agents. The agents have a flower Markov chain
with $2n+1$ petals, where the first $2n$ petals correspond to the $2n$ numbers $b_i$, and the last petal 
is the {\em special} petal.
The parameters of the Markov chain for both agents are the same as in the proof of Theorem \ref{thm:hardness}.
That is, for $i=1,2$ we set 
$\cL{ij} =0$ for all $j$; 
$p_{ij} = n^2(1-q_{ij})$ and $y_{ij} = \frac{1}{n^2+1}(1-q_{ij})$ for all $j$.
Then $\lambda_{ij} = \frac{p_{ij}}{1-q_{ij}}= n^2 $,
$w_{ij} = \frac{p_{ij}}{1-q_{ij}-y_{ij}} = n^2+1$,
and $z_{ij} = \frac{p_{ij}}{1-q_{ij}-y_{ij}} -\frac{p_{ij}}{1-q_{ij}} = 1$.
Therefore, $A_1=A_2= 0$ and $B_1=B_2=B= 1+ \sum_j \lambda_{ij} = 1+n^2(2n+1)$.

The two agents differ in the rewards when they adopt the platform
in a state.
Agent $1$ has the same rewards as in the proof of Theorem \ref{thm:hardness}. 
Agent $2$ has rewards that are defined in a similar way from the 
numbers $b'_j = 2H-b_j$ for all $j \in [2n]$.
Thus, for the special petal $s=2n+1$, we set
$\cP{1s} = (\sum_i b_i)/2B(n^2+1)$ and 
$\cP{2s} = (\sum_i b'_i)/2B(n^2+1)$.
Therefore, its potential for the two agents is
$\phi_1(s) = (\sum_i b_i)/2B$
and $\phi_2(s) = (\sum_i b'_i)/2B$.
For the non-special petals $i \in [2n]$, we choose their platform coefficients so
that their potentials satisfy $\phi_1(j) = \phi_1(s) + b_i$
and $\phi_2(j) = \phi_2(s) + b'_i$.
For this, set $\cP{1j} = ((\sum_i b_i)/2B +b_j)/(n^2+1)$,
and $\cP{2i} = ((\sum_i b'_i)/2B +b'_j)/(n^2+1)$.
Set $\textnormal{cost}_j=0$ for all $j$. 
Set $d_{1j} = d_{2j} =1$ for all $j \in [2n]$,
and for the special petal $s=2n+1$, we set $d_{1s}=d_{2s} = 3n$.
This concludes the specification of the instance of PDP.

We shall show that if the given instance of the Partition problem has a solution then 
the optimal profit is $v^*=8n\cdot \frac{n^2+1}{B+n+1}$,
whereas if it does not have a solution
then the optimal profit is at most 
$(8n-2)\cdot \frac{n^2+1}{B+n} < (1-\frac{1}{8n})v^*$.

First, suppose that the partition instance has a solution $S \subset [2n]$.
Consider the solution $S \cup \{s\}$ of the PDP instance.
It is easy to check that both agents will adopt all the states in
$S \cup \{s\}$, as in the proof of Theorem \ref{thm:hardness}. 
The profit of the solution $S \cup \{s\}$
is $8n\cdot \frac{n^2+1}{B+n+1}$.

Conversely, suppose that the PDP instance has a solution $S^*$ with profit
greater than $(8n-2)\cdot \frac{n^2+1}{B+n}$.
Then $S^*$ must contain the special petal $s$,
and $s$ must be selected by both agents, because otherwise,
even if they take all the other petals, the profit is no more
than $7n\cdot \frac{n^2+1}{B+n+1}$.
Since $s$ has lowest potential among all the petals for both agents,
it follows that both agents select all the states in $S^*$.

Let $S= S^* - \{s\}$.
If $|S| \geq n+1$ then Agent $1$ will not select the special state because
$\phi_1(s) < \frac{A_1+ \sum_{j \in S} z_{1j} \phi_1(j)}{B_1+ \sum_{j \in S} z_{1j}} = \frac{\sum_{j \in S} (b_j + \phi_1(s))}{B_1+|S|}$
since $\sum_{j \in S} b_j \geq |S| \cdot H \geq (n+1)H$,
and $B_1 \phi_1(s) = (\sum_{i \in [2n]} b_i)/2 < (n+1)H$.
Therefore, $|S| \leq n$.
On the other hand, if $|S| \leq n-1$ then 
$\textnormal{profit}(S^*) \leq 2(4n-1)\cdot \frac{n^2+1}{B+n}$.
Therefore, $|S| =n$.

Since both agents select the special petal, we have
from Agent $1$:
$\phi_1(s) \geq \frac{A_1+ \sum_{j \in S} z_{1j} \phi_1(j)}{B_1+ \sum_{j \in S} z_{1j}} = \frac{\sum_{j \in S} b_j + n \phi_1(s)}{B+n}$,
therefore, $\phi_1(s) \geq \frac{\sum_{j \in S} b_j }{B_1}$.
Since $\phi_1(s)= \frac{\sum_{i \in [2n]} b_i}{2B_1}$,
we get $\sum_{j \in S} b_j \leq \frac{\sum_{i \in [2n]} b_i}{2}$.
Similarly, we get from Agent $2$:
$\sum_{j \in S} b'_j \leq \frac{\sum_{i \in [2n]} b'_i}{2}$.
Since $b'_j = 2H-b_j$ and $|S|=n$,  this implies that
$\sum_{j \in S} b_j \geq \frac{\sum_{i \in [2n]} b_i}{2}$.
Therefore, $\sum_{j \in S} b_j = \frac{\sum_{i \in [2n]} b_i}{2}$,
and $S$ is a solution to the Partition problem.
\end{proof}



\section{The Agent's Problem with Multiple Platforms per State} \label{sec:agent-multiplatform}

We consider now the flower MDP setting where there can be multiple available platforms for the same state. This is the case for example when there are multiple designers that offer a platform for the same state.
The agent will select for each state either one of the available platforms or no platform.

We are given a set of available platforms, where each platform
is associated with one state of the flower MDP.
For each available platform $j$, we are given the associated agent's reward and the change in the transition probabilities of the state;
these induce the corresponding parameters $z_j$ and $\phi(j)$ as
in Section \ref{sec:agent}.
The agent will select a subset $S$ of platforms that contains at most
one platform for each state; call such a set `feasible'.
The agent's utility $u(S)$ for a feasible set $S$, is given by
formula \ref{eq:agent-opt} in Section \ref{sec:agent}: 
$u(S)= \frac{A+\sum_{j \in S}z_j \phi(j) }{B + \sum_{j \in S} z_j}$.
The agent's objective is to select a feasible set $S$ that maximizes $u(S)$.

We observe first that the following straightforward extension of the greedy algorithm does not work: Sort the platforms in decreasing order of potential $\phi(j)$ and initialize $S$ to $\emptyset$. For each platform $j$ in this order, if $\phi(j)$ exceeds the utility of the current solution set $S$ and $S$ does not contain a platform for the same state, then add $j$ to $S$ else discard it.
This algorithm can produce a suboptimal solution.
For example, suppose there is one leaf state with two possible platforms,
platform 1 has $z_1=1$, $\phi(1) = 5$, platform 2 has $z_2=2, \phi(2)= 4$,
and $A=B=10$ in the objective function.
If we select platform 1, the utility is 15/11, while if we select 2 it is $18/12 > 15/11$.

Consider the following three possible changes to a feasible solution $S$:
(1) Remove a platform from $S$, (2) Add a platform to $S$ if $S$ does not contain another platform for the same state, (3) Swap a member of $S$ with another platform for the same state that is not in $S$.
We show first that if a feasible solution is locally optimal under these types of changes, i.e. cannot improved, then it is globally optimal. 

The first type of move that removes a platform $j \in S$ improves the utility
if $\phi(j) < u(S)$; the second type of move that adds $j$ improves the utility if $\phi(j) > u(S)$. 
The following lemma states when a swap increases the utility.

\begin{lemma}\label{lem:swap}
Let $S$ be any feasible set, let $j, j'$ be two platforms for the same state where $j \in S$, $j' \notin S$, and let $S' = S \cup \{j'\} \setminus \{j\}$.\\
1. If $z_j = z_{j'}$ then $u(S) < u(S')$ iff $\phi(j) < \phi(j')$.\\
2. 
Let $\rho(j,j') = \frac{z_{j'}\phi(j')-z_j \phi(j)}{z_{j'} - z_j}$.
If $z_j < z_{j'}$ then $u(S')$ lies between $\rho(j,j')$ and $u(S)$, i.e., either $\rho(j,j') < u(S') < u(S)$
or $\rho(j,j') = u(S') = u(S)$
or $\rho(j,j') > u(S') > u(S)$.
If $z_j > z_{j'}$ then $u(S)$ lies between $\rho(j,j')$ and $u(S')$.
\end{lemma}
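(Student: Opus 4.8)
The plan is to strip the platform $j$ out of both $S$ and $S'$ so that the two utilities share a common ``background'', and then recognize the remaining comparison as an instance of the Mediant Inequality (Lemma~\ref{lemma:1}). Concretely, set
\[
X := A + \sum_{k \in S \setminus \{j\}} z_k \phi(k), \qquad Y := B + \sum_{k \in S \setminus \{j\}} z_k,
\]
so that $u(S) = \frac{X + z_j \phi(j)}{Y + z_j}$ and $u(S') = \frac{X + z_{j'}\phi(j')}{Y + z_{j'}}$. Note that $Y + z_j$ and $Y + z_{j'}$ are the denominators of the Agent's utility for the feasible sets $S$ and $S'$, hence are strictly positive (this is the inequality $B + \sum_{k \in T} z_k > 0$ for any $T$, verified in Appendix~\ref{appendixA}).

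For Part~1, $z_j = z_{j'} =: z$ makes the two fractions share the positive denominator $Y + z$, so $u(S) < u(S') \iff z\phi(j) < z\phi(j')$, which is $\phi(j) < \phi(j')$ when $z > 0$; and when $z = 0$ we have $\phi(j) = \phi(j') = 0$ and $u(S) = u(S') = X/Y$, so both sides of the claimed equivalence are false. For Part~2 with $z_j < z_{j'}$, the key identity is
\[
u(S') = \frac{\bigl(X + z_j\phi(j)\bigr) + \bigl(z_{j'}\phi(j') - z_j\phi(j)\bigr)}{\bigl(Y + z_j\bigr) + \bigl(z_{j'} - z_j\bigr)},
\]
i.e. $u(S')$ is the mediant of $u(S) = \frac{X + z_j\phi(j)}{Y+z_j}$ and $\rho(j,j') = \frac{z_{j'}\phi(j')-z_j\phi(j)}{z_{j'}-z_j}$, and both ``denominators'' $Y + z_j$ and $z_{j'} - z_j$ are positive. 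Applying Lemma~\ref{lemma:1} (with the two fractions taken in whichever order puts the smaller one in the role of $x/y$) yields all three alternatives of the statement at once: the mediant $u(S')$ lies strictly between $u(S)$ and $\rho(j,j')$ when these differ, and equals both when they coincide. The case $z_j > z_{j'}$ is the mirror image: peel $z_{j'}\phi(j')$ and $z_{j'}$ off the numerator and denominator of $u(S')$ instead, so that $u(S)$ becomes the mediant of $u(S')$ and $\frac{z_j\phi(j)-z_{j'}\phi(j')}{z_j-z_{j'}} = \rho(j,j')$, with positive denominators $Y + z_{j'}$ and $z_j - z_{j'}$, and Lemma~\ref{lemma:1} again gives that $u(S)$ lies between $u(S')$ and $\rho(j,j')$.

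Every step is elementary algebra; the only things to watch are (i) noticing the mediant decomposition, which is what makes Lemma~\ref{lemma:1} directly applicable, and (ii) the bookkeeping in Part~2 of which of $u(S)$, $u(S')$ is the mediant and which of the differences $z_j - z_{j'}$, $z_{j'} - z_j$ is the positive one. I do not anticipate any substantive obstacle beyond this bookkeeping and the (already-established) positivity of the denominators.
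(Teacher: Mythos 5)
Your proof is correct and follows essentially the same route as the paper's: both peel the contribution of $j$ (resp.\ $j'$) out of the numerator and denominator to expose $u(S')$ (resp.\ $u(S)$) as the mediant of the other utility and $\rho(j,j')$, and then invoke Lemma~\ref{lemma:1}. Your write-up is slightly more explicit about the positivity of the denominators and the degenerate case $z_j=z_{j'}=0$, but there is no substantive difference.
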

\begin{proof}
$u(S)$ has the form
$\frac{{\tilde A} + z_j\phi(j)}{{\tilde B}+z_j}$
where ${\tilde A}, {\tilde B}$ include the contributions of all the members of $S\setminus\{j\}$,
and $u(S') = \frac{{\tilde A} + z_{j'}\phi(j')}{{\tilde B}+z_{j'}}$.
If $z_j= z_{j'}$ then the statement of the lemma is obvious.

Suppose that $z_j < z_{j'}$.
The utility $u(S')$ can be obtained from $u(S)$
by adding $z_{j'} \phi(j')-z_j \phi(j)$ to the numerator and
$z_{j'} -z_j$ to the denominator.
The statement follows from Lemma \ref{lemma:1}.
If $z_j > z_{j'}$ then the proof is the same: switch the roles of $j$ and $j'$ (note that $\rho(j,j')= \rho(j',j)$).
\end{proof}

\begin{lemma}\label{lem:localopt}
A feasible set $S$ is optimal if and only if 
(1) for every $j \in S$, $\phi(j) \geq u(S)$, and (2) for every 
$j' \notin S$, either $\phi(j') \leq u(S)$ or $S$ contains another
platform $j$ for the same state and swapping $j$ for $j'$ does not
increase the utility.
\end{lemma}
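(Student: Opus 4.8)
The plan is to prove Lemma~\ref{lem:localopt} in two directions. The forward direction is trivial: if $S$ is optimal, then it cannot be improved by any of the three local moves, and conditions (1) and (2) are precisely the negations of ``a removal move helps'' and ``an add/swap move helps'' as characterized above (and in Lemma~\ref{lem:swap}). So the substance is the converse: a feasible set $S$ satisfying (1) and (2) is globally optimal. I would prove this by taking an arbitrary feasible set $T$ and showing $u(T) \le u(S)$, transforming $T$ into $S$ by a sequence of local moves, none of which decreases the utility; since the moves respect the local-optimality conditions, the chain of inequalities $u(T) \le \cdots \le u(S)$ follows.

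Concretely, here is the argument I would carry out. Fix an optimal feasible set $T$ and a locally optimal $S$. Partition the states into four groups according to whether $S$ and/or $T$ places a platform there. First handle states covered by $T$ but not by $S$: for each such platform $j' \in T \setminus S$ whose state is uncovered by $S$, local optimality condition (2) gives $\phi(j') \le u(S)$, so $j'$ is ``bad'' relative to $u(S)$. Next, states covered by both but with different platforms: let $j \in S$, $j' \in T$ be the two platforms; condition (2) plus Lemma~\ref{lem:swap} says the swap $j \to j'$ does not increase $u(S)$. Finally, states covered by $S$ but not $T$: condition (1) gives $\phi(j) \ge u(S)$ for $j \in S$. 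The idea is then: start from $T$, and (a) remove every platform of $T$ that lies in a state not covered by $S$ — I must argue each such removal does not decrease utility, which needs a small monotonicity fact about the mediant form $\frac{A + \sum z_j\phi(j)}{B+\sum z_j}$ together with the fact that the removed platforms all have $\phi$-value $\le u(S) \le$ (running utility); (b) for states covered by both, swap $T$'s platform to $S$'s platform; (c) for states covered by $S$ only, add $S$'s platform. The ordering matters: I would do removals first, then swaps, then additions, and maintain the invariant that the running utility stays $\le u(S)$ throughout steps (a) and the swaps, and then show the additions in step (c) bring it up monotonically (since each added $j$ has $\phi(j) \ge u(S) \ge$ current utility by condition (1) and Lemma~\ref{lemma:1}), ending exactly at $u(S)$.

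The main obstacle is making the bookkeeping of the running utility rigorous: at an intermediate stage the utility is not $u(S)$ nor $u(T)$, so I cannot directly invoke ``$\phi(j') \le u(S)$'' to justify that removing $j'$ helps — I need ``$\phi(j') \le$ (current utility)''. This requires choosing the order of operations carefully so that the quantities stay comparable. The cleanest fix is to process removals in increasing order of $\phi$: by Lemma~\ref{lemma:1} (mediant inequality), removing the platform of smallest potential from a set whose utility exceeds that potential only raises the utility, and one shows inductively that the running utility stays $\ge$ the largest $\phi$-value still to be removed among the ``bad'' platforms, which is itself $\le u(S)$. For the swaps, Lemma~\ref{lem:swap} must be applied at the current set, not at $S$; here the subtlety is that $\rho(j,j')$ is a fixed number independent of the rest of the set, and condition (2) is stated so that the swap never increases $u(S)$ — I would need to check that the same $\rho$-vs-utility comparison, applied at the current utility value (which is $\le u(S)$), still gives a non-increasing swap, using the ``betweenness'' clause of Lemma~\ref{lem:swap} and a case analysis on whether $\rho(j,j')$ is above or below the current utility. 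The additions at the end are the easy part. Once the order and invariants are pinned down, each individual inequality is a one-line application of Lemma~\ref{lemma:1} or Lemma~\ref{lem:swap}, so the real content is organizing the transformation $T \rightsquigarrow S$ correctly.
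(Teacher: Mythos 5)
Your reduction of global optimality to the three local move types is the same high-level idea as the paper's, and your forward direction is fine. But the execution of the converse via a monotone chain of single-element moves from $T$ to $S$ has a genuine gap, and it is exactly the one you flagged: the conditions (1) and (2) are stated relative to the fixed threshold $u(S)$, while each individual move is non-decreasing only relative to the \emph{current} utility, and your proposed fix does not close this mismatch. Concretely, your removal invariant (``the running utility stays $\geq$ the largest $\phi$-value still to be removed'') has no valid base case: take $A=0$, $B=1$, a state with platform $j_1$ ($z=1$, $\phi=10$) and another state with platform $j'$ ($z=1$, $\phi=4$). Then $S=\{j_1\}$ has $u(S)=5$ and satisfies (1) and (2), but for $T=\{j'\}$ we have $u(T)=2<\phi(j')=4$, so your first removal drops the utility from $2$ to $0$ --- the chain $u(T)\le\cdots$ breaks at the very first step. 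The swap step fails the same way: if $z_j<z_{j'}$ and $\rho(j,j')\le u(S)$ (so the swap does not improve $S$), Lemma~\ref{lem:swap} applied at a current set $R$ with $u(R)<\rho(j,j')$ says that swapping $j'$ for $j$ strictly \emph{decreases} $u(R)$. So the sequence of moves need not be monotone, and the argument as organized does not deliver $u(T)\le u(S)$.

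The paper avoids the running-utility bookkeeping entirely by an aggregate, one-shot comparison: writing an optimal $S'=S\cup X'\setminus Y'$ and $S$ over the common core $A+f(S\cap S')$, $B+g(S\cap S')$ with $f(Q)=\sum_{j\in Q}z_j\phi(j)$, $g(Q)=\sum_{j\in Q}z_j$, it uses condition (1) to get $f(Y)/g(Y)\ge u(S)$, and the swap conditions (split into the cases $z_j=z_{j'}$, $z_j<z_{j'}$, $z_j>z_{j'}$, minding the sign of $g(X')-g(X)$) to bound the aggregate differences $f(X')-f(X)$ and $g(X')-g(X)$ against the single threshold $u(S)$; mediant-type inequalities then give $u(S')\le u(S)$ directly, with no intermediate sets. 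If you want to keep your move-by-move narrative, you would need to replace the per-move monotonicity claim by such an aggregate accounting (or find an ordering with a provable base case, which the example above shows cannot be ``increasing $\phi$ first'').
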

\begin{proof}
The one direction is obvious: If $S$ contains a platform $j$ with $\phi(j) <u(S)$ then removing $j$ from $S$ increases the utility.
If there is a $j' \notin S$ such that $\phi(j')> u(S)$ and $S$ does not
contain any platform for the same state, then adding $j'$ to
$S$ increases the utility. Finally if $S$ contains a platform $j$ for the
same state and swapping $j$ with $j'$ increases the utility then $S$ is not optimal.

For the other direction, suppose that $S$ satisfies the conditions of the lemma and is not optimal. Let $S' = S \cup X' \setminus Y'$
be an optimal solution where $S \cap X' = \emptyset$ and $Y' \subseteq S$.
Since $S'$ is optimal, $\phi(j') \geq u(S')$ for all $j' \in X'$.
For every $j' \in X'$, since $j' \notin S$ and $\phi(j') \geq u(S') > u(S)$,
there is another $j \in S$ for the same state 
(hence $j$ must be in $Y'$ since $S'$ is feasible) and swapping $j$ for $j'$ does not increase the utility.
Let $X = \{ j | j' \in X' \}$ and let $Y= Y' \setminus X$.
For any set $Q$ of platforms, let $f(Q) =  \sum_{j \in Q} z_{j} \phi(j)$
and $g(Q)= \sum_{j \in Q} z_{j} $.
Then $u(S)$ can be written as 
$u(S) = \frac{A + f(S\cap S') + f(X)+f(Y)}{B +g(S\cap S')+g(X)+g(Y)}$,
and $u(S') = \frac{A + f(S\cap S')+f(X')}{ B +g(S\cap S') +g(X')}$.
Since $\phi(j) \geq u(S)$ for all $j \in S$,
it follows that $\frac{f(Y)}{g(Y)} \geq u(S)$, hence
$\frac{A +f(S\cap S')+ f(X)}{B+g(S\cap S') +g(X)} \leq u(S)$.

Let $X_0 = \{j \in X | z_j = z_{j'} \}$, 
$X_1 = \{ j \in X | z_j < z_{j'} \}$
and $X_2= \{ j \in X | z_j > z_{j'} \}$.
Since swapping each $j \in X$ in $S$ 
for the corresponding $j' \in X'$ does not
increase the utility, we have $f(j') \leq f(j)$ for every $j \in X_0$,
$\frac{f(j') -f(j)}{g(j')-g(j)} \leq u(S)$ for every $j \in X_1$
and  $\frac{f(j') -f(j)}{g(j')-g(j)} \geq u(S)$ for every $j \in X_2$.
Therefore, $f(X'_0) \leq f(X_0)$ (while $g(X'_0) = g(X_0)$);
$\frac{f(X_1') -f(X_1)}{g(X_1')-g(X_1)} \leq u(S)$ (the denominator here is positive);
and $\frac{f(X_1') -f(X_1)}{g(X_1')-g(X_1)} \geq u(S)$ (the denominator here is negative).
Since $\frac{A +f(S\cap S')+ f(X)}{B+g(S\cap S') +g(X)} \leq u(S)$,
it follows that $u(S') = \frac{A +f(S\cap S')+ f(X')}{B+g(S\cap S') +g(X')}\leq u(S)$.
\end{proof}

Consider the platforms associated with the same state.
We can eliminate platforms in the problem instance
that are dominated by
other platforms and thus are not
needed to attain the optimal utility. The following lemma gives several types of dominated platforms.

\begin{lemma}\label{lem:dom}
If a platform $j$ satisfies one of the following properties,
then it can be removed from the instance without changing the optimal utility.\\
1. There is another platform $j'$ for the same state such that
$z_j \leq z_{j'}$ and $\phi(j) \leq \phi(j')$, with at least one of the inequalities strict.\\
2. There is another platform $j'$ for the same state such that
$z_j > z_{j'}$ and $z_j \phi(j) \leq z_{j'} \phi(j')$.\\
3.  There are two platforms $k, l$ for the same state such that
$z_k < z_j < z_{l}$ and $\rho(k,j) < \rho(j,l)$.
\end{lemma}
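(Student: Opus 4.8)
The plan is to show that for a platform $j$ of any of the three kinds, whenever the agent has an optimal feasible solution $S^*$ that uses $j$, there is another optimal feasible solution that avoids $j$ and uses only platforms that are not being deleted; since removing a platform cannot raise the optimal utility, this yields equality. We may assume $z_j>0$ for every platform (a platform with $z=0$ has $\phi=0$ and is inert). Fix an optimal $S^*$ with $j\in S^*$. Feasibility forces $S^*$ to contain no other platform for the state $s$ of $j$, so the comparison platform $j'$ (cases 1 and 2) and the platforms $k,l$ (case 3) all lie outside $S^*$ and are legitimate swap targets. By Lemma~\ref{lem:localopt}, optimality of $S^*$ gives $\phi(j)\ge u(S^*)$, and (again by Lemma~\ref{lem:localopt}) any swap of $j$ for another platform of $s$ produces a set of utility $\le u(S^*)$; the work is to show that one such swap in fact achieves utility exactly $u(S^*)$.

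For case 1 I would swap $j$ for $j'$ and use Lemma~\ref{lem:swap}. If $z_j=z_{j'}$ then $\phi(j)<\phi(j')$, and part~1 says the swap strictly increases the utility, contradicting optimality; hence $j$ is in no optimal solution whatsoever and may be deleted. If $z_j<z_{j'}$, the identity $\rho(j,j')-\phi(j)=\frac{z_{j'}(\phi(j')-\phi(j))}{z_{j'}-z_j}\ge 0$ gives $\rho(j,j')\ge\phi(j)\ge u(S^*)$, so part~2 of Lemma~\ref{lem:swap} forces the swapped set $S'=S^*\cup\{j'\}\setminus\{j\}$ to satisfy $u(S^*)\le u(S')\le\rho(j,j')$, whence $u(S')=u(S^*)$ and $S'$ is the desired optimal solution avoiding $j$. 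Case 2 (with $z_j>z_{j'}$, $z_j\phi(j)\le z_{j'}\phi(j')$) is handled by the same swap: now the numerator $z_{j'}\phi(j')-z_j\phi(j)$ of $\rho(j,j')$ is nonnegative while its denominator $z_{j'}-z_j$ is negative, so $\rho(j,j')\le 0$; since the agent does at least as well as with the empty set, $u(S^*)\ge u(\emptyset)\ge 0$ after the harmless normalization of shifting all agent rewards (including that of the rest state) to be nonnegative, and Lemma~\ref{lem:swap} then yields $u(S')\ge u(S^*)$, hence equality.

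For case 3 I would attempt both swaps, $j\to l$ (with $z_l>z_j$) and $j\to k$ (with $z_k<z_j$). If either produces a set of utility $u(S^*)$ we are done. Otherwise both strictly decrease the utility, and feeding this into the trichotomies of Lemma~\ref{lem:swap} gives $\rho(j,l)<u(S^*)$ from the first swap and $\rho(j,k)\ge u(S^*)$ from the second, whence $\rho(j,k)>\rho(j,l)$; since $\rho$ is symmetric in its two arguments this is exactly $\rho(k,j)>\rho(j,l)$, contradicting the hypothesis. So one of the two swaps must leave the utility unchanged, yielding an optimal solution without $j$.

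The geometric picture is what makes all of this transparent: mapping each platform of state $s$ to the point $(z,z\phi)$, cases 1 and 2 say $j$'s point is weakly dominated (to the upper left) by $j'$'s point, while case 3 says $j$'s point lies on or below the chord joining $k$'s and $l$'s points — in every case $j$ sits on or below the concave upper-left frontier of its state's points and is therefore never needed. I expect the fussiest part to be the bookkeeping of the equality and degenerate cases inside the trichotomies of Lemma~\ref{lem:swap} (for instance $\rho(j,j')=u(S^*)$, or ties among the $z$'s that turn a swap into a no-op) and pinning down precisely the sign normalization invoked in case 2; none of this is deep, but it is where a careless write-up would go wrong.
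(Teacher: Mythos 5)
Your proof follows essentially the same route as the paper's: assume an optimal solution contains $j$, perform the appropriate swap, and use Lemmas~\ref{lem:localopt} and~\ref{lem:swap} to conclude that the swap either strictly improves the utility (contradicting optimality) or preserves it, so $j$ is never needed; your identity $\rho(j,j')-\phi(j)=\frac{z_{j'}(\phi(j')-\phi(j))}{z_{j'}-z_j}$ is just a direct algebraic form of the paper's mediant-inequality step, and your case~3 is the paper's argument with an extra (harmless) equality branch. The one caveat is case~2: you rightly flag that the step from $\rho(j,j')\le 0$ to $\rho(j,j')\le u(S^*)$ needs $u(S^*)\ge 0$, but the proposed reward-shift normalization does not quite deliver it, since the hypothesis $z_j\phi(j)\le z_{j'}\phi(j')$ is not invariant under shifting the rewards when $z_j\ne z_{j'}$ (the potentials shift by $c$ but the products $z\phi$ shift by $cz$); the paper's own proof silently assumes $u(S)>0$ at exactly this point, so you are no worse off, but the normalization should be replaced by an explicit nonnegativity assumption on the agent's rewards rather than presented as without loss of generality.
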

\begin{proof}
Assume that $j$ belongs to an optimal feasible solution $S$.

1. We claim that swapping $j$ for $j'$ will produce a solution $S'$ with the same or higher utility.
If $z_j \leq z_{j'}$ and $\phi(j) < \phi(j')$ then obviously $u(S')>u(S)$.

Assume that $z_j < z_{j'}$ and $\phi(j) \leq \phi(j')$.
Since $S$ is optimal, $u(S) \leq \phi(j)$.
Note that $\phi(j) = \frac{z_{j}\phi(j)}{z_{j}}$, and
$\phi(j') = \frac{z_{j'}\phi(j')}{z_{j'}}$=
$\frac{z_{j}\phi(j)+ (z_{j'}\phi(j') - z_{j}\phi(j))}{z_{j'}-z_j}$.
Thus, by Lemma \ref{lemma:1}, $\phi(j')$ lies between
$\phi(j)$ and $\rho(j,j') = \frac{z_{j'}\phi(j') - z_{j}\phi(j)}{z_{j'}-z_j}$. Since $\phi(j) \leq \phi(j')$, we have
$u(S) \leq \phi(j) \leq \phi(j') \leq \rho(j,j')$.
Lemma \ref{lem:swap} implies then that $u(S) \leq u(S')$.
The inequality is strict, unless $u(S) = \phi(j) = \phi(j')$,
which means that all platforms in $S$ have the same potential,
equal to $u(S)$, and also the fixed constants $A, B$ satisfy $\frac{A}{B}=u(S)$.

2. If $z_j \phi(j) \leq z_{j'} \phi(j')$ then
$\rho(j',j) \leq 0$, thus $\rho(j',j) < u(S)$, and by Lemma \ref{lem:swap},  $u(S) < u(S')$.

3. Since $S$ is optimal, swapping $j$ for $k$ or $l$ does not
increase the utility. Therefore, by Lemma \ref{lem:swap},
$\rho(k,j) \geq u(S)$ and $\rho(j,l) \leq u(S)$,
hence,  $\rho(k,j) \geq \rho(j,l)$.
\end{proof}

In a preliminary step we can process separately for each state the associated platforms and eliminate those that are redundant, i.e. satisfy one of the conditions of Lemma \ref{lem:dom}.
Platforms that have the same $z$ and $\phi$ value can be identified
(they are indistinguishable as far as the agent is concerned).
Let $j_1, j_2, \ldots, j_k$ be the nonredundant platforms for a state
in decreasing order of potential. By condition 1 of Lemma \ref{lem:dom}, they increase in
$z$ value: $z_{j_1} < z_{j_2} < \ldots < z_{j_k}$.
By condition 2 they also increase in the value of $z \cdot \phi$:
$z_{j_1} \phi(j_1) < z_{j_2} \phi(j_2)  < \ldots < z_{j_k} \phi(j_k)$.
Map every platform $j$ of the state to a point 
$p_j= (z_j, z_j \phi(j))$ on the plane.
Note that the ratio $\rho(j,j')$ for two platforms $j, j'$ is the
slope of segment $(p_j,p_{j'})$. Condition 3 of Lemma \ref{lem:dom}
says that if $p_j$ is below the segment $(p_k, p_l)$ of two other points then $j$ is redundant. These conditions imply that the
nonredundant platforms correspond to the points that lie on the
convex Pareto curve of the point set, 
i.e. the upper envelope of the convex hull of the
collection of points for the platforms of the state.
This curve $p(j_1), p(j_2), \ldots, p(j_k)$ is a piecewise linear
increasing concave curve; all the slopes are positive and decreasing.
For every nonredundant platform $j$ we use $prev(j)$ to denote the
previous platform in the sequence for its state (if it exists, i.e.
$prev(j_i) = j_{i-1}$ if $i>1$), and $next(j)$ the next platform (if $j<k$).

To compute the nonredundant platforms, we first sort all the platforms in decreasing order of potential $\phi$, with ties broken by $z$ value (smallest first), and then process separately for each state its platforms in order; a simple linear scan suffices to remove the
redundant platforms.

An optimal feasible solution contains at most one platform for each state. Suppose that the optimal utility is $u^*$. 
If we know $u^*$, then we can easily construct an optimal solution.
We have the following criterion for optimality of a solution $S$
based on its utility $u(S)$.

{\bf Lemma \ref{lem:multi-criterion}}.
Let $S$ be a feasible set of nonredundant platforsm.
The set $S$ is optimal if and only if for every states $s$, either
(1) $S$ does not contain any platform for $s$ and all platforms $j$ for $s$ have potential $\phi(j) \leq u(S)$, or
(2) the platform $j \in S$ for state $s$ satisfies 
(i) $\rho(prev(j),j) \geq u(S)$ if $prev(j)$ exists, else $\phi(j) \geq u(S)$, and (ii) $\rho(next(j),j) \leq u(S)$ if $next(j)$ exists.
\begin{proof}
Suppose that $S$ satisfies the conditions of the lemma.
Use Lemma \ref{lem:localopt}. 
Consider $j \in S$ and let $s$ be its state.
Either $j$ is the first platform in the Pareto curve for $s$,
in which case $\phi(j) \geq u(S)$, or else $\rho(prev(j),j) \geq u(S)$. Since $phi(j)$ lies between $\phi(prev(j)$ and $\rho(prev(j),j)$,
and since $\phi(prev(j) > \phi(j)$, it follows that $ \phi(j) > \rho(prev(j),j) \geq u(S)$.

Let $j'$ be any other platform for the same state $s$.
If $z_{j'} > z_j$ then $\rho(j,j') \leq \rho(j,next(j) \leq u(S)$
(or $next(j)$ does not exist which means that $z_{j'} \phi(j') \leq z_{j} \phi(j)$ and $j'$ is redundant).
If $z_{j'} < z_j$ then $\rho(j',j) \geq \rho(prev(j),j) \geq u(S)$
(or $prev(j)$ does not exist and $j'$ is again redundant.
If $z_{j'} = z_j$ then $\phi(j') \leq \phi(j)$.
In all cases, swapping $j$ for $j'$ does not increase the utility.
It follows from Lemma \ref{lem:localopt} that $S$ is optimal.

For the other direction, note that
if $S$ does not satisfy condition (1) then we can increase its utility
by adding a platform $j$ for state $s$ with potential $\phi(j) > u(S)$. If $S$ does not satisfy condition (2), then we can increase
the utility by swapping $j$ for $prev(j)$ or $next(j)$, or by removing $j$ if $j$ is the first platform and $\phi(j) < u(S)$. 
\end{proof}

Of course we do not know ahead of time the optimal utility $u^*$.
We will compute $u^*$ and an optimal solution using a greedy algorithm
with a different parameter $\psi(j)$ for each nonredundant platform.
If $j$ is the first nonredundant platform in the sequence 
for its state, then
set $\psi(j) = \phi(j)$, otherwise set $\psi(j)= \rho(prev(j),j)$.
Note that $\psi(j) \leq \phi(j)$ for all $j$.
To see this for platforms $j$ other than the first one 
in the sequence for its state, observe that 
$\phi(j)$ lies between $\phi(prev(j)$ and $\rho(prev(j),j)$.
Since $\phi(prev(j) > \phi(j)$ it follows that
$\phi(prev(j) > \phi(j) > \rho(prev(j),j) =\psi(j)$.

The algorithm is given in Algorithm \ref{alg:agent-multi} of
Section \ref{sec:competitive}.
As we'll see, whenever we add a platform $j$ to the solution $S$,
if $j$ is not the first platform for its state, then $S$ contained
previously $prev(j)$ and thus we remove it.

\smallskip
\noindent
{\bf Theorem \ref{thm:agent-multi-opt}}
Algorithm~\ref{alg:agent-multi} returns an optimal feasible solution.
The algorithm runs in time $O(n+m \log m)$, where
$n$ is the number of states and $m$ is the number of platforms.
\begin{proof}
By Lemma \ref{lem:dom}, removing the redundant platforms
does not change the optimal utility.
From the definition of $\psi$, for each state $s$,
the parameters are decreasing along the Pareto curve of the state.
A simple inductive argument shows that the utility $u(S)$ is
increasing in every iteration. 
This is clear if the platform $j$ of the iteration is the
first one for its state. If $j$ is not the first state, then
the last platform for this state that was processed was $prev(j)$,
and since at that time the utility was lower (by induction hypothesis),
and since $\psi(prev(j) > \psi(j) > u(S)$, we added $prev(j)$
at that iteration. The algorithm swaps $prev(j)$ for $j$,
and thus increases the numerator of $u(S)$ by the numerator of
$\rho(prev(j),j)$ and the denominator of $u(S)$ by the denominator of
$\rho(prev(j),j)$. Since $\psi(j) = \rho(prev(j),j)> u(S)$,
the utility increases.

When the algorithm stops and returns $S$, the solution $S$ satisfies
the criterion of Lemma \ref{lem:multi-criterion}, hence it is optimal.
\end{proof}

 \label{appendixD}

\section{The Designer Problem in a Competitive Setting} \label{sec.manyplatforms}

We consider in this section the problem of a
designer choosing which platforms to offer when there
are already in the market available platforms from other
providers. We extend the algorithm of Section \ref{sec.manyAgents}
to this setting.
We have $k$ Agents, each with their own flower Markov chain
on the same state set (but different transition probabilities).
There is a set of existing available platforms (offered by other providers).
The Designer can build a platform for each state, and wants to select an optimal subset of platforms that maximizes the profit.

For every Agent $i$ and platform $j$ (both the existing and the Designer's potential platforms)
we have the corresponding parameters for the Agent's reward, and the
transition probabilities of the Markov chain; these induce 
as before corresponding values $z_{ij}$ and $\phi_i(j)$.
Each agent will adopt at most one platform for each state, to maximize
her utility.
The utility of Agent $i$ for a set $R$ of platforms is  
$u_i(R) = \frac{A_i + \sum_{j \in R} z_{ij} \phi_i(j)}{B_i + \sum_{j \in R} z_{ij}}$.
The profit function of the Designer if he offers the set $S$ of platforms is
\[
\textnormal{profit}(S) := \sum_i \frac{ \sum_{j \in \textnormal{Agent}_i(S)\cap S} d_{ij}\cdot\frac{p_{ij}}{1 - q_{ij} - y_{ij}}}{B_i + \sum_{l \in \textnormal{Agent}_i(S)} z_{il}} - \sum_{j \in S} \textnormal{cost}_j
\]
where $d_{ij}$ is the designer's reward rate if
Agent $i$ adopts platform $j \in S$, $\textnormal{cost}_j$ is the cost
of building platform $j$, and  $\textnormal{Agent}_i(S)$ is the set of platforms that Agent $i$ adopts (by any provider) if the designer offers the set $S$.

We assume again that the parameters $z_{ij}$ and $\phi_i(j)$ are quantized.
That is, we assume that  
each $z_{ij}$ is of the form $l_{ij} \delta$ for some integer
$l_{ij} \leq M$ and some $\delta$, with $M$ polynomially bounded, and similarly
each  $\phi_i(j)$ is of the form $l'_{ij} \delta'$ for some 
integer $l'_{ij} \leq M$ and some $\delta'$. 
We will show how to extend the algorithm of Section \ref{sec.manyAgents} to compute the optimal
solution for the Designer's problem in polynomial time for a fixed number of agents.

For each Agent $i$ we define a set $\Phi_i$ of critical values
as follows.  Consider a state $s$ and recall from Section \ref{sec:agent-multiplatform} the convex Pareto curve
formed by the nonredundant platforms for the state.
Denote by $P_i(s)$ the curve for the existing available platforms
for state $s$.
The set $\Phi$ includes the potential of the first
member of the Pareto curve $P_i(s)$ and the slopes of all the segments of the curve. 
If the Designer offers a platform for this state, we obtain 
a possibly modified Pareto curve $P'_i(s)$ that contains the designer's platform (if it is not redundant for agent $i$) and
possibly does not contain a subsequence of points of the previous curve. We include in the set $\Phi_i$ also the potential of the first platform and the slopes of the modified curve $P'_i(s)$ for the state.
The set $\Phi_i$ contains $\infty$
and the above sets of values for each state. 
Clearly, the size of $\Phi_i$ is linear in the number of platforms.

Let ${\cal D}_i = \{ B_i + l \delta | l \in [nM] \}$, 
${\cal N}_i = \{ A_i + l \delta \delta' | l \in [nM^2] \}$.
Note that $|{\cal D}_i|, |{\cal N}_i|$ are polynomially bounded
by our assumption. By the definitions,
for every subset $R$ of platforms adopted by agent $i$, 
the numerator of
the utility $u_i(R)$ is in ${\cal N}_i$ and the denominator is in ${\cal D}_i$. 
Let $\Phi = \Pi_{i=1}^k \Phi_i$,
${\cal D} = \Pi_{i=1}^k {\cal D}_i$, 
and ${\cal N} = \Pi_{i=1}^k {\cal N}_i$.

Recall from Section \ref{sec:agent-multiplatform} that
in the Agent's problem, if we know the optimal utility $u^*$
we can easily determine an optimal solution:
if the Pareto curve for a state $s$ is $j_1, j_2, \ldots, j_r$,
then an optimal solution contains some platform for state $s$
if $\phi_i(j_1) > u^*$, and then in particular it contains the
platform $j_l$ such that the slope of the previous segment
$(j_{l-1},j_l)$ of the curve (if $l>1$) is $> u^*$ and the slope
of the next segment $(j_{l},j_{l+1})$ (if $l<r$) is $\leq u^*$.
Clearly, we do not need to know precisely the value of $u^*$
to make this determination: it suffices to know how $u^*$
compares with the elements of $\Phi_i$.

For any $\theta_i \in \Phi_i$, let $\theta'_i$ be the next smaller value in $\Phi_i$ (if $\theta_i$ is the minimum of $\Phi_i$ then 
let $\theta'_i = -1$). 
Define $Q_i(\theta_i)$ to be the set of all platforms $j$ of the Designer such that $j$ belongs to the (new) Pareto curve $P'_i(s)$ for the state $s$ corresponding to $j$, 
the segment of the curve before $j$ has slope $\geq \theta_i$
or $j$ is the first platform of the curve and it has potential $\phi_i(j) \geq \theta_i$,  and the next segment after $j$ (if it exists) has slope $\leq \theta'_i$.
Thus, if the Designer offers platform $j$ (along with some other
subset of platforms) and the Agent's optimal utility $u^*$ satisfies
$\theta'_i \leq u^* < \theta_i$ then the optimal solution
includes platform $j$.

For any tuple $\theta \in \Phi$ and tuple $D \in {\cal D}$
and platform $j$ of the Designer,
define a corresponding value coefficient
$$c_j(\theta,D) = \sum_{i: j \in Q_i(\theta_i)}  \frac{d_{ij} p_{ij}}{(1-q_{ij}-y_{ij})D_i }  -\textnormal{cost}_j$$

The summation in the above formula includes only
those $i \in [k]$ such that $j \in Q_i(\theta_i)$.
For any subset $S$ of platforms of the Designer,
define $\textnormal{value}_{(\theta,D)}(S) = \sum_{j \in S} c_j(\theta,D)$.

The algorithm is formally the same as Algorithm 3 of Section \ref{sec.manyAgents} with the difference that
we use the above definition of the sets $Q_i(\theta_i)$
and modify also the definition of the hashing function
and the notion of consistency.
Recall that for every tuple $\theta \in \Phi$ and  $D \in {\cal D}$,
the algorithm processes the states
and associated platforms of the Designer in arbitrary order $1, \ldots, n$, and it
employs a hash table $H$ indexed by two $k$-tuples $a, b$
of integers, where $a \in ([M^2])^k$, $b \in M^k$.
Fix a tuple $\theta, D$.
For each state $s$, let $f_i(s) \in P_i(s)$ be the platform from state $s$ that Agent $i$ adopts (if any) in case that the Designer does not offer a platform for state $s$ and the Agent's optimal utility lies in
the interval $(\theta'_i,\theta_i]$; in case the agent does not adopt
any platform from $s$, we let $f_i(s)$ be a dummy platform
with zero $z$ value and potential $\phi$.
For each possible platform $j$ of the Designer, 
if $s$ is the state of the platform, we
define $\sigma_i(j) = z_{ij} \phi_i(j) - z_{i f_i(s)} \phi_i(f_i(s))$
and $\tau_i(j) = z_{ij}  - z_{i f_i(s)}.$
We define the hashing function as follows.
A subset $S$ of platforms of the Designer hashes into
the slot $(a,b)$ where 
$a_i = \sum_{j \in S \cap Q_i(\theta_i)} \frac{\sigma_i(j)}{\delta \delta'}$ and 
$b_i = \sum_{j \in S \cap Q_i(\theta_i)} \frac{\tau_i(j)}{\delta}$
for all Agents $i \in [k]$.

Define ${\hat A}_i = A_i+ \sum_s z_{i f_i(s)} \phi_i(f_i(s))$
and ${\hat B}_i = B_i+ \sum_s z_{i f_i(s)}$.
We say that a slot $(a,b)$ is {\em consistent} with
the pair $(\theta,D)$ if
$D_i = {\hat B}_i +b_i\delta$ and 
$\theta_i > \frac{{\hat A}_i + a_i \delta \delta'}{D_i} \geq \theta'_i$ for all $i \in [k]$.

We have again the following property as in Lemma \ref{lem.multi1}
of Section \ref{sec.manyAgents}.

\begin{lemma} \label{lem.multides1}
For every pair $(\theta,D) \in (\Phi, {\cal D})$,
if a set $S$ hashes into a slot $(a,b)$ that is consistent
with $(\theta,D)$, 
then $\textnormal{value}_{(\theta,D)}(S) = \textnormal{profit}(S)$.
In particular, the set $S(\theta,D)$ selected by the algorithm (if any) 
satisfies $\textnormal{value}_{(\theta,D)}(S(\theta,D)) = \textnormal{profit} ( S(\theta,D))$.
\end{lemma}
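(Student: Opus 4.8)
The plan is to follow the blueprint of the proof of Lemma~\ref{lem.multi1}, now with the Pareto curves of available platforms doing the book-keeping that the potentials $\phi_i(j)$ did before. Fix a pair $(\theta,D)\in(\Phi,{\cal D})$ and a set $S$ of the Designer's platforms that hashes into a slot $(a,b)$ consistent with $(\theta,D)$. The crux is to pin down, for each Agent $i$, the set $\textnormal{Agent}_i(S)$ that the Agent actually adopts, and to show it coincides with the set implicitly recorded by the hash: at each state $s$, the Designer's platform $j$ if $j\in S\cap Q_i(\theta_i)$, and otherwise $f_i(s)$ (the selection from the unmodified curve $P_i(s)$ at a utility between the consecutive critical values $\theta'_i<\theta_i$, where $f_i(s)$ may be the dummy ``no platform''). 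Call this candidate set $R_i$.

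First I would carry out the routine identities. By the definitions of $\sigma_i(j),\tau_i(j),{\hat A}_i,{\hat B}_i$, the numerator of $u_i(R_i)$ equals ${\hat A}_i+\sum_{j\in S\cap Q_i(\theta_i)}\sigma_i(j)={\hat A}_i+a_i\delta\delta'$ and the denominator equals ${\hat B}_i+\sum_{j\in S\cap Q_i(\theta_i)}\tau_i(j)={\hat B}_i+b_i\delta$, which by consistency is exactly $D_i$; consistency also tells us $u_i(R_i)=({\hat A}_i+a_i\delta\delta')/D_i$ lies in the interval delimited by $\theta'_i$ and $\theta_i$, i.e. $\theta'_i\le u_i(R_i)<\theta_i$. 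Then I would invoke the optimality criterion of Lemma~\ref{lem:multi-criterion} with threshold $u=u_i(R_i)$: it suffices to check, state by state, that $R_i$ admits no improving removal, addition, or swap. At a state $s$ where $R_i$ uses the Designer's platform $j\in Q_i(\theta_i)$, the inequalities defining $Q_i(\theta_i)$ ($\rho(prev(j),j)\ge\theta_i$, or $\phi_i(j)\ge\theta_i$ when $j$ is first on the modified curve $P'_i(s)$, together with $\rho(next(j),j)\le\theta'_i$) combine with $\theta'_i\le u<\theta_i$ to yield exactly condition~(2) of Lemma~\ref{lem:multi-criterion}; at a state where $R_i$ uses $f_i(s)$, the relevant part (condition~(1) if $f_i(s)$ is the dummy, condition~(2) otherwise) holds because $f_i(s)$ is, by construction, the vertex of $P_i(s)$ selected throughout the interval between $\theta'_i$ and $\theta_i$, which is well defined since $\Phi_i$ was built to contain the first potential and every segment slope of both $P_i(s)$ and $P'_i(s)$, so that no critical value lies strictly between $\theta'_i$ and $\theta_i$.

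The genuinely delicate step — and the one I expect to be the main obstacle — is showing that a Designer platform $j\in S$ with $j\notin Q_i(\theta_i)$ is indeed \emph{not} adopted by Agent $i$, so that $\textnormal{Agent}_i(S)$ really agrees with $R_i$ at that state. If $j$ is redundant for Agent $i$ this is immediate since then $P'_i(s)=P_i(s)$; otherwise $j$ is a new vertex of $P'_i(s)$ that absorbs a block of consecutive vertices of $P_i(s)$, and — writing the vertices of $P_i(s)$ in decreasing order of potential with strictly decreasing segment slopes — one argues from the monotonicity of those slopes together with their interleaving with the two new slopes $\rho(prev(j),j)$ and $\rho(j,next(j))$ on $P'_i(s)$ that whenever the Agent at utility $u$ would select a vertex lying on either side of $j$ on $P'_i(s)$, that same vertex is already the one selected on $P_i(s)$; in particular, if $f_i(s)$ were one of the absorbed vertices the Agent would in fact select $j$, so $j\in Q_i(\theta_i)$, contrary to assumption. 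This is what rules out the Agent ``sliding past'' $j$. The boundary case of a utility landing exactly on a critical value, where the Agent is indifferent between neighbouring vertices, is handled as elsewhere in the paper by an arbitrarily small perturbation of the coefficients.

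Finally, with $\textnormal{Agent}_i(S)=R_i$ established, I would substitute into the Designer's profit function: for each Agent $i$ the denominator of its term is ${\hat B}_i+b_i\delta=D_i$, and the numerator is $\sum_{j\in S\cap Q_i(\theta_i)} d_{ij}p_{ij}/(1-q_{ij}-y_{ij})$, since the Designer collects reward from a platform $j\in S$ on Agent $i$ precisely when $j$ is adopted, i.e. when $j\in Q_i(\theta_i)$; summing over the Agents, subtracting $\sum_{j\in S}\textnormal{cost}_j$, and regrouping the result by the platform $j$ yields $\sum_{j\in S}c_j(\theta,D)=\textnormal{value}_{(\theta,D)}(S)$. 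The ``in particular'' clause follows immediately, because the algorithm chooses $S(\theta,D)$ among sets occupying slots that are consistent with $(\theta,D)$, to which the argument above applies verbatim.
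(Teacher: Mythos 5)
Your proof is correct and follows essentially the same route as the paper's: define the candidate adopted set $R_i = (S\cap Q_i(\theta_i))\cup\{f_i(s)\}$, read its utility off the hash and the consistency condition, verify the local-optimality criterion of Lemma~\ref{lem:multi-criterion} to conclude $\textnormal{Agent}_i(S)=R_i$, and then substitute into the profit. The only difference is that you spell out the step the paper compresses into ``it follows that $R_i$ is locally optimal'' --- in particular the check that a Designer platform $j\in S\setminus Q_i(\theta_i)$ is not adopted, which correctly rests on $\Phi_i$ containing the slopes of both $P_i(s)$ and $P'_i(s)$.
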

\begin{proof}
Suppose that the Designer offers the set $S$ of platforms.
We claim that Agent $i$ will adopt the set  $R_i = (S \cap Q_i(\theta_i)) \cup \{ f_i(s) | \ s \notin S \cap Q_i(\theta_i) \}$.
First note that the agent's utility of this set is
$u_i(R_i) = \frac{{\hat A}_i + a_i \delta \delta'}{{\hat B}_i +b_i\delta}$,
because $S$ hashes into $(a,b)$.
Since $(a,b)$ is consistent with $(\theta,D)$,
we have $\theta'_i \leq u_i(R_i) < \theta_i$.
From the definition of  $Q_i(\theta_i)$ and $f_i(s)$, it follows
that $R_i$ is locally optimal, and hence globally optimal.

Since each Agent $i$ adopts the set $S \cap Q_i(\theta_i)$ of platforms it follows from the definitions that 
$\textnormal{value}_{(\theta,D)}(S(\theta,D)) = \textnormal{profit} ( S(\theta,D))$.
\end{proof}

It is easy to see that the analogous lemma to Lemma \ref{lem.multi2}
also holds with a similar proof.
Optimality of the algorithm follows.

{\bf Theorem \ref{thm:multi-Designer-PDP}}.
The extension of Algorithm \ref{alg:Designer-multi} 
to the competitive setting computes an optimal solution to
the Designer's problem. It runs in polynomial time for fixed number of
Agents, under the stated assumptions on the input parameters.

The algorithm can be extended to the case where the designer can choose between multiple possible platforms for each state.
 \label{appendixE}
\section{A PDP Game with No Pure Nash Equilibrium}
\label{appendixF}

What if the setting, as is often the case the modern internet economy, was more dynamic, and designers competed over platforms? In other words, each Designer is allowed to build a platform at each state; the Agent then chooses which platforms to accept. We call the simultaneous game the Designers play the PDP game.

We show that there are instances of the PDP game (even in the flower setting) where there are no pure Nash equilibria: 
\begin{lemma}
\label{lemma:no_pure_nash_instance}
There exist instances of the flower PDP game which have no pure Nash equilibria. 
\end{lemma}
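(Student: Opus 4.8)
The plan is to exhibit an explicit instance with two Designers, each capable of building one platform at each of a small number of states in a flower MDP, and to argue via a finite case analysis that no pure strategy profile can be a Nash equilibrium. The key structural idea is to recreate, inside the flower setting, the ``complementarity/cycling'' phenomenon that drives the non-existence of pure equilibria in congestion-style or Hotelling-style games: whichever platform configuration Designer~1 commits to, Designer~2 has a strictly profitable deviation, and vice versa, so best-response dynamics cycle. Concretely, I would use a flower with two or three petal states plus the rest state~$0$, set the reward rates $d_{ij}$ and build costs $\textnormal{cost}_j$ so that (i) a Designer profits handsomely from being the \emph{unique} provider whose platform the Agent adopts at a given state, (ii) when \emph{both} Designers offer a platform at the same state the Agent adopts only one of them (say, by a tie-breaking convention or by making the two platforms slightly asymmetric in the parameters $z_j,\phi(j)$, using Theorem~\ref{thm:agent-multi-opt} to predict exactly which one), so the ``loser'' pays $\textnormal{cost}_j$ for nothing, and (iii) the Agent's greedy stopping behavior (Theorem~\ref{thm:greedy-is-opt}) creates a coupling between the two states, so that the value of capturing state~$a$ depends on whether the opponent is currently capturing state~$b$.

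The main steps, in order, would be: first, fix the flower parameters $p_i,q_i$ and, for each available platform, the induced $z$ and $\phi$ values, choosing them so that the Agent's adoption decision in every relevant strategy profile is explicitly computable from the greedy algorithm and the multi-platform criterion of Lemma~\ref{lem:multi-criterion}; second, tabulate, for each of the finitely many pure profiles (there are only a constant number, since each Designer's strategy is a subset of a constant-size state set), the Agent's response and hence each Designer's profit; third, verify directly that in every profile at least one Designer has a strictly improving unilateral deviation. To make step three clean I would design the numbers so the profit matrix restricted to the ``contested'' state mimics matching pennies: if Designer~1 builds at $a$, Designer~2 prefers to abandon $a$ and grab $b$; if Designer~1 does not build at $a$, Designer~2 prefers $a$; and symmetrically for Designer~1, so that no cell is stable. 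The coupling through the Agent's time-budget (adopting a platform at $a$ reduces the stationary mass at $b$, by the substitution property emphasized in Section~\ref{sec:agent}) is what lets a single pair of states carry this structure without extra gadgetry.

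The main obstacle I anticipate is \textbf{controlling the Agent's response in the multi-platform regime simultaneously across all profiles}: because both Designers may offer platforms at the same state, the Agent faces the multi-platform problem of Section~\ref{sec:agent-multiplatform}, whose optimum is governed by the convex Pareto curve of $(z_j,z_j\phi(j))$ points and the utility-dependent swap criterion; I must choose parameters so that the resulting adoptions are \emph{robustly} predictable (no accidental ties, no regime where a third platform sneaks in) while still producing the matching-pennies payoff pattern. A secondary nuisance is bookkeeping the $-\textnormal{cost}_j$ terms and the shared denominator $B+\sum z_i$ so that the deviation inequalities are strict; I would handle this by taking the build costs small relative to the ``unique-provider'' reward but not negligible, and by perturbing the two competing platforms' parameters infinitesimally to break every tie in the Agent's favor of a designated winner. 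Once the instance is pinned down, the verification is a bounded computation, so the proof reduces to presenting the instance and the profit table; the content is entirely in the choice of constants, and the write-up in Appendix~\ref{appendixF} would give those constants explicitly and then check the constantly-many deviations.
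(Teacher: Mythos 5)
Your proposal takes essentially the same route as the paper: Appendix~\ref{appendixF} exhibits an explicit two-Designer, three-state flower instance (all $z_i=\lambda_i=1$, tiny positive costs, one contested state where the Agent strictly prefers Designer~1's platform) and verifies by finite case analysis that best responses cycle, which is exactly the unique-provider-premium plus substitution-coupling mechanism you describe. The only thing missing from your write-up is the actual choice of constants, but the structure you outline is the one the paper implements.
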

\begin{proof}
Consider an instance of the game where there are $2$ Designers and $3$ states. In the following, the superscript will denote which designer is being referred to, while the subscript will denote the state. Let the Designers' rewards be
\[
d_1^1 = 100;\quad d_2^1 = 0;\quad d_3^1 = 50
\]
\[
d_1^2 = 0;\quad d_2^2 = 100;\quad d_3^2 = 2000
\]
Suppose $\cL{i} = 0$ for all $i \in [3]$ for the Agent. Let the Agent's rewards be 
\[
{\cP{1}}^1 = 50;\quad {\cP{2}}^2 = 50;\quad {\cP{3}}^1 = 2000;\quad {\cP{3}}^2 = 1000
\]
and for the rest $0$. 
Suppose that all $z_i, \lambda_i = 1$ always and all costs are $\eps = 0.001$. Then $A = 0, B = 3 + 1 = 4$. 

We now show there can be no pure Nash equilibrium by showing that 
for any pair of strategies for Designers $1$ and $2$, one of them always wants to deviate. We only need consider disjoint strategies (where the Designers never build at the same state -- this is always clearly sub-optimal for the Designer whose platform the Agent does not choose, since costs are positive).

We will denote strategies as tuples with Designer $1$ being the first entry. First note that both Designers should always build a platform somewhere (the cost is small enough for building a platform compared to the profit, and there is always at least one state for each Designer that they always win (state $1$ for Designer $1$, and state $2$ for Designer $2$)).
Now we note that strategy $(\{1\}, \{2\})$ dominates all strategies where Designer $2$ does not build at state $3$ (Designer $1$ always does better building only at state $1$). We also note that strategy $(\{1, 3\}, \{2\})$ dominates all strategies where Designer $2$ does build at state $3$ (for Designer $2$, since Designer $2$ never wins state $3$ and unnecessarily pays a positive cost), and $(\{1\}, \{3\})$ dominates $(\{1\}, \{2\})$ (for Designer $2$, since Designer $2$ always prefers to build at state $3$ over state $2$ if Designer $1$ does not build there). Thus from any choice of strategy pair, we arrive at the following cycle between strategies as the Designers constantly change their minds: 
\[
(\{1\}, \{2\}) \to (\{1\}, \{3\}) \to (\{1, 3\}, \{2\}) \to \cdots
\]
One can check this by simply noting that Designer $2$ always wants to build at state $3$ to improve their profit, but this has the ill effect of stopping the Agent from going to state $1$, which means Designer $1$ builds at state $3$. The Agent always picks Designer $1$ over Designer $2$, so Designer $2$ builds only at state $2$ again, but this means that Designer $1$ would prefer to only build at state $1$ since they get more profit from state $1$. And so the cycle continues, and there is no pure Nash equilibrium. 
\end{proof}

\end{document}